\newcommand*{\NP}[0]{\textrm{NP}}
\newcommand*{\PSPACE}[0]{\textrm{PSPACE}}
\newcommand*{\NPSPACE}[0]{\textrm{NPSPACE}}
\title{Pushing Blocks without Fixed Walls via Checkable Gizmos: \\ Push-1 is \PSPACE-Complete}
\author{%
  MIT Hardness Group%
    \thanks{Artificial first author to highlight that the other authors (in
      alphabetical order) worked as an equal group. Please include all
      authors (including this one) in your bibliography, and refer to the
      authors as “MIT Hardness Group” (without “et al.”).}
\and
  Josh Brunner%
    \thanks{MIT Computer Science and Artificial Intelligence Laboratory,
      32 Vassar St., Cambridge, MA 02139, USA, \protect\url{{brunnerj,lkdc,edemaine,diomidova,della,jaysonl}@mit.edu}}
\and
  Lily Chung\footnotemark[2]
\and
  Erik D. Demaine\footnotemark[2]
\andlinebreak
  Jenny Diomidova\footnotemark[2]
\and
  Della Hendrickson\footnotemark[2]
\and
  Jayson Lynch\footnotemark[2]
}
\date{}
\newif\ifabstract
\newif\iffull
\makeatletter \hypersetup{pdftitle={\@title}}}
 \gdef\xxxmark{%
   \expandafter\ifx\csname @mpargs\endcsname\relax %
     \expandafter\ifx\csname @captype\endcsname\relax %
       \marginpar{xxx}%
     \else
       xxx %
     \fi
   \else
     xxx %
   \fi}
 \gdef\xxx{\@ifnextchar[\xxx@lab\xxx@nolab}
 \long\gdef\xxx@lab[#1]#2{\textbf{[\xxxmark #2 ---{\sc #1}]}}
 \long\gdef\xxx@nolab#1{\textbf{[\xxxmark #1]}}
\gdef\fps@figure{!htbp}}
\def\andlinebreak{\end{tabular}\linebreak\begin{tabular}[t]{c}}
\let\realbfseries=\bfseries
\def\bfseries{\realbfseries\boldmath}
\newtheorem{theorem}{Theorem}[section]
\newtheorem{lemma}[theorem]{Lemma}
\newtheorem{corollary}[theorem]{Corollary}
\theoremstyle{definition}
\newtheorem{definition}{Definition}[section]
\let\epsilon=\varepsilon
\def\defn#1{\textbf{\textit{\boldmath #1}}}
\newcommand{\pss}[2]{{#1}^{#2}}
\newcommand{\pssl}[3]{{#1}^{#2}|_{#3}}
\newcommand{\cIn}[0]{c_\text{in}}
\newcommand{\cOut}[0]{c_\text{out}}
\newcommand{\cIni}[1]{c_{\text{in},#1}}
\newcommand{\cOuti}[1]{c_{\text{out},#1}}
\newcommand{\gadget}[1]{\ensuremath{\mathsf{#1}}}
\newcommand{\MSC}[0]{\gadget{MSC}}
\newcommand{\SO}[0]{\gadget{SO}}
\newcommand{\SC}[0]{\gadget{SC}}
\newcommand{\SD}[0]{\gadget{SD}}
\newcommand{\SX}[0]{\gadget{SX}}
\newcommand{\WCX}[0]{\gadget{WCX}}
\newcommand{\SCX}[0]{\gadget{SCX}}
\newcommand{\locs}[1]{L({#1})}
\begin{document}
\maketitle

\begin{abstract}
  We prove \PSPACE-completeness of Push-1: given a rectangular grid of
  $1 \times 1$ cells, each possibly occupied by a movable block,
  can a robot move from one specified location to another,
  given the ability to push up to one block at a time?
  In particular, we remove the need for fixed (immovable) walls
  from a 2022 result.
  This fundamental model of block pushing, introduced in 1999,
  abstracts the mechanics of many video games.
  It was shown \NP-hard in 2000, but its final complexity
  remained open for 25 years.
  Our result uses a new framework for checkable gadgets/gizmos,
  extending a prior framework for checkable gadgets
  to handle reconfiguration problems,
  at the cost of requiring a stronger auxiliary gadget.
  We also introduce a new connection between the motion-planning-through-gadgets framework
  (with an agent) and the Graph Orientation Reconfiguration Problem (with no agent),
  including Nondeterministic Constraint Logic.
\end{abstract}

\section{Introduction}

Countless video games feature \emph{block-pushing puzzles}, where the player
pushes blocks around to achieve some goal.
TV Tropes \cite{tv-tropes} lists over 75 video games and game series
that feature block puzzles,
from famous game series such as
The Legend of Zelda, Pok\'emon, Paper Mario, and Tomb Raider;
to puzzle games such as
Sokoban, Chip's Challenge, Kwirk, Adventures of Lolo, Portal, and Baba Is You;
as well as first-person shooters such as Half-Life,
roguelike games such as NetHack, and
survival horror games such as Resident Evil 2.

\textbf{Push and friends.}
In theoretical computer science, a series of papers over the past 25 years
\cite{ORourke99,Push100,hoffmann-2000-pushstar,PushXCCCG2001,demainepush2F,demaine2003pushing,demaine2004pushpush,PRB16,ani2020pspace,GadgetsChecked_FUN2022}
formalized various pushing-block mechanics
into a family of models collectively called ``Push''.
In all cases, a puzzle consists of an $m \times n$ grid of cells,
where each cell is either empty or contains a $1 \times 1$ movable block,
and a single $1 \times 1$ player/robot/agent moves around the empty cells
via orthogonal (horizontal or vertical) moves.
In \defn{Push-$1$}, the player can walk into a cell containing a block,
provided there is an empty cell on the other side of the block,
in which case both the player and block move one cell in the same direction.
In \defn{Push-$k$}, the player can similarly push up to $k$ blocks at a time,
while in \defn{Push-$*$}, the player can push any number of blocks at a time,
again provided there is an empty cell on the other end of the row of blocks.
In the \defn{PushPush} variation, pushed blocks slide in the direction they are pushed
until they hit another block or wall.
In the \defn{Push-F} variation, some cells are occupied by fixed (immovable) walls;
and in \defn{Push-W}, some edges between cells are also fixed walls;
neither the player nor blocks can move into or through such walls.
The default goal is for the player to get from a specified start location
to a specified goal location, while in \defn{Push-S} variants, the goal
is instead to place the set of $b$ blocks on a specified set of $b$
\emph{storage} locations.
In particular, the famous puzzle video game \defn{Sokoban}
is equivalent to Push-1FS (strength $1$, fixed walls, and storage goal).

Figure~\ref{fig:examples} shows some examples.
In particular, Figure~\ref{fig:successful example}
shows a sequence of moves and three pushes in Push-1
that successfully traverses from the left entrance to the right entrance,
while Figure~\ref{fig:failed example} shows a subsequent sequence of moves
and two pushes (of one block) that fails to traverse from the right entrance
to the left entrance.
In Push-1, the player can never push a block that is contained
in a $2 \times 2$ square of blocks (as observed in \cite{Push100}).
Such blocks are effectively fixed,
making the Push-1 puzzle in Figure~\ref{fig:Push-1 example}
(where in principle all blocks are movable)
equivalent to the Push-1F puzzle in Figure~\ref{fig:Push-1F example}
(where bricked squares are defined to be immovable).

\begin{figure}[t]
  \centering
  \subcaptionbox{\label{fig:Push-1F example} Push-1F puzzle.
  }{\includegraphics[scale=0.6]{figures/svgtiler/no-return}}
  \hfil
  \subcaptionbox{\label{fig:Push-1 example} Equivalent Push-1 puzzle}{\includegraphics[scale=0.6]{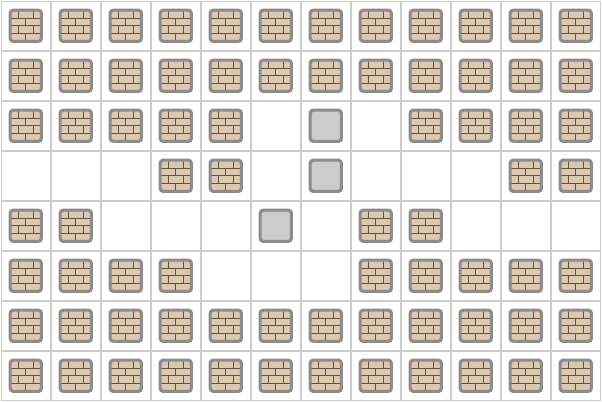}}

  \smallskip
  \subcaptionbox{\label{fig:successful example} Successful forward traversal}{
    \includegraphics[scale=0.3]{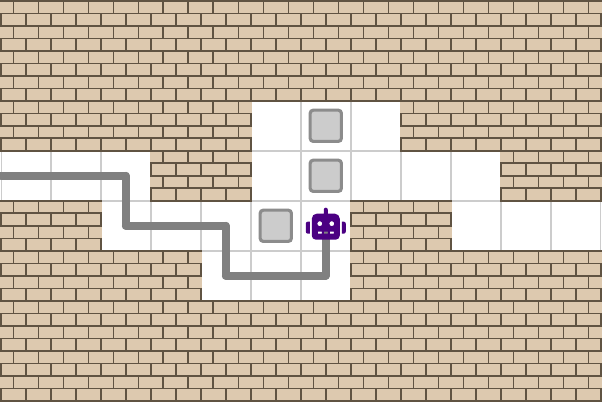}~
    \includegraphics[scale=0.3]{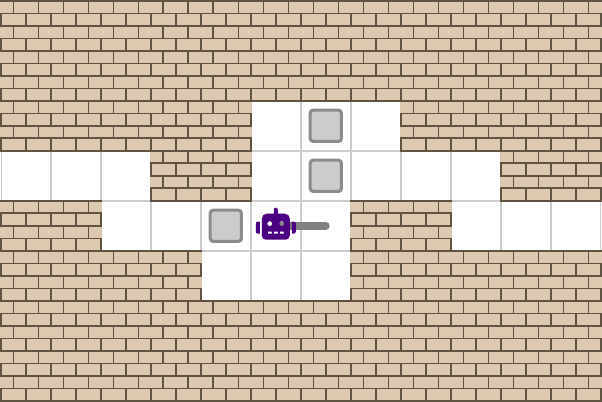}~
    \includegraphics[scale=0.3]{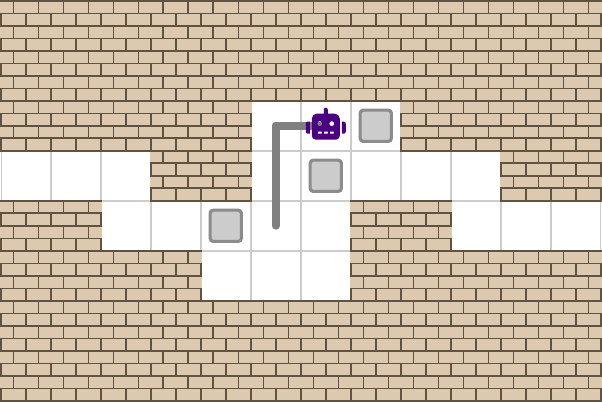}~
    \includegraphics[scale=0.3]{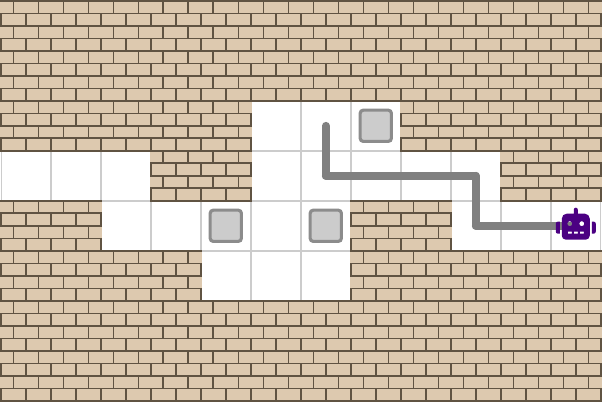}
  }\hfill
  \subcaptionbox{\label{fig:failed example} Attempted backward traversal}
    {\includegraphics[scale=0.3]{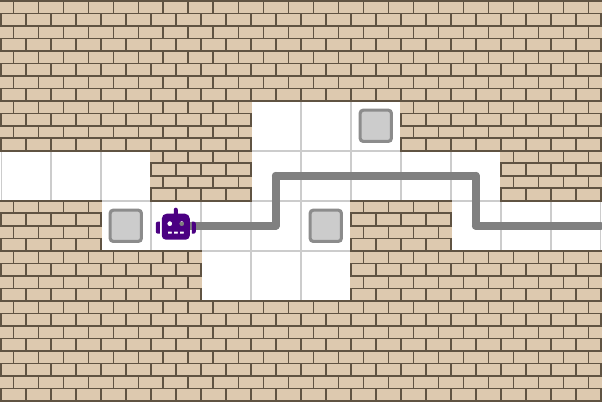}}
  \caption{Example Push-1(F) puzzles: ``no-return'' gadget.
    Bricked squares in (a) are defined to be fixed walls in Push-1F, while
    blocks with brick texture in (b) are effectively fixed in Push-1
    because they are contained in a $2 \times 2$ square of blocks.
    In (c--d) and future figures, to reduce visual clutter,
    we use the brick squares of (a) to draw such effectively fixed blocks
    in Push-1 puzzles, which act like the corresponding Push-1F puzzles.}
  \label{fig:examples}
\end{figure}

We might therefore expect Push-1 and Push-1F to be amenable to similar
complexity analyses.  Indeed, the proofs of \NP-hardness
\cite{Push100,demaine2003pushing} apply equally well to Push-1
and Push-1F, as well as other variants such as PushPush-1 and PushPush-1F.
But \PSPACE-hardness remained open for over two decades.
Along the way, Push-2F was proved \PSPACE-complete \cite{demainepush2F},
as were other variants like PushPush-$k$ \cite{demaine2004pushpush}
and various forms of \emph{pulling} blocks \cite{PRB16,ani2020pspace}.
Finally, a few years ago, Push-1F was proved \PSPACE-complete
\cite{GadgetsChecked_FUN2022}.

Sadly, the Push-1F \PSPACE-hardness proof does not apply to Push-1,
and it seems difficult to adapt the gadgets.
For example, Figure~\ref{fig:Push-1F precursor} shows a central gadget
in the proof.
The functionality of this gadget critically relies on every horizontal push
of a block (such as each push in Figure~\ref{fig:Push-1F precursor steps})
changing which of two incident column paths is blocked.
Given the number of columns that must be tightly packed in a 2D environment
(forcing a rough alternation of up/down directions),
this gadget cannot be directly adapted to Push-1 by thickening
the fixed walls to width~$2$.

\begin{figure}
  \centering
  \subcaptionbox{Gadget}{\includegraphics{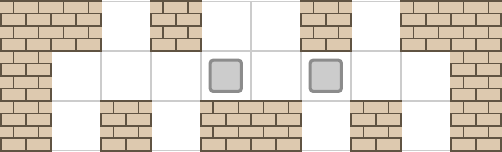}}%
  \hfil
  \subcaptionbox{\label{fig:Push-1F precursor steps}Two intended traversals}{%
    \begin{tabular}{c}
      \includegraphics[scale=0.5]{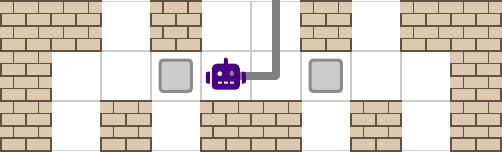}
      \\[\smallskipamount]
      \includegraphics[scale=0.5]{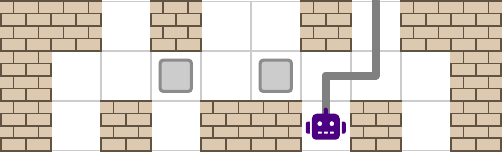}%
      \\[-4pt]
    \end{tabular}%
  }
  \caption{``Checkable proto-precursor'' gadget from the
    Push-1F \PSPACE-hardness proof \cite{GadgetsChecked_FUN2022}.}
  \label{fig:Push-1F precursor}
\end{figure}

In this paper, we prove Push-1 \PSPACE-hard using
a completely different reduction.
The Push-1F \PSPACE-hardness proof \cite{GadgetsChecked_FUN2022}
reduced from motion planning through doors \cite{Doors_FUN2020}
and was inspired by gadgets from a 1998 Sokoban \PSPACE-hardness proof \cite{sokoban}.
Instead, we were inspired by gadgets from a simplified 2005 Sokoban \PSPACE-hardness proof \cite{NCL_TCS,hearn2009games},
which was one of the first applications of Nondeterministic Constraint Logic (NCL) \cite{NCL_TCS,hearn2009games}.
NCL was also previously applied to analyze block-pulling puzzles \cite{ani2020pspace}.
We prove \PSPACE-hardness of Push-1 via a reduction from NCL.

\textbf{Checkable gadgets and gizmos.}
Like the Push-1F \PSPACE-hardness proof \cite{GadgetsChecked_FUN2022},
we use the idea of \defn{checkable gadgets}:
gadgets that can ``break'' from unintended traversals by the player,
and the broken states can later be ruled out by requiring specified ``checking traversals'' to occur at the end of the puzzle.
The checkable gadget framework of \cite{GadgetsChecked_FUN2022} showed how to use a few simple auxiliary gadgets to force these checking traversals to be performed at the end of the puzzle.
This framework effectively allowed the reduction to assume that all gadgets remain unbroken,
reducing each checkable gadget to the subgadget of unbreakable states, in a process called \defn{postselection}.

The fundamental difference between Push-1(F) and Sokoban is that
Push is a \defn{reachability} problem, where the goal is for the player to
reach a specified location, while Sokoban is a \defn{reconfiguration}
problem, where the goal is to reach a particular configuration of the
(unlabeled) blocks.
While the 1998 Sokoban \PSPACE-hardness proof \cite{sokoban} essentially
only used the storage locations to forbid the blocks from reaching certain
broken states (similar to the effect achieved by checkable gadgets \cite{GadgetsChecked_FUN2022}),
the 2005 Sokoban \PSPACE-hardness proof using NCL \cite{NCL_TCS,hearn2009games}
relies more fundamentally on the reconfiguration nature of the storage goal.
In particular,
every location not occupied by a fixed block is easy to reach:
the empty locations remain connected, and
most movable blocks have opposing neighboring empty squares.

Thus we develop a new framework for checkable gadgets, or more precisely,
\defn{checkable gizmos},
which models both reachability and reconfiguration problems.
This framework builds on the \defn{gizmo} framework of
\cite{hendrickson2021gadgets},
originally designed to formalize simulation in the gadget framework,
but which also has the benefit of modeling ``accepting states''
and thus a particular kind of reconfiguration.
By contrast, the checkable gadgets framework of \cite{GadgetsChecked_FUN2022}
required that gadgets stay broken once they break,
preventing a transition from an accepting to nonaccepting state.
Our checkable-gizmos framework allows us to work mainly
with the reconfiguration problem,
and then use a general-purpose reduction from reconfiguration to reachability;
see Section~\ref{sec:checkable-gadgets} for details.
This approach also makes it potentially easier to design checking sequences
for gadgets,
because we can design the checking sequences dependent on the specific
unbroken configuration instead of needing one sequence that works for
all unbroken configurations.

Our checkable gizmos framework obtains stronger results than
the checkable gadgets framework of \cite{GadgetsChecked_FUN2022}.
But it also has a more stringent requirement on which auxiliary gadgets
you need to be able to build for the framework to apply.
Our new framework requires building a gadget called ``single-use closing'' (\SC),
where one path is freely traversable until the player traverses a second disjoint path,
at which point both paths permanently close.
By contrast, the old framework allowed for a weaker form of this gadget called
merged single-use closing (\MSC), where the two paths shared an exit.
Thus both frameworks still have value, depending on which gadgets you can build.
(In fact, when we wrote \cite{GadgetsChecked_FUN2022}, we had not yet built
an \SC{} gadget in Push-1(F), which is why we developed the theory
to allow for an \MSC{} gadget.)
To more clearly distinguish the two checkable frameworks,
we call the old framework \defn{leaky} and the new framework \defn{strict}:
with an \MSC{} gadget, the framework can build a ``weak closing crossover'' that
has limited leakage between two crossing paths, whereas an \SC{} allows it to
build a ``strong closing crossover'' with no such leakage,
enabling stronger guarantees on where the player can go in the checking phase.
Section~\ref{sec:leaky} gives a more detailed comparison.

\textbf{Modeling NCL with gadgets/gizmos.}
To apply this checkable gizmos framework to a reduction from NCL,
we need to be able to represent NCL using gadgets/gizmos.
The challenge is to relate the agentless problem of NCL (where any edge can be flipped ``from outside'')
to the agent-mediated mechanism of gadgets (where the agent can only interact with the gadget at its current location).
We define gizmos that correspond to NCL AND/OR vertices and edges.
More generally, we define a correspondence for any \defn{Graph Orientation
Reconfiguration Problem (GORP)},%
\footnote{Not to be confused with Granola, Oats, Raisins, Peanuts; see~\cite{brown2020gorp}.}
where the goal is to reconfigure a graph from one orientation
into another orientation via a sequence of edge reversals,
subject to certain constraints at vertices.
Each type of vertex specifies which subsets of incident edges, when incoming,
satisfy the vertex; this family must be closed under supersets,
meaning that directing more edges inward can only improve satisfaction,
a property we call \defn{upward-closed}
(from order-theory terminology).
For example, an NCL OR vertex requires that at least one incident edge is
incoming, and an NCL AND vertex requires that either a particular edge is incoming or
two other edges are incoming; both of these properties are upward-closed.
We also define a \defn{GORP crossover} gizmo, which lets the player
reach all vertices in the construction.
See Section~\ref{sec:GORP} for details.

With this technology in place, our Push-1 \PSPACE-hardness proof
``only'' needs to build the following:
\begin{enumerate}
\item GORP gizmos corresponding to NCL AND and OR vertices;
\item a GORP crossover gizmo; and
\item the few simple auxiliary gadgets for the checkable gizmos framework.
  Most of these are the same as in \cite{GadgetsChecked_FUN2022},
  except for replacing \MSC{} with \SC{}.
\end{enumerate}
Section~\ref{sec:Push-1} develops these Push-1 gadgets
and then applies the framework.
A good way to get intuition for the entire proof
is to start by looking at these gadgets,
specifically the GORP gizmos
which intuitively behave like NCL AND and OR vertices,
and then read the other sections to understand exactly
what properties they need to have to make everything work.

Very recently, and independently of our work, DeStefano and Liang
\cite{DeStefano-Liang-2025} also proved Push-1 \PSPACE-complete,
using checkable gadgets and a reduction from motion planning through
(self-closing) doors, closer to the Push-1F proof
\cite{GadgetsChecked_FUN2022}, but introducing the idea of state in the agent.

\section{Gadget Framework}

In this section, we review the relevant parts of the motion-planning-through-gadgets framework \cite{Toggles_FUN2018, demaine2020toward}.
We will write formal definitions in the language of \emph{gizmos},
a useful abstraction of gadgets introduced in \cite{hendrickson2021gadgets}.
This will be helpful later when discussing the particular details of checkable gadgets.
For now, we informally review the broad ideas of the gadget framework.
Unless stated otherwise, all reductions in this paper are polynomial-time many-one reductions.

The idea of the gadget framework is to represent a motion-planning problem as a network of ``gadgets''
through which an agent can move.
Each gadget has several external \emph{locations},
and edges of the network connect gadgets to each other by linking their locations.
The agent can freely travel along these network edges,
but the agent's ability to traverse the gadgets themselves is restricted to traversals allowed by the gadget.
In addition, making such gadget traversals may change the state of the gadget,
so that certain traversals within a gadget may be available or unavailable
depending on which traversals have already been made on that gadget.
A gadget can be specified by a \defn{state diagram}, which details the traversals allowed in each of the possible states of the gadget, annotated with the state that results from each traversal.
An example of a gadget is the \defn{locking 2-toggle},
whose state diagram is shown in Figure~\ref{fig:state-diagram}.

\begin{figure}[htp]
	\centering
	\includegraphics[scale=0.75]{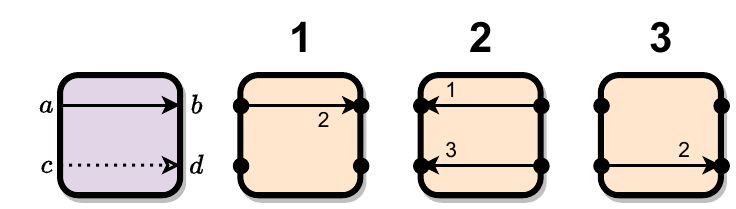}
	\caption{Icon (left) and state diagram (right three frames) of a locking 2-toggle. This gadget has four locations $a, b, c, d$ and three states (numbered at top).  Arrows in the state diagram indicate allowed traversals, labeled with the resulting state: for instance, the arrow in frame 1 indicates that in state 1, the traversal $a \to b$ is allowed and transitions the gadget to state 2.}
	\label{fig:state-diagram}
\end{figure}

We are mainly interested in two types of problems:
\emph{reachability}, which asks just whether the agent can travel through the network from a start location \(s\) to a target location \(t\);
and \emph{targeted reconfiguration}, which asks whether the agent can do the same while also leaving the gadgets in certain specified states.  A goal of the gadget framework is to characterize the computational complexity of these problems on various types of gadgets.  For instance, recalling the previous example, it has been shown that both the reachability and targeted reconfiguration problems on locking 2-toggles are \PSPACE-complete \cite{demaine2020toward,GadgetsVictory2023}.

Such results are useful starting points for proving computational hardness of various types of games and puzzles.
The idea is to construct various gadgets in the game or puzzle under consideration, and then to combine these gadgets into a network that ``simulates'' a known computationally hard gadget, thus proving the original game or puzzle hard as well.
Often, constructing these networks in the original game or puzzle imposes additional constraints, the most common of which is planarity of the network.

\subsection{Gizmos}

When building reductions out of gadgets, we often do not care about the precise states of the gadgets, but only about which sequences of traversals are possible. Hendrickson \cite{hendrickson2021gadgets} introduced the abstraction of \defn{gizmos} to capture this idea, and to formalize the notion of simulation in the gadget framework.
The relationship between gadgets and gizmos is closely analogous to the relationship between state machines and the formal languages they accept.  For our purposes, the gizmo view is useful because it allows us to cleanly define a gizmo that captures the behavior of an NCL vertex, while allowing us to ignore the precise states of the gadget, which in our Push-1 implementation are more complicated to define and work with.

Given a set \(L\) of \defn{locations}, a \defn{traversal} on \(L\) is a pair \(a \to b\) where \(a, b \in L\); and a \defn{traversal sequence} on \(L\) is a sequence of such traversals \([a_1 \to b_1, \dots, a_k \to b_k]\).
A gizmo is a language of ``allowed'' traversal sequences, which must obey two rules:
(1)~it is always possible to perform a trivial traversal which enters and immediately exits at the same location;
and (2)~a pair of consecutive traversals $a \to b$ followed by $b \to c$ can be viewed as a single longer traversal $a \to c$.

\begin{definition}[\cite{hendrickson2021gadgets}]
  \label{defn:gizmo}
  A \defn{gizmo} \(G\) on a location set \(\locs{G}\) is a set of traversal sequences on \(\locs{G}\) satisfying the following properties,
  where \(X\) and \(Y\) denote arbitrary traversal sequences on \(\locs{G}\)
  and $XY$ denotes the operation of concatenating two traversal sequences to form a new traversal sequence.
  \begin{enumerate}
  \item If \(XY \in G\), then \(X[a \to a]Y \in G\) for any location \(a \in \locs{G}\).
  \item If \(X[a \to b, b \to c]Y \in G\), then \(X[a \to c]Y \in G\).
  \end{enumerate}
\end{definition}

Gizmos can be connected together in a network to form a single combined gizmo called a \emph{simulation}.
The network connects together locations of the individual gizmos
via an equivalence relation $\sim$,
intuitively allowing the agent to traverse between equivalent locations,
and specifies a subset $L$ of equivalence classes as ``exposed locations'' in the combined gizmo.
The intended behavior is that a traversal between two exposed locations of the simulation consists of a walk through the network,
possibly passing through multiple individual gizmos.
A traversal sequence is allowed by the simulation exactly when each individual traversal can be realized by such a walk,
and the subsequence of traversals through each individual gizmo is allowed by that gizmo.
The following definition formalizes this idea:

\begin{definition}
  \label{defn:network}
  \label{defn:simulation}
  Let $\{G_i\}_{i \in I}$ be a collection of gizmos,
  $\sim$ be an equivalence relation on the disjoint union
  $\bigsqcup_{i \in I} \locs{G_i}$ of their locations,
  and $L$ be a subset of the equivalence classes.
  Then $(\{G_i\}_{i \in I}, \sim, L)$ forms a \defn{simulation}, which describes a gizmo on $L$ defined as follows.

  Let $X = [a_1 \to b_1, \dots, a_n \to b_n]$ be a traversal sequence with $a_j, b_j \in L$.
  The simulated gizmo contains $X$ if and only if there exist corresponding traversal sequences $Y_1,\dots,Y_n$
  satisfying the following:
  \begin{enumerate}
  \item Each $Y_j$ is a sequence of traversals
    $[c_1 \to d_1, \dots, c_m \to d_m]$
    such that
    \begin{enumerate}
    \item every traversal $c_k \to d_k$ is a traversal on $\locs{G_i}$ for some $i \in I$;
    \item $c_1 \in a_j$ and $d_m \in b_j$; and
    \item $d_k \sim c_{k+1}$ for each $1 \le k < m$.
    \end{enumerate}
  \item For each $i \in I$, the subsequence $Z_i$ of $Y_1 Y_2 \cdots Y_n$ consisting of traversals on $\locs{G_i}$ satisfies $Z_i \in G_i$.
  \end{enumerate}
\end{definition}

A simple example of gizmo simulation is using two diodes (gizmos that allow any number of traversals in only one direction $a \to b$) to simulate a wire (a gizmo that allows an arbitrary sequence of traversals in both directions $a \leftrightarrow b$). The simulation consists of two diodes connected in parallel but facing opposite ways, with the exposed locations being those two connections. The resulting simulated gizmo allows traversal in both directions, as desired.

In this paper, we consider only simulations where the location sets and the gizmo collection $\{G_i\}_{i \in I}$ are finite.

A set $S$ of gizmos \defn{simulates} a gizmo $H$ if $H$ is equivalent (up to relabeling locations) to the gizmo described by some simulation $(\{G_i\}_{i \in I}, \sim, L)$ whose $G_i$ are all gizmos from $S$.\footnote{%
This definition of simulation is equivalent to the definition in \cite{hendrickson2021gadgets}, which defines a simulation as a combination of the tensor product, quotient (Definition~\ref{def:quotient}), and subgizmo (Definition~\ref{def:subgizmo}) operations.  A tensor product is a simulation of the form $(G_i, =, \bigsqcup_i L(G_i))$, where $\bigsqcup$ denotes disjoint union.%
}
Simulations can be composed: if $S, S'$ are sets of gizmos, $S$ simulates every gizmo in $S'$, and $S'$ simulates $H$, then $S$ also simulates $H$.

Two important special cases of simulations are \emph{quotients} and \emph{subgizmos}.

\begin{definition}
  \label{def:quotient}
  Let $G$ be a gizmo and $\sim$ be an equivalence relation on $\locs{G}$.
  The \defn{quotient gizmo} $G/\sim$ is defined by the one-gizmo simulation $(G, \sim, \locs{G}/\sim)$.
\end{definition}
\begin{definition}
  \label{def:subgizmo}
  Let $G$ be a gizmo and $L \subset \locs{G}$.
  The \defn{subgizmo} $G|_L$ is defined by the one-gizmo simulation $(G, =, L)$, where $=$ is the minimal equivalence relation.
\end{definition}

\subsection{Planar Gizmos}

Push-1 is a two-dimensional problem, so we need to make sure our simulations work in a planar environment.
We introduce new definitions for planar gizmos and simulations,
following past work on planar gadgets \cite{Toggles_FUN2018,demaine2020toward,Doors_FUN2020,GadgetsChecked_FUN2022}.

A \defn{planar gizmo} is a gizmo together with a cyclic order on its location set.
We define planar simulation by requiring the graph of connections in the network to be planar.
To encode the exposed locations in the same drawing, we add one extra vertex $\infty$ representing the exterior of the simulation, with one edge to $\infty$ for each exposed location.

\begin{definition} \label{defn:planar-simulation}
  Given a simulation $(\{G_i\}_{i \in I}, \sim, L)$
  from Definition~\ref{defn:network},
  consider the multigraph that has a vertex for each gizmo $G_i$, a vertex for each equivalence class of $\sim$, and one additional vertex~$\infty$.
  For each location $a \in \locs{G_i}$, there is an edge from $G_i$ to the equivalence class of $a$.
  There is also an edge $(\ell, \infty)$ for each $\ell \in L$.
  A \defn{planar simulation} consists of a simulation
  $(\{G_i\}_{i \in I}, \sim, L)$ where each $G_i$ is a planar gizmo,
  together with a planar embedding of this multigraph
  such that the ordering of the edges incident to each $G_i$ matches the cyclic order of the locations of $G_i$.
  The cyclic order of the simulation's locations $L$ is defined to be the reverse\footnote{The ordering is reversed because $\infty$ represents the ``exterior'' of the simulation.  This distinction is important when planar gizmos may be rotated but not reflected.  Because Push-1 is symmetric under reflections, we assume in this paper that the reflection of a planar gizmo can be obtained whenever needed, so this will not be critical.} of the ordering of the edges incident to~$\infty$.
\end{definition}

A set $S$ of gizmos \defn{planarly simulates} a gizmo $H$ if $H$ is equivalent (up to relabeling locations, respecting cyclic order) to the gizmo described by some planar simulation that uses only gizmos from $S$.

\subsection{States, Reachability, and Reconfiguration}

A \defn{regular} gizmo is a set of traversal sequences that is a regular language; that is, it is recognized by a finite automaton.
Regular gizmos are essentially equivalent\footnote{More precisely, a regular gizmo is a gadget together with a specified starting state and a specified set of ``accepting'' states.} to \emph{gadgets} as defined in earlier work, and all the gizmos we explicitly construct in this paper will be regular.
They can be specified by means of a \emph{state diagram},
such as the earlier example in Figure~\ref{fig:state-diagram}.

Next we define a reconfiguration decision problem for gizmos.
In the case of regular gizmos, this problem asks: given a network of regular gizmos with specified locations $s$ and $t$,
can an agent travel from $s$ to $t$ through the network, while leaving all of the finite automata in accepting states?
Formally:
\begin{definition}
  Let $S$ be a finite set of gizmos.
  The \defn{[planar] targeted set reconfiguration problem} on $S$ is the following decision problem:
  given a [planar] simulation of a gizmo $G$ on the location set $\{s, t\}$ using gizmos from $S$, is $[s \to t] \in G$?
\end{definition}
Targeted set reconfiguration is the only kind of gadget reconfiguration we consider in this paper,
so we will sometimes call it simply ``reconfiguration''.

If all gizmos $H \in S$ are \defn{prefix-closed}, meaning that $XY \in H$ implies $X \in H$, then we call this a \defn{reachability} problem.
A prefix-closed regular gizmo is described by a finite automaton whose states are all accepting, so the goal here is just to make it from $s$ to $t$ without any further constraints.
The main result of Section~\ref{sec:checkable-gadgets} will be a general technique for reducing from targeted set reconfiguration problems to reachability problems.

\section{Checkable Gizmos/Gadgets Frameworks}
\label{sec:checkable-gadgets}

In this section, we adapt the checkable gadgets framework from \cite{GadgetsChecked_FUN2022} to handle reconfiguration problems.
The main idea of our framework is to use a sequence of forced traversals at the very end of a solution to ensure that every gizmo was left in a desired final state.
This is what will allow us to reduce reconfiguration problems to reachability problems,
because we can ensure that the goal location is reachable only by performing these forced traversals,
which check that the gadgets were left in the required final configuration.

The checkable gadgets framework from \cite{GadgetsChecked_FUN2022} partially accomplished this goal.
In that framework, one can designate a certain set of gadget states as \emph{broken},
and use machinery built from certain base gadgets
to ensure that all gadgets were left in unbroken states.
For this to work, it was required that it be impossible to transition from a broken state to an unbroken state.\footnote{This notion of checkable gadgets is also closely related to the ``verified gadgets'' in \cite{jaysonthesis}.  The main difference is that checking traversals are allowed to use locations of the gadget being checked, rather than only using separate ``verifying locations''.}
Unfortunately, this requirement makes the framework unsuitable for reconfiguration problems,
where it is possible for a gadget to transition many times between the desired final state
and other states that are not final but are still part of the intended operation of the gadget.

We will show how to remove this limitation and give a version of the framework that works for reconfiguration problems.
This change will have a price: one of the necessary base gadgets becomes harder to build.
This means there are really \emph{two} checkable gizmo frameworks.
The \emph{strict} checkable gizmo framework (our new framework) gives stronger guarantees that are useful for reconfiguration problems, but requires a more robust base gadget.
The \emph{leaky} checkable gizmo framework (the previous framework) has weaker guarantees but lifts this requirement on the base gadget.
In this section, we will focus on the strict checkable gizmo framework, but will briefly outline the leaky version in Section~\ref{sec:leaky}.

\subsection{Checking and Postselection}

\emph{Postselection} captures the effect of a final ``checking'' traversal sequence~$C$:
we keep exactly the traversal sequences after which $C$ can still be completed.
\begin{definition}
  Let \(G\) be a gizmo and \(C\) be a traversal sequence on \(\locs{G}\).
  The \defn{postselection} \(\pss{G}{C}\) is the set of traversal sequences
  $X$ on $\locs{G}$ such that $XC\in G$.
  That is, a traversal sequence \(X\) is permitted by \(\pss{G}{C}\) if and only if
  the ``checking'' sequence \(C\) would be possible after performing the same traversal sequence in \(G\).
  It is easy to check that $\pss{G}{C}$ is a gizmo.
\end{definition}

Suppose we have a gizmo \(H\) that can be obtained from \(G\) by postselecting on some checking sequence~\(C\),
and then possibly [planarly] identifying and/or closing some locations (via quotient and/or subgizmo operations).
In this case, we say that \(G\) is a \defn{[planarly] checkable} \(H\).%
\footnote{This definition differs slightly from the definition of ``checkable'' in \cite{GadgetsChecked_FUN2022}.  That paper permits closing locations (subgizmo operations) in the definition, but not identifications (quotient operations).}
Formally this means that there exists a checking traversal sequence \(C\) and a [planar] simulation of $H$ using only the single gizmo~$\pss{G}{C}$.
Practically it means that, if we can force the checking sequence $C$ at the end, then $G$ acts like~$H$.

\subsection{Nonlocal Simulation}

Forcing checking traversals to occur at the end
requires a global modification to the gizmo network,
which cannot be captured by the local notions of simulation from
Definitions~\ref{defn:simulation} and~\ref{defn:planar-simulation}.
Thus we introduce a more general notion of simulation among gizmos:

\begin{definition}[\cite{GadgetsChecked_FUN2022}]
  A set $S$ of gizmos \defn{[planarly] nonlocally simulates} a gizmo $H$ if,
  for every set of gizmos $T$,
  there is a polynomial-time reduction
  from [planar] targeted set reconfiguration on $\{H\} \cup T$
  to [planar] targeted set reconfiguration on $S \cup T$.
\end{definition}

(Local) simulations are a special case of nonlocal simulations:
a reduction simply replaces each instance of the simulated gadget
with the corresponding simulation network.

We can combine multiple nonlocal simulations together as follows:

\begin{lemma}
  \label{lem:nonlocal set}
  Let $S_1, S_2, T$ be sets of gizmos such that $S_2$ is finite.
  If $S_1$ [planarly] nonlocally simulates every gizmo in $S_2$,
  then there is a polynomial-time reduction from [planar] targeted set reconfiguration on $S_2 \cup T$ to [planar] targeted set reconfiguration on $S_1 \cup T$.
\end{lemma}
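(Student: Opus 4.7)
The plan is to eliminate the gizmos of $S_2$ one at a time, applying the nonlocal-simulation hypothesis at each step with a carefully chosen ``other set'' $T'$ that absorbs everything else currently in play. Since $S_2$ is finite, this only requires finitely many rounds, and the composition of finitely many polynomial-time reductions is polynomial-time.

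Concretely, enumerate $S_2 = \{H_1, \dots, H_k\}$. I would build a chain of reductions $R_1, R_2, \dots, R_k$, where $R_i$ takes as input a targeted set reconfiguration instance that uses gizmos from $S_1 \cup \{H_i, H_{i+1}, \dots, H_k\} \cup T$ and outputs an equivalent instance using gizmos from $S_1 \cup \{H_{i+1}, \dots, H_k\} \cup T$. To produce $R_i$, apply the hypothesis that $S_1$ [planarly] nonlocally simulates $H_i$ with the set $T'_i := S_1 \cup \{H_{i+1}, \dots, H_k\} \cup T$. The definition of nonlocal simulation guarantees a polynomial-time reduction from targeted set reconfiguration on $\{H_i\} \cup T'_i$ to targeted set reconfiguration on $S_1 \cup T'_i$, and $S_1 \cup T'_i = S_1 \cup \{H_{i+1}, \dots, H_k\} \cup T$ since $S_1 \subseteq T'_i$. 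This is exactly the specification of $R_i$. The given input instance uses gizmos from $S_2 \cup T \subseteq S_1 \cup \{H_1, \dots, H_k\} \cup T$, so feeding it through $R_1, R_2, \dots, R_k$ in sequence produces an equivalent instance using only gizmos from $S_1 \cup T$, as desired.

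For running time, note that each $R_i$ blows up the size of its input by at most a polynomial factor $p_i$. Since $k = |S_2|$ is a fixed constant independent of the input (it is determined by the statement of the lemma, not the instance), the composition blows up by the fixed polynomial $p_k \circ \dots \circ p_1$, which is still a polynomial, and each $R_i$ can be computed in polynomial time on an input of the relevant size. For the planar case, each $R_i$ is planarity-preserving by hypothesis, and composing planarity-preserving reductions yields a planarity-preserving reduction.

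There is no real obstacle here beyond bookkeeping; the one point that needs care is ensuring that the ``ambient'' set $T'_i$ at step $i$ really contains every gizmo that could still appear in the instance at that moment. This is why $T'_i$ must include both the not-yet-eliminated gizmos $\{H_{i+1}, \dots, H_k\}$ and all of $S_1$, since earlier reductions $R_1, \dots, R_{i-1}$ will already have introduced gizmos from $S_1$ into the instance. With that choice, the induction goes through cleanly.
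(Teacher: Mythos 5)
Your proof is correct and follows essentially the same approach as the paper: eliminate the gizmos of $S_2$ one at a time, each step applying the definition of nonlocal simulation with an auxiliary set containing $S_1$, the not-yet-eliminated gizmos, and $T$, and then compose the constant number of polynomial-time reductions. The paper's proof is a terser version of exactly this chain, so no substantive differences to report.
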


\begin{proof}
  Let $S_2 = \{G_1, \dots, G_n\}$ and define $G_{:i}$ to be the prefix $\{G_1, \dots, G_i\}$.
  Then there is a chain of $n$ reductions between targeted set reconfiguration problems
  \[S_2 \cup T = G_{:n} \cup T \underset{1}{\leadsto} S_1 \cup G_{:n-1} \cup T \underset{2}{\leadsto} S_1 \cup G_{:n-2} \cup T \underset{3}{\leadsto} \cdots \underset{n-1}{\leadsto} S_1 \cup G_{:1} \cup T \underset{n}{\leadsto} S_1 \cup T,\]
  where reduction $\underset{i}{\leadsto}$ is an application of the given nonlocal simulation of~$G_i$.
\end{proof}

\begin{corollary}
  \label{lem:nonlocal composition}
  Let $S_1, S_2$ be sets of gizmos such that $S_2$ is finite, and let $H$ be a gizmo.
  If $S_1$ [planarly] nonlocally simulates every gizmo in $S_2$, and $S_2$ [planarly] nonlocally simulates $H$,
  then $S_1$ [planarly] nonlocally simulates~$H$.
\end{corollary}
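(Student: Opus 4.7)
The plan is to prove this by direct composition of two polynomial-time reductions, appealing once to the definition of nonlocal simulation and once to Lemma~\ref{lem:nonlocal set}. (I will also assume the conclusion is the natural one, namely that $S_1$ nonlocally simulates $H$; the version written in the excerpt appears to be a typo where $S_2$ should be $S_1$.) The unpacking is: we need to show that for every set of gizmos $T$, there is a polynomial-time reduction from [planar] targeted set reconfiguration on $\{H\} \cup T$ to [planar] targeted set reconfiguration on $S_1 \cup T$.

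First, I would fix an arbitrary set $T$ of gizmos and use the hypothesis that $S_2$ nonlocally simulates $H$. Applied to the set $T$, this hypothesis directly furnishes a polynomial-time reduction
\[
\textrm{TSR}(\{H\} \cup T) \ \longrightarrow\ \textrm{TSR}(S_2 \cup T),
\]
where I am abbreviating the [planar] targeted set reconfiguration problem as TSR. Next, I would invoke Lemma~\ref{lem:nonlocal set} with the roles $S_1, S_2, T$ taken exactly as in the statement being proved; the hypotheses of that lemma (namely that $S_2$ is finite and that $S_1$ nonlocally simulates every gizmo in $S_2$) match the hypotheses we have, so the lemma yields a polynomial-time reduction
\[
\textrm{TSR}(S_2 \cup T)\ \longrightarrow\ \textrm{TSR}(S_1 \cup T).
\]
Composing these two reductions gives a polynomial-time reduction $\textrm{TSR}(\{H\} \cup T) \to \textrm{TSR}(S_1 \cup T)$, and since $T$ was arbitrary this is exactly the definition of $S_1$ nonlocally simulating $H$.

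There is no substantive obstacle here; everything is definition-chasing and the real work was already done in Lemma~\ref{lem:nonlocal set}, whose proof handled the nontrivial step of replacing the gizmos of $S_2$ one at a time while keeping the rest of the environment $T$ intact. The only thing to be slightly careful about is that in the planar case both input reductions must preserve planarity, which they do by the [planar] versions of nonlocal simulation and of Lemma~\ref{lem:nonlocal set}; and that composing two polynomial-time reductions remains polynomial-time, which is standard.
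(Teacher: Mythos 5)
Your proof is correct and is essentially identical to the paper's: both apply the hypothesis that $S_2$ nonlocally simulates $H$ to an arbitrary $T$ and then chain it with the reduction from Lemma~\ref{lem:nonlocal set} to get $\{H\}\cup T \to S_2\cup T \to S_1\cup T$. Your observation that the stated conclusion should read ``$S_1$ [planarly] nonlocally simulates $H$'' is also right---that is exactly what the paper's one-line proof establishes.
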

\begin{proof}
  By definition of nonlocal simulation and Lemma~\ref{lem:nonlocal set}, there are reductions between targeted set reconfiguration problems $\{H\} \cup T \leadsto S_2 \cup T \leadsto S_1 \cup T$.
\end{proof}

\subsection{Main Theorem}

The point of the checkable gizmo framework is that,
if we can build two simple base gadgets $\SO$ (single-use opening) and $\SC$ (single-use closing) in a game or puzzle,
then we can force any desired checking sequence $C$ to occur at the end of the puzzle.
This guarantee ensures that, in any overall solution, the gadgets are left in configurations from which $C$ can be performed,
which is exactly the behavior captured by postselection $G^C$.
Formally:

\begin{theorem}
  \label{thm:check}
  Let $G$ be a gizmo and $C$ be a traversal sequence on $\locs{G}$.
  Then $\{G, \SO, \SC\}$ planarly nonlocally simulates $\pss{G}{C}$.
\end{theorem}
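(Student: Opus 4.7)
The plan is to construct a nonlocal reduction that, given an instance of targeted reconfiguration on $\{\pss{G}{C}\} \cup T$ with start $s$ and target $t$, produces an instance on $\{G, \SO, \SC\} \cup T$ with the same start $s$ but a new target $t'$. Each copy of the postselected gizmo $\pss{G}{C}$ in the input is replaced by a copy of the unrestricted gizmo $G$, with its existing network connections preserved on its external locations; gizmos in $T$ are left untouched. The heart of the construction is a single global \emph{checking chain} that begins at the old target $t$, visits every copy of $G$ in some fixed order, performs the traversals listed in $C$ on each copy in order, and terminates at the new target $t'$. Thus, any successful solution of the constructed instance must first reach $t$ and then execute the entire chain.

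The two auxiliary gadgets play complementary roles in enforcing a strict separation between a ``normal phase'' (before the agent reaches $t$) and a ``checking phase'' (after). The role of $\SO$ is to gate access to those specific locations of each $G$ that appear in $C$: such locations are reachable only through pathways hidden behind single-use opening gadgets, which are opened in sequence by traversals of the checking chain itself. This prevents the agent from performing any traversal of $C$, or a prefix thereof, during the normal phase, which would otherwise corrupt the bookkeeping of the postselection condition. The role of $\SC$ is to handle crossings between the checking chain and the rest of the network: as the chain snakes around the plane to visit every copy of $G$, it must cross many normal network edges, and each crossing is implemented using the strong closing crossover built from $\SC$. The essential feature is that traversing the checking branch of this crossover permanently closes the crossed normal edge, so once the checking phase has begun the agent cannot return to the normal phase.

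For correctness, the forward direction is straightforward: given a valid traversal sequence $X$ on a copy of $\pss{G}{C}$ (meaning $XC \in G$), the agent simulates $X$ during the normal phase, reaches $t$, and then executes the entire checking chain, performing $C$ on each copy exactly because $XC \in G$ by assumption. The reverse direction uses the enforcement above to show that any $s$-to-$t'$ path decomposes into a normal phase (performing some $X$ on each copy of $G$) followed by a checking phase that performs exactly $C$ on each copy; hence $XC \in G$ and so $X \in \pss{G}{C}$. The main obstacle is the planar routing of the checking chain so that it visits every copy of $G$ while only crossing non-checking edges at $\SC$ crossovers; this is precisely where the new strict framework is stronger than the leaky one, since $\SC$ admits a genuinely non-leaky crossover whereas $\MSC$ does not. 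A secondary technical point is choosing a planar tour of the input gizmos compatible with the given embedding, which can be done by routing the chain along a face-traversal of the planar simulation graph so that the whole construction remains planar.
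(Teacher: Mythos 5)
Your overall architecture (replace each $\pss{G}{C}$ by $G$, chain a forced checking tour from the old target $t$ through every copy to a new target $t'$, and use \SC-based strong closing crossovers where the tour crosses the rest of the network) matches the paper's second stage, but the role you assign to \SO{} is a genuine conceptual error, and the per-copy enforcement that actually carries the proof is missing. Postselection does \emph{not} require that traversals of $C$ be forbidden during the normal phase: $\pss{G}{C}$ contains every $X$ with $XC \in G$, and such $X$ may freely use the locations and even the exact traversals appearing in $C$ (in the paper's own application, the AND/OR gadgets' checking sequence uses precisely the normal interface locations $a,b,c,x$, and the symmetric self-closing door example has $X$'s that contain $C$ itself). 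Since $\locs{\pss{G}{C}} = \locs{G}$ and the surrounding network $T$ may attach to any of these locations, hiding them behind \SO-gated pathways does not ``protect bookkeeping''---it shrinks the simulated gizmo to something strictly smaller than $\pss{G}{C}$, so yes-instances of the source problem would map to no-instances and the reduction fails. In the paper, \SO{} instead lives inside a per-copy wrapper: a chain of single-use openers $O_1,\dots,O_k$ (plus dicrumblers) arranged so that $\cOut$ is reachable only via $O_k$, each $O_{i+1}$ can be opened only after $O_i$ has been traversed, and traversing $O_i$ is possible only after performing $a_i \to b_i$ on $G$; this is what forces the check to be exactly $C$, in order.

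The second gap is in your reverse direction: you claim any $s$-to-$t'$ solution decomposes as ``normal phase performing $X$'' followed by ``checking phase performing exactly $C$,'' but nothing in your construction prevents the agent, mid-check of a copy, from entering $G$ at $a_i$ and exiting at some location other than $b_i$ back into the normal network, wandering, and returning later---so the total sequence on that copy would be $C$ interleaved with extra traversals rather than $XC$, and the conclusion $XC \in G$ no longer follows. Your closing crossovers only seal the edges the tour happens to cross, not all external edges of the copy being checked. The paper handles this by having the wrapper's $\cIn \to \cOut$ path first close off \emph{all} of the copy's external connections via \SCX{} and by making every deviation from the forced \SO/dicrumbler sequence a dead end, so no solution can deviate; it also packages the argument modularly (the ``simply checkable'' notion plus a separate chaining lemma), which is where the clean induction ``the only way to reach $\cIn$ of $g_{i+1}$ is by traversing $\cIn \to \cOut$ of $g_i$'' comes from. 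To repair your proposal you would need to add exactly this per-copy sealing-and-ordering machinery and drop the normal-phase blocking of $C$-locations.
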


Theorem~\ref{thm:check} is similar to \cite[Theorem~1]{GadgetsChecked_FUN2022}.
The key difference is that $\pss{G}{C}$ may not be prefix-closed even if $G$ is
prefix-closed (as illustrated in Figure~\ref{fig:sscd}),
whereas the unbroken states in \cite{GadgetsChecked_FUN2022} require prefix-closure.
This generalization is what allows Theorem~\ref{thm:check} to reduce the \emph{targeted set reconfiguration} problem on $\pss{G}{C}$ to a \emph{reachability} problem on $\{G, \SO, \SC\}$.

\begin{figure}[htp]
  \centering
  \includegraphics[scale=0.75]{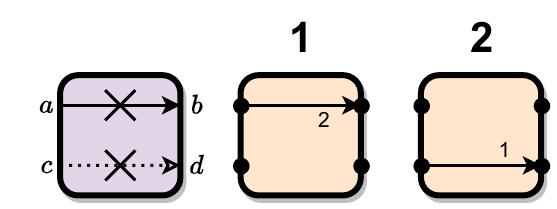}
  \caption{Icon and state diagram for a symmetric self-closing door from \cite{Doors_FUN2020}.
    Crosses indicate traversals that sometimes close,
    depending on the state of the door, while the dotted line shows the
    currently closed traversal.
    The gizmo $G$ corresponding to state $1$ (with both states accepting) is prefix-closed.
    However, $\pss{G}{[c\to d]}$ is not prefix-closed,
    because $G$ contains $[a\to b,c\to d]$ but not $[c\to d]$.
    In fact, $\pss{G}{[c\to d]}$ is the gizmo corresponding to state $1$
    of the symmetric self-closing door where only state $2$ is accepting.
  }
  \label{fig:sscd}
\end{figure}

\subsection{Base Gadgets}

\begin{figure}
  \centering
  \def\scale{0.8}
  \def\width{2.9cm}
  \subcaptionbox{\label{fig:SO_state}\centering Single-use opening (\SO)}[\width]{\includegraphics[scale=\scale]{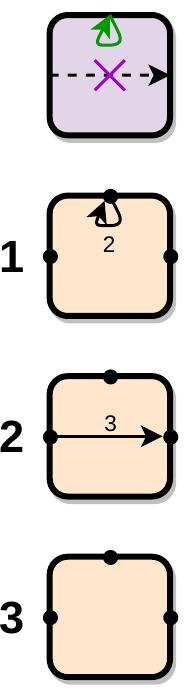}}
  \hspace{2cm}
  \subcaptionbox{\label{fig:SC_state}\centering Single-use closing (\SC)}[\width]{\includegraphics[scale=\scale]{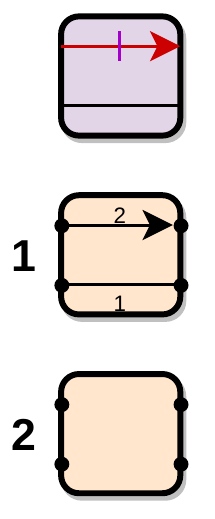}}
  \hspace{2cm}
  \subcaptionbox{\label{fig:MSC_state}\centering Merged single-use closing (\MSC)}[\width]{\includegraphics[scale=\scale]{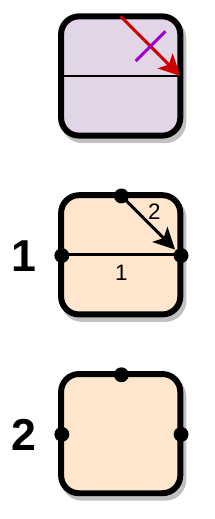}}
  \caption{Icons (top) and state diagrams (bottom) for three basic gadgets.
    Green arrows show opening traversals (traversals that make more traversals available),
    red arrows show closing traversals (traversals that destroy a previously available traversal),
    and purple crosses indicate traversals that permanently close themselves.}
  \label{fig:base gadgets}
\end{figure}

\begin{figure}
  \centering
  \def\scale{0.8}
  \def\width{2.9cm}
  \subcaptionbox{\label{fig:SD_state}\centering Dicrumbler (\SD)}[\width]{\includegraphics[scale=\scale]{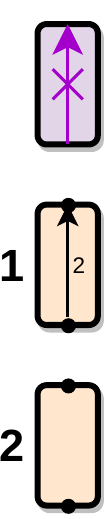}}
  \hfill
  \subcaptionbox{\label{fig:SX_state}\centering Single-use crossover (\SX)}[\width]{\includegraphics[scale=\scale]{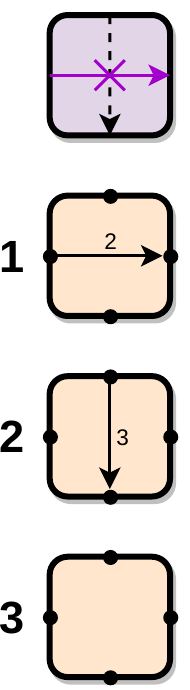}}
  \hfill
  \subcaptionbox{\label{fig:WCX_state}\centering Weak closing crossover (\WCX)}[\width]{\includegraphics[scale=\scale]{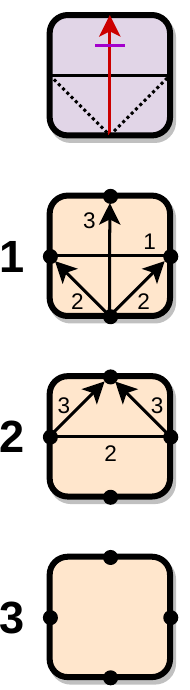}}
  \hfill
  \subcaptionbox{\label{fig:SCX_state}\centering Strong closing crossover (\SCX)}[\width]{\includegraphics[scale=\scale]{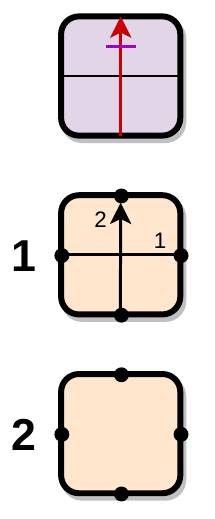}}
  \caption{Icons (top) and state diagrams (bottom) for four derived gadgets.
    Red arrows show closing traversals
    and purple crosses indicate traversals that permanently close themselves.}
  \label{fig:derived gadgets}
\end{figure}

Figures~\ref{fig:SO_state} and~\ref{fig:SC_state} show
the two base gadgets we use to implement checking.
The \defn{single-use opening} (\defn{\SO}) gadget is the same as in \cite{GadgetsChecked_FUN2022},
while the \defn{single-use closing} (\defn{\SC{}}) gadget is a new two-state four-location gadget.
It initially allows horizontal traversals along the bottom tunnel.
The top tunnel can be traversed from left to right exactly once,
after which no traversals are possible.

If we merge the two rightmost locations of \SC{}, we obtain the \defn{merged single-use closing} (\defn{\MSC{}}) gadget, shown in Figure~\ref{fig:MSC_state}.
The previous framework \cite{GadgetsChecked_FUN2022} used \MSC{} in place of \SC,
and showed that $\{\SO{}, \MSC{}\}$ planarly simulates the \defn{dicrumbler} (\defn{\SD}), \defn{single-use crossover} (\defn{\SX}), and \defn{weak closing crossover} (\defn{\WCX}) gadgets shown in Figure~\ref{fig:derived gadgets}.
Because \SC{} simulates \MSC{}, $\{\SO{}, \SC{}\}$ also simulates these gadgets,
so we can use \SD{}, \SX{}, and \WCX{} as building blocks
in our new framework as well.

By combining \WCX{} and \SC{} as shown in Figure~\ref{fig:strong-closing-crossover}, we can obtain the \defn{strong closing crossover} (\defn{\SCX}) shown in Figure~\ref{fig:SCX_state}.
This gadget is exactly the same as \SC{} except for the cyclic ordering of its locations, which has the two tunnels cross each other.
Compared to \WCX{}, \SCX{} prevents the agent from ``leaking'' from the vertical tunnel into the horizontal tunnel.
This property will be critical for our stronger checking framework.

\begin{figure}[htp]
	\centering
	\includegraphics[scale=0.75]{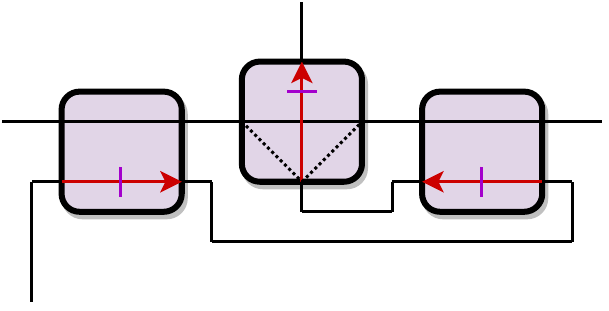}
	\caption{Construction of a strong closing crossover (\SCX) using \SC{} and \WCX{}.}
	\label{fig:strong-closing-crossover}
\end{figure}

\subsection{Proof of Main Theorem}

Now we sketch the proof of Theorem~\ref{thm:check} along the same lines as \cite{GadgetsChecked_FUN2022}.
The differences come from replacing uses of \WCX{} with \SCX, which prevents the agent from making certain unintended traversals (``leaks'').

The easiest case is when the checking phase is just one forced traversal from a special location $\cIn$ to a special location $\cOut$,
where these two locations are used only for checking and are not part of the gadget $H$ being simulated.
We call this weaker notion ``simple checkability''.
We will prove Theorem~\ref{thm:check} first for simply checkable gizmos,
and then show how to convert any checkable gizmo into a simply checkable one.

\begin{definition}
  Let $G, H$ be gizmos with $\locs{G} = \locs{H} \sqcup \{\cIn, \cOut\}$.
  For planar gizmos, this decomposition must respect the cyclic order.
  Then $G$ is a \defn{simply checkable} \(H\)
  if it satisfies the following conditions.%
  \begin{enumerate}
  \item If \(X \in H\), then \(X[\cIn \to \cOut] \in G\).
  \item For any traversal sequence \(Y \in G\),
    let \(\widetilde Y\) be the traversal sequence obtained from $Y$
    by removing any instances of the trivial traversals
    \(\cIn \to \cIn\) or \(\cOut \to \cOut\).
    Then either $\widetilde Y$ contains only traversals on $\locs{H}$ or $\widetilde Y = X[\cIn \to \cOut]$ for some $X \in H$.
  \end{enumerate}
\end{definition}

In particular, it follows from this definition that $H = \pssl{G}{[\cIn \to \cOut]}{\locs{H}}$
(where $G|_L$ denotes a subgizmo from Definition~\ref{def:subgizmo}),
so a simply checkable \(H\) is also a planarly checkable \(H\).

\begin{lemma}
  \label{lem:simply checkable}
  Let $G$ be a simply checkable $H$.
  Then $\{G\}$ nonlocally simulates $H$,
  and $\{G, \SCX\}$ planarly nonlocally simulates $H$.
\end{lemma}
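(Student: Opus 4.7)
The plan is to adapt the construction and proof of the analogous result (Theorem~2) in \cite{GadgetsChecked_FUN2022}, replacing its \WCX{} crossovers with \SCX{} crossovers. The stronger no-leak property of \SCX{} is exactly what will let the argument work when $\pss{G}{[\cIn\to\cOut]}$ is not prefix-closed, i.e.\ when we are doing reconfiguration rather than pure reachability.

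\textbf{Construction.} Given a [planar] targeted reconfiguration instance on $\{H\}\cup T$ with copies $H_1,\dots,H_n$ of $H$, start $s$, and target $t$, I would replace each $H_i$ by a copy $G_i$ of $G$ (which has all the locations of $H_i$ plus two new locations $\cIn_i,\cOut_i$, inserted in the cyclic order as required by the simply-checkable definition). I then introduce a new target $t'$ and a ``checking chain'' made of network edges $t\to \cIn_1$, $\cOut_i\to \cIn_{i+1}$ for $1\le i<n$, and $\cOut_n\to t'$. In the planar case I route this chain in the existing plane drawing and place an \SCX{} at every crossing with a non-chain edge, with the chain going through the vertical tunnel and the crossed edge through the horizontal tunnel; see Figure~\ref{fig:simply checkable}.

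\textbf{Forward direction.} Given an original solution $\sigma$ from $s$ to $t$ with every $H_i$ left in an accepting state, I have the agent replay $\sigma$ verbatim in the new instance. Its projection onto each $G_i$ is some $X_i\in H_i$, and by the first condition of simply checkable, $X_i[\cIn\to\cOut]\in G_i$. After arriving at $t$, the agent walks down the chain, firing $\cIn_i\to\cOut_i$ at each $G_i$ in turn and traversing each \SCX{} exactly once vertically, eventually reaching $t'$ with every $G_i$ in an accepting state.

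\textbf{Backward direction (the main obstacle).} Given any solution of the new instance, I restrict the agent's walk to each $G_i$ to get some $Y_i\in G_i$. Since the only network edge incident to $t'$ is $\cOut_n\to t'$, every solution fires $\cIn_n\to\cOut_n$, and inductively upstream every $\cIn_i\to\cOut_i$; hence each cleaned $\widetilde Y_i$ contains a $\cIn\to\cOut$, so by the second condition of simply checkable, $\widetilde Y_i=X_i[\cIn\to\cOut]$ with $X_i\in H$. The substantive point is then that stripping the $\cIn\to\cOut$ traversals from the agent's full walk yields a legal walk in the original network from $s$ to $t$ whose projection onto $H_i$ is exactly $X_i$. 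This splitting relies on two facts. First, the second simply-checkable condition forbids the agent from entering $\cIn_i$ or $\cOut_i$ and then doing anything other than the single traversal $\cIn_i\to\cOut_i$ inside $G_i$, and forbids any further traversals of $G_i$ after that one fires. Second, in the planar case \SCX{} forbids the agent from entering a crossover vertically and leaving horizontally, which rules out the agent ``leaking'' from the chain back into the body of the network across a crossing --- this is precisely the strengthening of \SCX{} over \WCX{}, and it is what forces the walk to decompose cleanly into a body phase from $s$ to $t$ followed by a chain phase from $t$ to $t'$. Projecting the body phase onto the original network then gives a solution to the original instance with each $H_i$ left accepting, completing the reduction.
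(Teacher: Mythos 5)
Your proposal is correct and matches the paper's proof essentially step for step: replace each copy of $H$ by $G$, chain $t \to \cIn_1$, $\cOut_i \to \cIn_{i+1}$, $\cOut_n \to t'$, use the first simply-checkable condition for the forward direction and the second (via backward induction from $t'$) to force each copy's traversal sequence to be $X_i[\cIn\to\cOut]$ with $X_i\in H$, and insert \SCX{} at chain crossings in the planar case. The only cosmetic difference is that you phrase the conclusion as stripping the chain phase from the walk, while the paper takes the prefix before $\cIn\to\cOut$ of $g_1$; these are the same argument.
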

\begin{proof}
  Refer to Figure~\ref{fig:simply checkable}.
  Suppose we are given
  an instance of targeted set reconfiguration on $\{H\} \cup T$, i.e.,
  a simulation of a gizmo on $\{s, t\}$ using gizmos from $\{H\} \cup T$.
  We construct an instance of targeted set reconfiguration on $\{G\} \cup T$,
  i.e., a simulation of a gizmo on $\{s, t'\}$ using gizmos from $\{G\} \cup T$,
  as follows.
  Replace every instance of $H$ with an instance of $G$.
  Label these instances $g_1, \dots, g_n$,
  and label the $\cIn$ and $\cOut$ locations of each $g_i$ as
  $\cIni{i}$ and $\cOuti{i}$ respectively.
  Identify $\cOuti{i}$ with $\cIni{i+1}$ for $1 \leq i < n$.
  Additionally identify the old target location $t$ with $\cIni{1}$,
  and identify the new target location $t'$ with $\cOuti{n}$.

\begin{figure}
  \centering
  \begin{subfigure}{0.47\linewidth}
    \centering
    \begin{overpic}[scale=0.7]{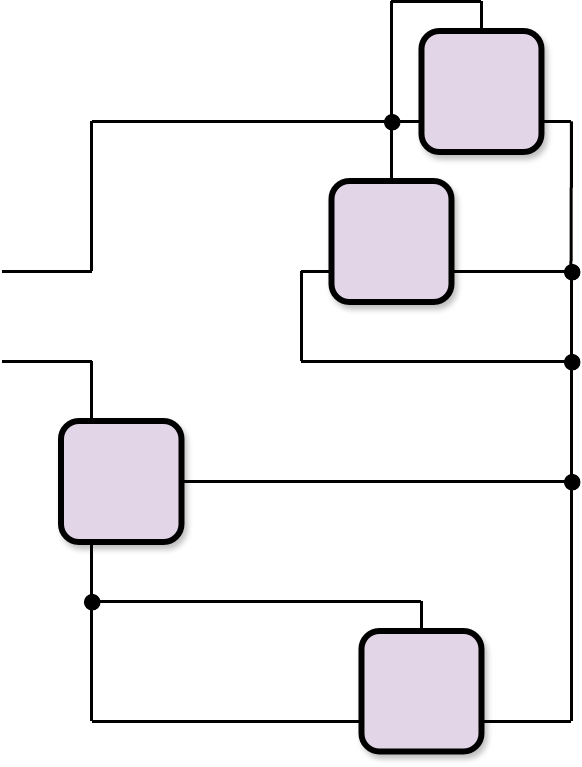}
      \put(2,54){\makebox(0,0)[c]{$s$}}
      \put(2,66){\makebox(0,0)[c]{$t$}}
      \put(63,88){\makebox(0,0)[c]{\Large $g_4$}}
      \put(51,68){\makebox(0,0)[c]{\Large $g_1$}}
      \put(16,37){\makebox(0,0)[c]{\Large $g_3$}}
      \put(55,9){\makebox(0,0)[c]{\Large $g_2$}}
    \end{overpic}
    \caption{Given (old) simulation: can you get from $s$ to $t$?}
  \end{subfigure}%
  \hfill
  \begin{subfigure}{0.47\linewidth}
    \centering
    \begin{overpic}[scale=0.7]{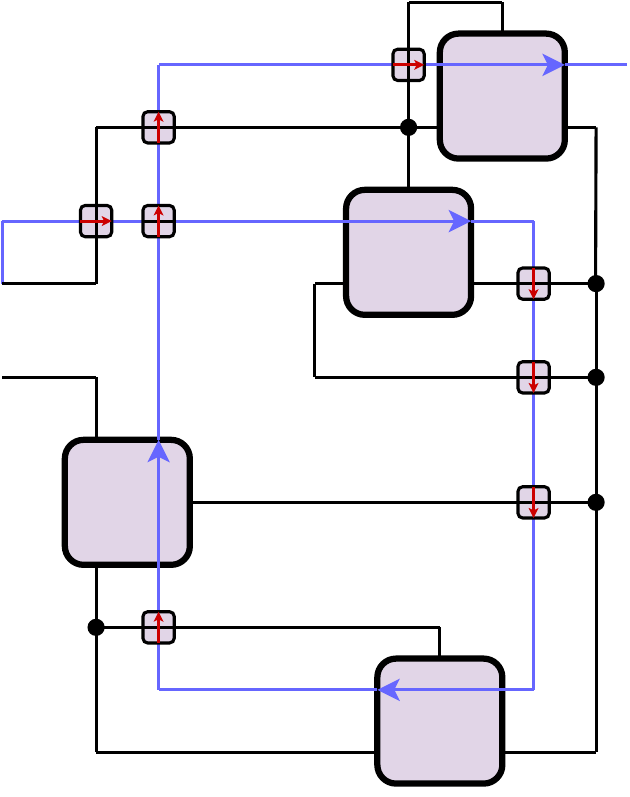}
      \put(2,54){\makebox(0,0)[c]{$s$}}
      \put(2,66){\makebox(0,0)[c]{$t$}}
      \put(77,94){\makebox(0,0)[c]{$t'$}}
      \put(63,88){\makebox(0,0)[c]{\Large $g_4$}}
      \put(51,68){\makebox(0,0)[c]{\Large $g_1$}}
      \put(16,37){\makebox(0,0)[c]{\Large $g_3$}}
      \put(55,9){\makebox(0,0)[c]{\Large $g_2$}}
    \end{overpic}
    \caption{Augmented (new) simulation: can you get from $s$ to $t'$?}
  \end{subfigure}
  \caption{Nonlocal simulation for the proof of Lemma~\ref{lem:simply checkable}, adapted from \cite[Figure~21]{GadgetsChecked_FUN2022} using \SCX{} gadgets instead of \WCX{} gadgets.}
  \label{fig:simply checkable}
\end{figure}

  It remains to show that the new simulation has a solution if and only if the old simulation does.

  Given a solution to the old simulation, we can follow the same solution to get from $s$ to $t$,
  then follow the chain of traversals
  $$t \sim \cIni{1} \to \cOuti{1} \sim \cIni{2} \to \cOuti{2} \sim \cIni{3} \to \cdots \to \cOuti{n} \sim t'$$
  to reach~$t'$.
  The resulting path satisfies each gizmo $g_i$ by definition of simply checkable.

  Conversely, suppose there is a solution to the new simulation.
  The only way to reach $t'$ is by traversing $\cIni{n} \to \cOuti{n}$, so the solution must end with this traversal (ignoring trivial $\cIn \to \cIn$ and $\cOut \to \cOut$ traversals).
  Similarly, the only way to reach each $\cIni{i+1}$ is by traversing $\cIni{i} \to \cOuti{i}$.
  By induction backwards on $i$, the solution must end with the traversals
  $[\cIni{1} \to \cOuti{1}, \dots, \cIni{n} \to \cOuti{n}]$
  (again ignoring trivial traversals).
  By the definition of simply checkable,
  the subsequence of nontrivial traversals performed on each $g_i$
  is of the form $X[\cIni{i} \to \cOuti{i}]$
  for some $X \in H$.
  The only way to reach $\cIni{1}$ is by reaching~$t$
  (or another location in the equivalence class).
  Because all of the $\cIn \to \cOut$ traversals occur at the end,
  the remaining prefix of traversals must form a valid solution to the old simulation.

  In the planar case, the identifications cannot necessarily be made in a planar way, so we must replace each edge crossing with an \SCX{} gadget, as in Figure~\ref{fig:simply checkable}.  The rest of the argument is identical.
\end{proof}

Finally we can prove the main theorem of our framework, which just requires a single additional construction.

\begin{proof}[Proof of Theorem~\ref{thm:check}]
  By Lemma~\ref{lem:simply checkable}, it suffices to simulate a simply checkable $\pss{G}{C}$ using the gizmos $\{G, \SO, \allowbreak \SC\}$.
  Write $C = [a_1 \to b_1, \dots, a_k \to b_k]$.
  Figure~\ref{fig:simply checkable postselect} shows the simulation. It starts by adding a path from $\cIn$ which crosses all of the paths leading to locations of $G$ using \SCX{} gadgets, which will prevent the agent from leaving and making any traversals other than the intended checking traversals $a_i \to b_i$ in the checking phase.
  Then, for each $i$, it adds an \SD{} $D_i$ gadget that leads to
  location $a_i$, and an \SO{} $O_i$ gadget coming from location~$b_i$.
  Before we enter the checking phase by entering $\cIn$,
  neither of these gadgets can be used.
  In the checking phase, they allow the agent to make the traversal $a_i \to b_i$,
  and the \SCX{} gadgets above will prevent any other traversals. These $O_i$ and $D_i$ gadgets are connected together in a long chain to ensure that the checking traversals $a_i \to b_i$ are performed in the correct order. These connections may cross each other as well as the exit lines from the gadget to the outside world, but we can use \SX{} and \SCX{} gadgets to handle these crossings without allowing any unintended traversals.

\begin{figure}[t]
  \centering
  \begin{overpic}[width=0.7\linewidth]{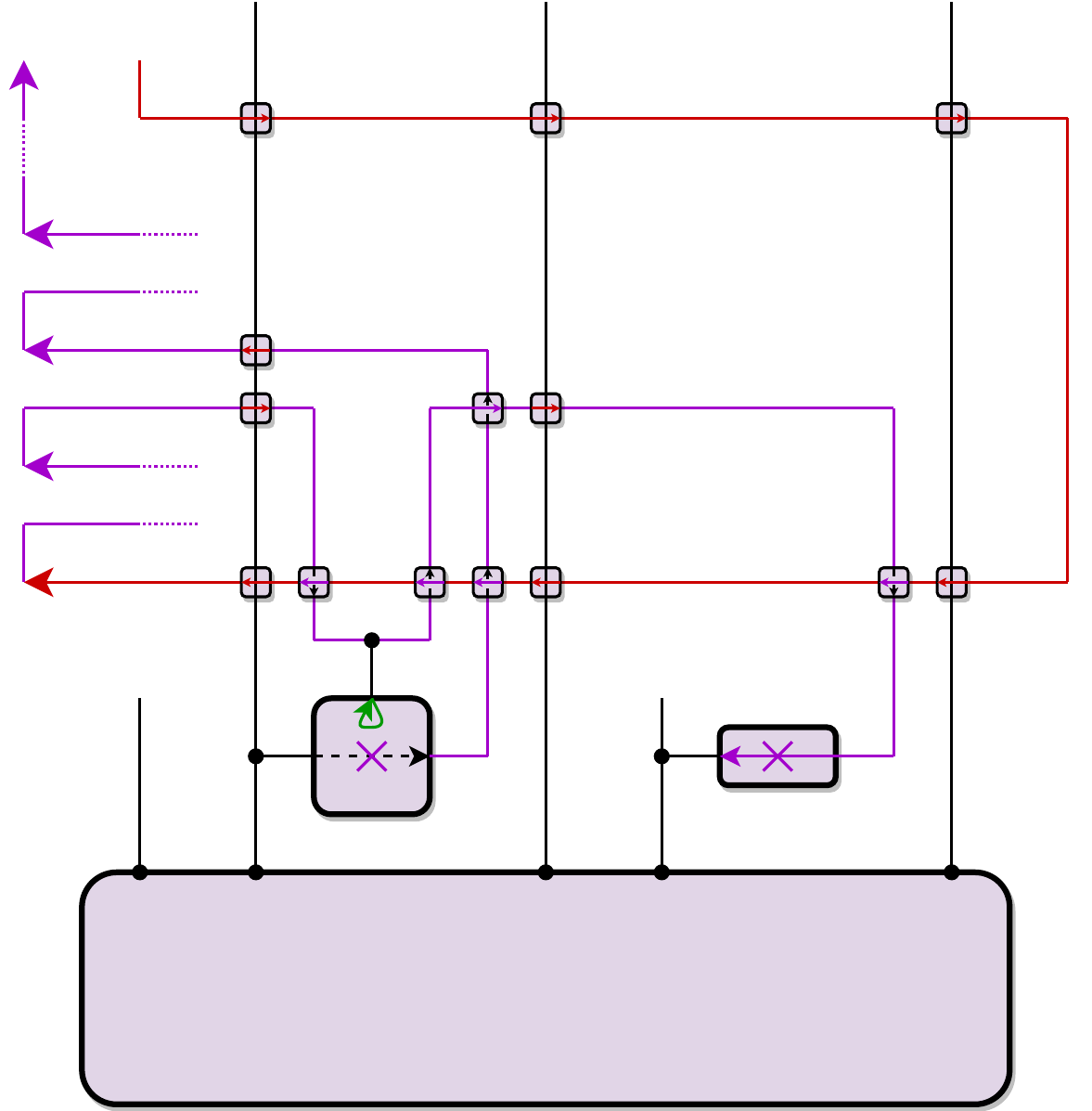}
    \put(13,98){\makebox(0,0)[c]{\strut \(\cIn\)}}
    \put(2,98){\makebox(0,0)[c]{\strut \(\cOut\)}}
    \put(23,19){\makebox(0,0)[c]{\strut \(b_i\)}}
    \put(59.5,19){\makebox(0,0)[c]{\strut \(a_i\)}}
    \put(34,24.5){\makebox(0,0)[c]{\strut \(O_i\)}}
    \put(70,26.5){\makebox(0,0)[c]{\strut \(D_i\)}}
    \put(6,54.5){\makebox(0,0)[l]{\footnotesize to $O_1, D_1$}}
    \put(6,59.75){\makebox(0,0)[l]{\footnotesize from $O_{i-1}$}}
    \put(6,65){\makebox(0,0)[l]{\footnotesize to $O_i, D_i$}}
    \put(6,70.25){\makebox(0,0)[l]{\footnotesize from $O_i$}}
    \put(6,75.5){\makebox(0,0)[l]{\footnotesize to $O_{i+1}, D_{i+1}$}}
    \put(6,80.75){\makebox(0,0)[l]{\footnotesize from $O_k$}}
    \put(49,11){\makebox(0,0)[c]{\LARGE $G$}}
  \end{overpic}
  \caption{Simulation of a simply checkable $G^C$, adapted from \cite[Figure~22]{GadgetsChecked_FUN2022} using \SCX{} gadgets instead of \WCX.}
  \label{fig:simply checkable postselect}
\end{figure}

  We must prove that, for any traversal sequence $X$ satisfying $XC \in G$,
  the traversal sequence $X[\cIn \to \cOut]$ is accepted by our simulation;
  and conversely that any traversal sequence accepted by our simulation with a nontrivial traversal involving $\cIn$ or $\cOut$ is of the form $X[\cIn \to \cOut]$ with $XC \in G$.

  The only way to reach $\cOut$ is
  by traversing $O_k$.
  But in order to traverse each $O_{i+1}$,
  it is necessary to first open it;
  and the opening entrance of $O_{i+1}$ is reachable only by traversing $O_i$.
  Therefore reaching $\cOut$ can be done only by first opening $O_1$, which requires entering at $\cIn$.

  Now suppose the agent enters the simulation at $\cIn$.
  It is then forced to close all of the edges to the outside via \SCX{} traversals
  and then perform the following actions for each $i$ in order:
  open $O_i$, traverse $D_i$, enter $G$ through $a_i$,
  exit $G$ through $b_i$, and traverse $O_i$.
  After this, it is finally forced to exit at $\cOut$.
  Note that the agent has no choice at any point in this process;
  any nontrivial deviation from the above plan results in getting stuck within the simulation.

  It follows that, if the agent makes any nontrivial traversal involving $\cIn$ or $\cOut$,
  then the overall sequence of traversals on the simulation
  is of the form $X[\cIn \to \cOut]$ where $XC \in G$,
  because the sequence of traversals on $G$ performed along the $\cIn \to \cOut$ path is just~$C$
  (or more precisely, the traversal sequence can be contracted to $C$ according to the rules in Definition~\ref{defn:gizmo}).

  On the other hand, the above sequence of actions shows that $X[\cIn \to \cOut]$ is accepted by the simulation provided $XC \in G$.
  So the simulation is a simply checkable $\pss{G}{C}$.
\end{proof}

\subsection{Leaky Checkable Gizmos Framework}
\label{sec:leaky}

For completeness and comparison's sake,
we state the leaky checkable gizmos framework obtained
by directly generalizing the checkable gadgets framework of \cite{GadgetsChecked_FUN2022} to arbitrary gizmos, without replacing
weak closing crossovers with strong ones.
The benefit of the leaky framework is that it only requires the \SO{} and \MSC{} base gadgets, without needing the stronger \SC{} gadget.
The cost is that checking may allow some leakage through exposed locations before the final check.
Thus a leaky check does not necessarily force the exact postselection $\pssl{G}{C}{L'}$;
instead it forces a gizmo containing all definitely checkable sequences and only sequences that, after possible additional moves on $L'$, can be completed to pass the check.

\begin{definition}
  Let $G$ be a gizmo, $L'$ be a subset of $\locs{G}$,
  and $C$ be a traversal sequence on $\locs{G}$.
  Gizmo $H$ is a \defn{weak postselection} of $(G, C, L')$
  if it satisfies the following conditions.
  \begin{enumerate}
  \item $\locs{H} = L'$.
  \item If $XC \in G$, then $X \in H$ for every traversal sequence $X$ on $L'$.
  \item If $X \in H$, then there exists a traversal sequence $Y$ on $L'$ so that $XYC \in G$.
  \end{enumerate}
\end{definition}

The (strong) postselection $\pssl{G}{C}{L'}$ is always a
weak postselection of $(G, C, L')$.
If $\pssl{G}{C}{L'}$ is prefix-closed (as in a reachability problem),
then it is the only one.
More generally, the weak postselections of $(G,C,L')$ are the gizmos $H$ on $L'$
satisfying $\pssl{G}{C}{L'}\subset H \subset G'$,%
\footnote{Because gizmos are sets of legal traversal sequences, $G\subset H$ means $H$ allows everything $G$ does.}
where $G'$ is the set of prefixes of traversal sequences in $\pssl{G}{C}{L'}$ (which is a gizmo).

The leaky framework uses a weaker notion of nonlocal simulation:

\begin{definition}
  A set $S$ of gizmos \defn{[planarly] leakily nonlocally simulates}
  a prefix-closed gizmo $H$ if,
  for every set of \emph{prefix-closed} gizmos $T$,
  there is a polynomial-time reduction
  from [planar] reachability on $\{H\} \cup T$
  to [planar] targeted set reconfiguration on $S \cup T$.
\end{definition}

The leaky analog to Theorem~\ref{thm:check} is as follows; it corresponds to \cite[Theorem~5.3]{GadgetsChecked_FUN2022}.

\begin{theorem}\label{thm:leaky check}
  Let $G$ be a gizmo, $L'$ be a subset of $\locs{G}$, and $C$ be a traversal sequence on $\locs{G}$,
  such that $\pssl{G}{C}{L'}$ is prefix-closed.
  Then $\{G, \SO, \MSC\}$ planarly leakily nonlocally simulates $\pssl{G}{C}{L'}$.
\end{theorem}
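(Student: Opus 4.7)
The plan is to mirror the strict proof of Theorem~\ref{thm:check} step by step, replacing $\SC$ by $\MSC$ (and hence $\SCX$ by $\WCX$, which is already known to be built from $\SO$ and $\MSC$) and weakening the intermediate ``simply checkable'' notion to accommodate the extra behaviour that $\WCX$ permits. The hypothesis that $\pssl{G}{C}{L'}$ is prefix-closed is what makes those leaks harmless.

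First I would introduce a leaky analogue of simply checkable tailored to reachability. Concretely, I would call a gizmo $G'$ on $\locs{H} \sqcup \{\cIn, \cOut\}$ \emph{weakly simply checkable} for a prefix-closed $H$ if (i) every $X \in H$ satisfies $X[\cIn \to \cOut] \in G'$, and (ii) for every $Y \in G'$, after removing the trivial $\cIn \to \cIn$ and $\cOut \to \cOut$ traversals, $Y$ either lies entirely in $\locs{H}$ and projects to an element of $H$, or contains a single $\cIn \to \cOut$ traversal and projects around it to a traversal sequence in $H$. Unlike the strict case, $\locs{H}$-traversals are allowed on both sides of the check, reflecting the possibility that the agent leaks back into the external paths after performing the check.

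Next I would prove the analogue of Lemma~\ref{lem:simply checkable}: if $G'$ is weakly simply checkable for a prefix-closed $H$, then $\{G', \WCX\}$ planarly leakily nonlocally simulates $H$. The nonlocal construction is the chain of Figure~\ref{fig:simply checkable} with $\WCX$ replacing $\SCX$ at each crossing. The backward-induction argument from $t'$ still forces each $g_i$ to be traversed from $\cIn$ to $\cOut$, and condition (ii) then shows that the projection to $\locs{H}$ of the agent's activity on each $g_i$ lies in $H$. Concatenating these projections across $i$, together with the agent's traversals on $S'$, yields a valid solution to the original reachability instance on $\{H\} \cup S'$; because $S'$ consists of prefix-closed gizmos and $H$ is prefix-closed, any rearrangement of the ordering induced by leaks remains a legal play.

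Finally I would verify that the construction in Figure~\ref{fig:simply checkable postselect}, with $\WCX$ substituted for $\SCX$, is a weakly simply checkable $\pssl{G}{C}{L'}$ built from $\{G, \SO, \MSC\}$. The $\SO$-chain argument is unchanged: the only route from $\cIn$ to $\cOut$ opens and traverses $O_1, \ldots, O_k$ in order, and each $O_i$ forces the agent through $D_i$ into $a_i$ and out at $b_i$, so the traversals performed on the inner $G$ contract to exactly $C$. The leaks of $\WCX$ let the agent exit to $L'$ between these forced steps, contributing extra $L'$-traversals whose union with $C$ must still be accepted by $G$ --- which is precisely the condition that the $L'$-projection lies in $\pssl{G}{C}{L'}$, and by the prefix-closure hypothesis also covers the partial runs that never reach $\cOut$. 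The main obstacle will be pinning down the weakly simply checkable definition tightly enough that both the backward-induction step of the simulation lemma and the forcing argument in Figure~\ref{fig:simply checkable postselect} go through, since every extra leak pattern permitted by $\WCX$ must be absorbed, and the only leverage against such patterns is the prefix-closure of $\pssl{G}{C}{L'}$.
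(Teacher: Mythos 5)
Your overall plan (mirror the proof of Theorem~\ref{thm:check} with \MSC/\WCX{} in place of \SC/\SCX, and lean on prefix-closure of $\pssl{G}{C}{L'}$ to absorb leaks) is the intended shape --- note the paper does not reprove this theorem at all, but simply invokes it as a direct generalization of Theorem~5.3 of \cite{GadgetsChecked_FUN2022} --- but two of your key steps do not hold as written, and they are exactly where the real content of the leaky framework lies. First, your claim in step~3 that the leaked excursions contribute ``extra $L'$-traversals whose union with $C$ must still be accepted by $G$ --- which is precisely the condition that the $L'$-projection lies in $\pssl{G}{C}{L'}$'' conflates two different things. What acceptance of the inner copy of $G$ gives you is that some \emph{interleaving} of the leaked $L'$-traversals with the checking traversals of $C$ lies in $G$; membership in $\pssl{G}{C}{L'}$ requires $XC \in G$ with $C$ performed \emph{after} all of $X$, and an interleaved acceptance does not imply this. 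This gap is not cosmetic: it is precisely why the leaky framework in general only pins the simulated gizmo down to a \emph{weak postselection} of $(G,C,L')$ (any gizmo between $\pssl{G}{C}{L'}$ and its prefix closure), and why the theorem needs the hypothesis that $\pssl{G}{C}{L'}$ is prefix-closed, which collapses all weak postselections to $\pssl{G}{C}{L'}$ itself. Your proof never produces or uses this intermediate notion, so the forcing argument for the \WCX-version of Figure~\ref{fig:simply checkable postselect} would not deliver the guarantee your simulation lemma consumes.

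Second, in step~2 you extract the old reachability solution by ``concatenating these projections across $i$'' and asserting that ``any rearrangement of the ordering induced by leaks remains a legal play'' because everything is prefix-closed. Prefix-closure licenses \emph{truncation}, not rearrangement or appending of the post-check leaked activity: a prefix-closed gizmo need not accept a reordering of an accepted sequence. The correct extraction is to cut the successful run of the new instance at the moment the checking chain is first entered (just before the $\cIn \to \cOut$ traversal of $g_1$): that prefix reaches $t$; on each copy of the simulated gizmo it induces a prefix of the pre-check $L'$-sequence, which lies in $\pssl{G}{C}{L'}$ by prefix-closure once the (corrected) forcing guarantee is in place; and on each gizmo of $S'$ it induces a prefix of an accepted sequence, again fine by prefix-closure of $S'$. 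Relatedly, the first branch of your ``weakly simply checkable'' definition is too strong for the construction to satisfy: a pure-$\locs{H}$ traversal sequence accepted by the construction need only have its $G$-projection in $G$, not in $\pss{G}{C}$, which is why the paper's strict definition leaves that branch unconstrained and relies on the backward induction to show the check always happens; you should do the same.
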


This theorem is shown by simulating a leakily simply checkable $H$, where $H$ is some weak postselection of $(G, C, L')$, which in fact implies $H = \pssl{G}{C}{L'}$. Here ``leakily simply checkable'' refers to the following definition, corresponding to \cite[Definition~5.5]{GadgetsChecked_FUN2022}.%
\footnote{This definition fixes a minor error in \cite{GadgetsChecked_FUN2022} by correctly accounting for all traversals involving $\cIn$ and $\cOut$.}

\begin{definition}
  Let $G, H$ be gizmos with $\locs{G} = \locs{H} \sqcup \{\cIn, \cOut\}$.
  For planar gizmos, this decomposition must respect the cyclic order.
  Then $G$ is a \defn{leakily simply checkable} \(H\)
  if it satisfies the following conditions.%
  \begin{enumerate}
  \item If \(X \in H\), then \(X[\cIn \to \cOut] \in G\).
  \item For any traversal sequence \(Y \in G\),
    let \(\widetilde Y\) be the traversal sequence obtained from \(Y\)
    by removing any instances of the trivial traversals
    \(\cIn \to \cIn\) or \(\cOut \to \cOut\).
    Then either $\widetilde Y$ contains only traversals on $\locs{H}$, or $\widetilde Y$ is of the form $X[\ell_1 \to \ell_2, \ell_3 \to \ell_4, \ldots, \ell_{n-1} \to \ell_n]$ where $X \in H$,
    $\ell_1 = \cIn$, $\ell_n = \cOut$, and the locations $\ell_2, \dots, \ell_{n-1}$ (if $n > 2$) belong to $\locs{H}$.
  \end{enumerate}
\end{definition}

\section{Graph Orientation Reconfiguration}
\label{sec:GORP}

In this section, we give some general connections between NCL, a graph orientation problem, and the gadgets framework. We use this to show that the reconfiguration problem for a certain class of gadgets is \PSPACE-complete.

\subsection{Graph Orientation}

Graph Orientation (GO) problems are a subclass of constraint satisfaction problems defined on undirected graphs. The goal is to orient every edge of the graph such that certain local constraints at each vertex are satisfied.
An example of such a problem is ``1-in-3 GO'', in which the graph is 3-regular and there are three types of vertex constraints: (1)~the indegree is exactly~1; (2)~the indegree is exactly~2; (3)~the indegree is either 0 or~3.  Horiyama et al.~\cite{horiyama2012packing,horiyama2017tiling} showed that 1-in-3 GO is \NP-complete (in fact, ASP-complete).

Now we consider more general types of vertex constraints, which may not treat all their incoming edges the same.  A \defn{vertex type} on a set $L$ of \emph{locations} consists of a subset $U \subset 2^L$, which we think of as the valid sets of incoming edges.  For instance, the second type of vertex in 1-in-3 GO would be represented on locations $a, b, c$ by the set $U = \{\{a, b\}, \{a, c\}, \{b, c\}\}$.

For a fixed set of allowed vertex types, the \defn{Graph Orientation Problem} is as follows.  An instance of the problem is an undirected multigraph without self-loops, where each vertex $v$ is labeled with a vertex type $U_v$ from the allowed set, and the edges incident to $v$ are locally labeled with the locations of $U_v$.  A solution consists of an orientation of the edges so that the set of locations corresponding to incoming edges at each vertex $v$ is an element of $U_v$.
This problem can be thought of as a Boolean constraint satisfaction problem in which every variable occurs in exactly two clauses and exactly one appearance is negated.

In the \emph{planar} setting, each vertex type includes a cyclic ordering on its locations, and instances must be planar embeddings of graphs that are locally consistent with these cyclic orderings.

\subsection{GORP and NCL}

Next we define the \defn{Graph Orientation Reconfiguration Problem (GORP)} in which we are given an instance of GO, together with initial and final orientations of the graph that are both valid solutions.  The problem is to determine whether there exists a sequence of edge flips that takes the initial orientation to the final orientation while each intermediate orientation also satisfies all vertex constraints.
\defn{Planar GORP} is defined analogously in terms of planar GO.

An important special case of GORP is \defn{Nondeterministic Constraint Logic (NCL)}.
In NCL, every edge is assigned a nonnegative weight, and every vertex is assigned a target weight.
A valid configuration is an orientation of the edges such that the sum of the weights of incoming edges at each vertex is at least the vertex's target weight.
It turns out that just two vertex types capture the complexity of NCL.
An \defn{AND vertex} is a degree-3 vertex incident to two weight-1 edges and one weight-2 edge, with a target weight of~2.
An \defn{OR vertex} is a degree-3 vertex incident to three weight-2 edges, with a target weight of~2.

\begin{theorem}[\cite{NCL_TCS,hearn2009games}]\label{thm:ncl pspace}
  Nondeterministic Constraint Logic is \PSPACE-complete, even when restricted to simple planar graphs where every vertex is either an AND vertex or an OR vertex.
\end{theorem}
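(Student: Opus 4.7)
The plan is to establish PSPACE membership and PSPACE-hardness separately. Membership is immediate: a configuration of an NCL instance is an orientation of a polynomial-size graph, and validity of a single edge flip depends only on the weights at the two endpoints, so it is checkable in polynomial time. Reachability in the exponentially large configuration graph therefore lies in NPSPACE, which equals PSPACE by Savitch's theorem.

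For hardness I would reduce from TQBF. The strategy is to bootstrap a small library of composite NCL gadgets out of AND and OR vertices, then assemble them into an instance whose flip dynamics simulate the game semantics of a quantified Boolean formula $\forall x_1\, \exists x_2 \cdots \varphi$. I would first build a \emph{protected OR} (an OR that cannot be ``overloaded'' by an adversarial neighbor), a \emph{fanout} (duplicating a signal into two synchronized copies), and a \emph{latch} (storing one bit that can only be toggled when an external ``enable'' edge is directed inward). Each can be realized by a small network of AND and OR vertices whose correctness is verified by enumerating the reachable local orientations. Using these pieces, I would instantiate one latch per quantified variable, chain them in quantifier order through enable lines, and terminate with a clause-evaluation subnetwork that outputs a single ``satisfied'' edge routed to the designated target. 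The initial and target orientations are chosen so that reaching the target requires the existential player to respond to every universal assignment, with the reversibility of edge flips encoding the alternation of quantifiers.

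For the planarity restriction I would exhibit an NCL \emph{crossover gadget} built from AND/OR vertices: a planar sub-network whose only admissible flip sequences independently propagate the orientations of two pairs of boundary edges. Given such a crossover, any instance of the previous construction can be laid out planarly by replacing each crossing in a plane drawing with a copy of the crossover. The main obstacle will be verifying that the latch chain and the crossover cannot be short-circuited by unintended flip sequences. Concretely, one must show that every reachable intermediate orientation satisfies all vertex constraints \emph{and} that no flip sequence can set the satisfaction signal without first propagating through the full quantifier chain in order. This reduces to a careful case analysis of the reversible dynamics of each composite gadget, ensuring that the configuration graph of the NCL instance is a faithful simulation of the TQBF game tree.
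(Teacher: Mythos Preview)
The paper does not prove this theorem: it is stated with a citation to \cite{NCL_TCS,hearn2009games} and no argument is given. The result is imported as a black box to serve as the source problem for the reductions in later sections. So there is no ``paper's own proof'' to compare your proposal against.

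That said, your sketch is a reasonable outline of the actual Hearn--Demaine argument in the cited works: the original proof does reduce from QBF, does build latch and fanout gadgets from AND/OR vertices to encode variables and quantifier alternation, and does supply a planar crossover gadget made of AND/OR vertices to handle the planarity restriction. Where your proposal is vague is exactly where the real work lies---the case analysis showing that each composite gadget (especially the universal quantifier, which must cycle through both truth values, and the crossover) admits only the intended flip sequences. You correctly flag this as the ``main obstacle,'' but a full proof requires carrying it out rather than gesturing at it. For the purposes of this paper, however, none of that is needed: you should simply cite the result as the authors do.
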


\subsection{GORP Variants}

To show a connection between the agentless GORP problem and the agentful gizmo reconfiguration problem, we will actually define three variants of the GORP problem --- synchronous, asynchronous, and token ---
and show that they are equivalent under certain conditions.
Figure~\ref{fig:gorp variants} illustrates the three variants of GORP.

In the \defn{synchronous} variant described above, a single move consists of flipping the orientation of a single edge in the graph.

In the \defn{asynchronous} variant
(defined for NCL in \cite{viglietta2013partial}),
every edge is subdivided into two half-edges, each of which has its own independent orientation.
Each vertex only constrains the orientations of its incident half-edges.
A half-edge is oriented \defn{vertexwards} if it points towards its incident vertex, and \defn{edgewards} if it points away from its incident vertex.
A move consists of flipping the orientation of a single half-edge, subject to the constraint that the two halves of an edge cannot simultaneously point vertexwards.

In the \defn{token} variant of GORP,
every half-edge has its own orientation, like in asynchronous GORP,
but now we allow the two halves of an edge to simultaneously point vertexwards.
There is also a single \emph{token}, which can be either absent from the graph or located on one of the edges; the token can be thought of as an ``agent'' that travels around the graph and flips edges.
A move consists of one of the following:
\begin{enumerate}
\item Flipping any half-edge so that it points vertexwards.
\item Moving the token to any edge, provided that at least one of the halves of that edge points edgewards.
\item Flipping either half of the edge with the token.
\item Removing the token from the graph.
\end{enumerate}
The problem is to find a sequence of moves that brings the initial orientation to the final orientation, where the token is absent from the graph in both the initial and final states.
As always, the intermediate states must satisfy the vertex constraints.

\begin{figure}
  \centering
  \def\scale{2}
  \def\width{2.9cm}
  \subcaptionbox{\label{fig:gorp sync}\centering Synchronous}[\width]{
    \includegraphics[scale=\scale]{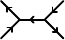} \\[.25cm]
    \includegraphics[scale=\scale]{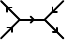}
  }
  \hspace{2cm}
  \subcaptionbox{\label{fig:gorp async}\centering Asynchronous}[\width]{
    \includegraphics[scale=\scale]{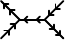} \\[.25cm]
    \includegraphics[scale=\scale]{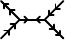} \\[.25cm]
    \includegraphics[scale=\scale]{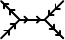}
  }
  \hspace{2cm}
  \subcaptionbox{\label{fig:gorp token}\centering Token}[\width]{
    \includegraphics[scale=\scale]{gorp/async_0} \\[.25cm]
    \includegraphics[scale=\scale]{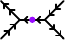} \\[.25cm]
    \includegraphics[scale=\scale]{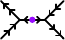} \\[.25cm]
    \includegraphics[scale=\scale]{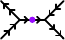} \\[.25cm]
    \includegraphics[scale=\scale]{gorp/async_2}
  }
  \caption{Flipping an edge's orientation in three different variants of GORP.}
  \label{fig:gorp variants}
\end{figure}

It turns out that all three variants of the problem are equivalent provided that every vertex type $U$ is \defn{upward-closed}, meaning that, if $S \subset S'$ and $S \in U$, then $S' \in U$ also. Informally, having more edges pointing in never causes a vertex to become unsatisfied. In particular, the vertex constraints in NCL satisfy this property.

\begin{lemma}
  \label{lem:gorp equivalence}
  Let $T = (V, E)$ be a multigraph without self-loops,
  where each $v \in V$ is labeled with an upward-closed vertex type $U_v$ and
  the edges incident to $v$ are locally labeled with the locations of $U_v$.
  Let $I$ and $F$ be initial and final orientations of $T$ that satisfy the vertices.
  Then the following are equivalent:
  \begin{enumerate}
  \item $I$ can be reconfigured to $F$ via synchronous GORP moves.
  \item $I$ can be reconfigured to $F$ via asynchronous GORP moves.
  \item $I$ can be reconfigured to $F$ via token GORP moves.
  \end{enumerate}
\end{lemma}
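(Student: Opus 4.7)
The plan is to prove the three-way equivalence by establishing synchronous $\Leftrightarrow$ asynchronous and asynchronous $\Leftrightarrow$ token separately. The upward-closed hypothesis will enter essentially only in the asynchronous $\Rightarrow$ synchronous direction.

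For synchronous $\Rightarrow$ asynchronous, I would simulate each synchronous flip of an edge $e = uv$ (from $u\to v$ to $v\to u$, say) by the pair of half-edge flips: first flip the vertexwards half of $e$ at $v$ to edgewards, then flip the edgewards half at $u$ to vertexwards. The intermediate state has $e$'s halves both edgewards, so no ``both vertexwards'' violation occurs. Vertex $u$'s incoming set is unchanged from the pre-flip synchronous state, and vertex $v$'s incoming set coincides with its post-flip incoming set, so both vertex constraints are satisfied by assumption on the synchronous endpoints.

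For asynchronous $\Rightarrow$ synchronous, I would define a projection $T_i$ of each asynchronous state $S_i$ into a fully oriented state by keeping every fully oriented edge in $S_i$ as is, and orienting every ``both edgewards'' edge $e$ in the direction it had the last time $e$ was fully oriented (well-defined since $S_0 = I$ is fully oriented). Three facts must be checked: (a) each $T_i$ satisfies every vertex constraint --- this is where upward-closedness does the work, since at every vertex $v$ the incoming set of $T_i$ is a superset of that of $S_i$ (we only added incoming edges by filling in ``both edgewards'' ones), so satisfaction at $S_i$ lifts to $T_i$; (b) consecutive $T_i, T_{i+1}$ differ by at most one edge flip, because the async move $S_i \to S_{i+1}$ touches only one edge $e$, and a ``fully $\to$ both edgewards'' move leaves $T_{i+1}$'s projection of $e$ equal to $T_i$'s by the ``last fully'' rule, while a ``both edgewards $\to$ fully'' move changes the orientation of $e$ by at most one flip; and (c) $T_0 = I$ and $T_n = F$ since $I, F$ are fully oriented. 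Deduplicating equal consecutive $T_i$'s yields a valid synchronous reconfiguration sequence from $I$ to $F$.

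For asynchronous $\Leftrightarrow$ token, in one direction, deleting all token operations (moves, removals) from a token sequence yields a valid asynchronous sequence on the same underlying orientation states. In the other direction, every asynchronous flip to vertexwards is already a legal token move with no token required; for each asynchronous flip from vertexwards to edgewards on edge $e$, the other half of $e$ must be edgewards just before the flip (otherwise the prior state would be the forbidden ``both vertexwards''), so $e$ has at least one edgewards half and I can insert a ``move token to $e$'', perform the flip with the token in place, then ``remove token'' --- no other token operations are needed.

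The hardest step is the asynchronous $\Rightarrow$ synchronous projection: it is the only place where upward-closedness is really used, and the subtlety is in picking a projection rule (``last fully orientation'') that simultaneously makes every $T_i$ a valid synchronous state and ensures that consecutive projections differ by at most one edge flip; arbitrary projections would generally fail the latter condition. The remaining directions reduce to straightforward local substitutions.
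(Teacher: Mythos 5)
Your synchronous $\Leftrightarrow$ asynchronous arguments are fine (your projection for asynchronous $\Rightarrow$ synchronous produces essentially the same synchronous sequence as the paper's construction, and your use of upward-closedness there is exactly where the paper uses it), and your asynchronous $\Rightarrow$ token direction is correct. The genuine gap is in token $\Rightarrow$ asynchronous, which you dismiss as ``deleting all token operations yields a valid asynchronous sequence.'' That is false as stated: the ``both halves cannot simultaneously point vertexwards'' restriction is a constraint of \emph{asynchronous} GORP only. In token GORP nothing forbids a state in which both halves of an edge point vertexwards (e.g.\ the token sits on an edge and flips both of its halves inward, which is legal and can even be useful since it satisfies both endpoints at once). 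If the token solution passes through such a state, the sequence obtained by forgetting the token contains an illegal asynchronous move, so your conversion breaks.

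This is precisely the point where the paper does its real work. It first argues that in any token solution, both halves of an edge $e$ can be vertexwards only while the token is located at $e$: otherwise the token can never return to $e$ (moving the token to $e$ requires an edgewards half) and $e$'s halves can never change again, contradicting that the final state is a genuine orientation. It then rewrites the moves made during each visit of the token to an edge $e$ --- first all vertexwards flips of half-edges not on $e$, then the edgewards flips of $e$'s halves, then the vertexwards flips of $e$'s halves --- checking this is still a valid token solution with the same endpoints, and that in the rewritten solution the two halves of $e$ are never simultaneously vertexwards. Only after this normalization can the token be forgotten to obtain an asynchronous solution. Your proposal needs an argument of this kind (or an equivalent normalization) to close the token $\Rightarrow$ asynchronous direction; the other three conversions you give are sound.
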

\begin{proof}
  We show how to convert solutions between these problems in a cycle.

  First, we show how to convert an asynchronous solution to a synchronous one.
  Each time we flip a half-edge to point vertexwards, replace that move with a synchronous move orienting the full edge towards that vertex.
  By the definition of the asynchronous problem,
  the other half-edge of that edge must be pointing edgewards.
  Thus we maintain the invariant that
  the set of edges $S_v$ pointing towards $v$ is a superset of the set of edges pointing towards $v$ in the asynchronous solution.
  Because each $U_v$ is upward-closed, $S_v$ is still an element of $U_v$, so this is still a valid move sequence.

  Next, we show how to convert a synchronous solution to a token GORP solution. For each synchronous edge flip that points the edge $e$ away from vertex $u$ and towards vertex $v$, replace it with the following sequence of token GORP moves: move the token to $e$, flip the half-edge incident to $u$ edgewards, flip the half-edge incident to $v$ vertexwards, and then remove the token from the graph.

  Finally, we show how to convert a token solution to an asynchronous GORP solution.
  We can \emph{almost} do this by simply forgetting about the token.
  The only issue is that token GORP allows orienting both halves of an edge vertexwards.
  We will show how to modify the solution to avoid doing this.

  We claim that,
  in any reconfiguration from $I$ to $F$ by token GORP moves,
  an edge cannot have both halves pointing vertexwards except while the token is located at that edge.
  Suppose for contradiction that both halves of an edge $e$ are oriented vertexwards while the token is not at~$e$.
  Then it is impossible for the token to ever revisit $e$, so the orientations of $e$'s half-edges can never be changed.
  As a consequence, the final orientation $F$ (which by definition assigns a single orientation to~$e$) is unreachable, a contradiction.

  Now suppose that the given token GORP solution at some point makes both halves of an edge $e$ point vertexwards.
  By the claim, this only occurs while the token is located at~$e$.
  Let $X$ be the subsequence of moves that occur during this visit of the token to the edge $e$.
  Let $S$ and $T$ be the states of the graph before and after $X$, respectively.
  We replace $X$ with the following sequence of moves.
  First, we make all moves in $X$ that re-orient a half-edge not in~$e$;
  by definition of token GORP moves, these are all made vertexwards.
  Next, we point edgewards any half of $e$ that is edgewards in~$T$.
  Finally, we point vertexwards any half of $e$ that is vertexwards in~$T$.
  Notably, during this visit of the token to $e$, every edgewards flip of $e$ occurs before every vertexwards flip.

  Next we show that performing this replacement preserves that we have a token GORP solution.
  Because our vertex types are upward-closed, flipping any half-edge vertexwards is always legal, so the only moves we need to check are edgewards flips of halves of $e$.
  Such moves are legal because the token is at $e$, and the vertex to which the half-edge connects is satisfied because it is now in the same state as in~$T$.
  The final state after performing this replacement is exactly $T$, because the only moves that can happen while the token is at $e$ are vertexwards flips not on $e$ (which we replicated) and arbitrary flips of $e$ (which yield the same final state as our new solution).

  We perform this replacement for every visit of the token to an edge.
  The result is a new token GORP solution that satisfies,
  during every visit of the token to an edge~$e$,
  that all edgewards flips of $e$ occur before any vertexwards flips.
  By the claim, we thus never have both halves of an edge $e$ pointing vertexwards.
  Therefore we can convert this solution to an asynchronous solution by simply forgetting about the token.
\end{proof}

Another useful fact about GORP with upward-closed vertex types is that
the problem is unaffected by subdividing any edge by a \defn{wire vertex}
--- a vertex on two locations $\{a, b\}$ whose type is $U_{\text{wire}} = \{\{a\}, \{b\}, \{a, b\}\}$:

\begin{corollary}
  \label{cor:gorp wire equivalence}
  Consider a GORP instance in which all vertices have upward-closed vertex types.
  Then any edge can be subdivided by a wire vertex,
  and the resulting instance has a solution if and only if the original did.
\end{corollary}
\begin{proof}
  The constraint on a wire vertex is exactly the same as the constraint
  on half-edges in asynchronous GORP.
  So we can convert between solutions on the original and solutions on the modified instance
  in the same way as we convert between synchronous and asynchronous solutions
  in Lemma~\ref{lem:gorp equivalence},
  but only considering a single edge.
\end{proof}

\subsection{GORP Gizmos}

In this section, we define a relationship between upward-closed GORP problems and gizmo (targeted set) reconfiguration.  We will describe sufficient conditions for a gizmo to correspond to a GORP vertex type, which will allow a reduction from the GORP problem to the corresponding gizmo reconfiguration problem.

\label{sec:open/closed intuition}
The intuition behind the reduction is as follows.
For every GORP vertex, there will be a corresponding \emph{GORP gizmo} that constrains the motion-planning agent to move in the same way as the token in token GORP.
Every GORP gizmo will have a location $x$ that the agent can freely visit at any time, and a location corresponding to each of the incident half-edges in the GORP problem.
In particular, the state $S$ of a vertex is both the set of locations whose corresponding half-edges are oriented vertexwards and the set of locations through which the agent cannot leave the gizmo.
To flip a half-edge toward a vertex, the agent will enter the vertex's gizmo through $x$, close the corresponding location (so it can no longer leave through that location), then leave through $x$.
The only way to re-open that location will be by entering the vertex from that corresponding edge.
To visit an edge, the agent will enter a vertex's gizmo through $x$ and then leave through an open location into the corresponding edge.
This approach enforces the rule that a token may move to an edge only if at least one of the edge's halves points edgewards.
To enforce the GORP constraint of the vertex, the GORP gizmo must maintain an invariant that enough locations are always closed.

Now we formalize this notion of correspondence between GORP vertex types and gizmos.
Let $U$ be an upward-closed GORP vertex type on location set $L$ with a designated initial state $I \in U$, and suppose $G$ is a gizmo whose location set is $L(G) = L \sqcup \{x\}$.
In the planar setting, we require the cyclic orderings to agree on~$L$.
The next two definitions capture the two directions needed for correspondence.
``Soundness'' says that any traversal sequence of $G$ corresponds to a valid history of moves between states of $U$.
``Completeness'' says that any such valid history of moves can be realized by a traversal sequence of $G$.

\begin{definition}
  \label{def:sound}
  The gizmo $G$ is \defn{sound} with respect to $(U, I)$ if the following conditions hold for every traversal sequence $X \in G$.
  Write $X$ as a sequence of $n$ traversals $[X_1, \dots, X_n]$.
  We require that there exists a sequence of $n+1$ states $S_0, \dots, S_n$ such that:
  \begin{enumerate}
  \item Each $S_i$ is an element of $U$.
  \item $S_0 \subset I$.
  \item If $X_{i+1} = \ell \to a$ for any $\ell \ne a$, then $a \notin S_i$.
  \item If $a \in S_i$ but $a \notin S_{i+1}$, then $X_{i+1} = a \to \ell$ for some $\ell \in \locs{G}$.
  \end{enumerate}
\end{definition}
\begin{definition}
  \label{def:complete}
  The gizmo $G$ is \defn{complete} with respect to $(U, I)$ if the following conditions hold for every sequence $S_0, \dots, S_n$ of states of $U$ starting with $S_0 = I$,
  where each state differs from the previous by a single element.
  We require that $X \in G$, where $X = X_1 X_2 \dots X_n C$ is defined as follows:
  \begin{enumerate}
  \item If $S_{i+1} = S_i \sqcup \{a\}$, then $X_{i+1} = [x \to a, x \to x]$.
  \item If $S_{i+1} = S_i \setminus \{a\}$, then $X_{i+1} = [x \to x, a \to x]$.
  \item $C$ is the concatenation of the traversals $x \to a$ for each $a \notin S_n$ (in any order), followed by $x \to x$.
  \end{enumerate}
\end{definition}

If $G$ is both sound and complete with respect to $(U, I)$, then we say $G$ \defn{corresponds} to $(U, I)$.

Intuitively, the $x \to a$ and $a \to x$ traversals in the above definitions correspond to the agent traveling along an edge between two vertices and flipping that edge. The $x \to x$ traversals are not strictly needed to get completeness for Theorem~\ref{thm:gorp reduction}, but they allow us to build gadgets that need the agent to return to $x$ to fix the internal state of the gadget after flipping an edge without breaking completeness, which is useful for our reduction to Push-1.

Figure~\ref{fig:NCL AND} shows an example of a Push-1 gizmo corresponding to the AND vertex, which is the upward-closure of $\{\{a, b\},\{c\}\}$.
In this example, each of the locations $a, b, c$ has a block in front of it (and thus no traversal can exit through that location) precisely when that location is in the state $S$.

In the planar setting, we also need a crossover gizmo to allow the agent to reach all vertices of the graph.
Intuitively, this gadget has a passage between $x$ and $y$ that is always available to the agent,
while the other two locations $a$ and $b$ behave like a GORP wire once $x$ and $y$ are identified.
The cyclic order requires these two roles to cross geometrically.
\begin{definition}\label{def:gorp crossover}
  A gizmo $H$ on the location set $\{a, b, x, y\}$ is a \defn{GORP crossover} if it satisfies the following properties:
  \begin{enumerate}
  \item For any traversal sequence $XY \in H$, we also have $X[x \to y]Y \in H$ and $X[y \to x]Y \in H$.
  \item The quotient gizmo $H/\{(x, y)\}$ (from Definition~\ref{def:quotient}) corresponds to the GORP wire vertex on $\{a,b\}$ with initial state~$\{a\}$.
  \item The pairs $\{a, b\}$ and $\{x, y\}$ are interleaved in the cyclic order.
  \end{enumerate}
\end{definition}

Now we can state the main result of this section:

\begin{theorem}
  \label{thm:gorp reduction}
  Let $\{U_i\}$ be a collection of upward-closed GORP vertex types, and let $I_i, F_i \in U_i$ specify their initial and final states respectively.
  Suppose $\{G_i\}$ is a collection of gizmos such that each $G_i$ corresponds to $(U_i, I_i)$.
  Then there is a polynomial-time reduction from GORP\/\footnote{By Lemma~\ref{lem:gorp equivalence}, the specific variant of GORP is irrelevant.} using vertices with types $U_i$ and initial/final states $I_i, F_i$
  to the gizmo targeted set reconfiguration problem with $\{\pss{G_i}{C_i}\}$, where $C_i$ is the concatenation of traversals $x \to a$ for each location $a \notin F_i$, followed by $x \to x$.

  Furthermore, if $H$ is a GORP crossover, then there is a polynomial-time reduction from planar GORP to the planar gizmo targeted set reconfiguration problem on $\{\pss{G_i}{C_i}, \allowbreak \pss{H}{[x \to a, x \to x]}, \allowbreak \pss{H}{[x \to b, x \to x]}\}$.
\end{theorem}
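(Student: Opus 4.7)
The plan is to have the agent in the gizmo simulation play the role of the token in token GORP, with per-gizmo internal states tracking per-vertex GORP states and the postselection $C_i$ enforcing the target state $F_i$ at each vertex. For the non-planar reduction I would, for each vertex $v$ of type $U_i$, instantiate a copy $G_v$ of $\pss{G_i}{C_i}$; for each edge $e=(u,v)$ with local labels $a_u,a_v$, identify location $a_u$ of $G_u$ with $a_v$ of $G_v$; and attach $s,t$ via $x$ locations of two designated entry vertices (adding auxiliary wire/dummy vertices via Corollary~\ref{cor:gorp wire equivalence} if needed). In the forward direction I would invoke Lemma~\ref{lem:gorp equivalence} to switch to the token variant of GORP, then realize each GORP move by the agent: token movement along an edge corresponds to the agent walking via a traversal $\ell\to a$ in $G_v$ with $a\notin S_v$ (sanctioned by soundness), while state-changing flips at $v$ correspond to completeness-guaranteed traversals $[x\to a,x\to x]$ or $[x\to x,a\to x]$. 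At the end, each $G_v$ is in state $F_v$, and applying completeness to this terminal state sequence (whose trailing $C_v$ is specified by the definition of complete) yields an agent trace $Z_v$ with $Z_vC_v\in G_v$, i.e., $Z_v\in\pss{G_v}{C_v}$.

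For the reverse direction I would take a successful agent trajectory, project it to $Z_v$ on each $G_v$, and apply soundness to $Z_vC_v$ to extract a per-vertex state sequence valid in $U_v$. The presence of each $x\to a$ with $a\notin F_v$ in $C_v$ forces $a\notin S_v$ at the start of the check phase, so $S_v\subseteq F_v$ at the end of $Z_v$. I would read off a global orientation history from the agent's walk: when the agent crosses a shared location between $G_u$ and $G_v$, this mirrors a token move on that edge, and any per-vertex state change mirrors an edge flip. Upward-closure of each $U_v$ then lets me fill in any gap between $S_v\subsetneq F_v$ and $F_v$ with extra GORP moves, yielding a valid solution from $I$ to $F$.

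For the planar case, I would first apply Corollary~\ref{cor:gorp wire equivalence} to subdivide every edge with a wire vertex, then realize each wire vertex as a postselected crossover: $\pss{H}{[x\to a,x\to x]}$ if the wire's target state is $\{b\}$ and $\pss{H}{[x\to b,x\to x]}$ if it is $\{a\}$, arranging labels so the initial state is always $\{a\}$ (matching the definition of $H$). The quotient $H/\{(x,y)\}$ being a wire vertex makes these postselected gizmos sound-and-complete realizations of wire vertices, and the first property of a GORP crossover gives the agent a freely traversable $x\leftrightarrow y$ planar channel that can reach every vertex; the forward/reverse correspondence is otherwise the same. The hardest step will be the reverse direction, specifically arguing that the per-gizmo state sequences---guaranteed only existentially by soundness---can be simultaneously chosen to assemble a globally consistent GORP orientation history from $I$ to $F$; upward-closure is the key tool for reconciling any $S_v\subsetneq F_v$ with $F_v$, while the agent's walk across shared locations provides the bookkeeping linking per-vertex traces into a coherent global orientation.
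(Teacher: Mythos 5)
Your overall strategy (agent $=$ token, per-vertex gizmos $\pss{G_i}{C_i}$, postselection forcing $S_v\subseteq F_v$, upward-closure to top off to $F_v$, Corollary~\ref{cor:gorp wire equivalence} plus the crossover for planarity) matches the paper in spirit, but your construction has a genuine gap in how the agent moves around the network. The paper identifies \emph{all} the $x$ locations of all vertex gizmos into a single global location $x$ and sets $s=t=x$; being at $x$ is exactly ``token absent from the graph,'' so the agent can reach any vertex for free, and every gizmo traversal it ever needs is one of the completeness-guaranteed traversals $[x\to a, x\to x]$, $[x\to x, a\to x]$ (plus $C_v$). You instead attach $s,t$ to the $x$ locations of two designated entry vertices and leave the other $x$'s unconnected, and you claim the agent can travel between gizmos ``via a traversal $\ell\to a$ in $G_v$ with $a\notin S_v$ (sanctioned by soundness).'' Soundness sanctions nothing: it is a safety condition constraining traversal sequences that happen to lie in $G_v$, not an existence guarantee; only completeness guarantees traversals, and those are all anchored at $x$. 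So in your forward direction, to realize a flip of a distant edge the agent would have to thread through intermediate vertex gizmos between two of their edge locations, which (i) is not guaranteed to be possible by correspondence, and (ii) even when realized via the completeness moves $[x\to x,c\to x]$ would force a state change $S_w-\{c\}$ at each intermediate vertex $w$, which may violate $U_w$ (removing incoming edges is exactly what upward-closure does \emph{not} protect). This is precisely the obstruction the global $x$-identification removes.

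The same omission propagates to your planar case. In the paper, the crossover gizmos are not (primarily) there to realize the wire vertices obtained from subdividing edges; they are placed exactly where the wires implementing the global $x$-identification cross original graph edges, and each such crossover simultaneously acts as a wire vertex on the crossed edge (hence the appeal to Corollary~\ref{cor:gorp wire equivalence}) and as a free $x\leftrightarrow y$ conduit for the agent. You subdivide every edge with a crossover but never describe wiring the vertex gizmos' $x$ locations into a connected $x$/$y$ channel, so even planarly your agent has no guaranteed way to reach the vertex whose state it must change. Fixing your write-up essentially amounts to adding the paper's $x_u\sim x_v$ identifications (non-planar case) and then routing that identification network planarly through crossovers; your reverse direction and the $C_i$/postselection bookkeeping are otherwise in line with the paper's Lemmas~\ref{lem:gorp sound}--\ref{lem:gorp planar}.
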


We begin by constructing the gizmo reconfiguration instance for the reduction in the nonplanar setting.
Suppose we are given an instance of the graph orientation reconfiguration problem.
In particular, we are given a multigraph $T = (V, E)$
where each $v \in V$ is labeled with a vertex type $U_v$,
and the edges incident to $v$ are locally labeled with the locations of $U_v$.
We are also given initial and final orientations $I, F$
that are consistent at each vertex $v$ with $I_v$ and $F_v$, respectively.

We build our instance of the gizmo reconfiguration problem as follows.
For each vertex $v$, we have a gizmo $g_v = \pss{G_v}{C_v}$.
We define the equivalence relation $\sim$ such that,
for every edge $(u, v)$ labeled with location $a$ of $u$ and location $b$ of $v$,
there is a corresponding equivalence $a \sim b$ between location $a$ of $g_u$ and location $b$ of $g_v$.
Let $x_v$ denote location $x$ of~$g_v$.
We also identify the locations $x_u \sim x_v$ for every pair of vertices $u, v$,
forming a single location (equivalence class) $x$ in our simulation.
We define both the initial and the final location to be~$x$.

To finish the nonplanar reduction, it remains to show that
a solution to the constructed gizmo reconfiguration instance exists
if and only if a solution to the GORP instance exists.
We prove each implication in its own lemma:

\begin{lemma}
  \label{lem:gorp sound}
  If there is a solution to the constructed gizmo targeted set reconfiguration instance,
  then there is a solution to the GORP instance.
\end{lemma}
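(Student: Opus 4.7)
The plan is to produce a token GORP solution to the given instance, which by Lemma~\ref{lem:gorp equivalence} suffices because every vertex type $U_v$ is upward-closed. The intended correspondence between the gizmo simulation and the graph $T$ matches agent/token positions with gizmo/vertex states: the global location $x$ corresponds to the token being off the graph, each shared location of an edge $(u,v)$ corresponds to the token sitting on that edge, and the local vertex state at $v$ is recorded by a state sequence extracted via soundness of $G_v$.

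First, for each $v\in V$ I extract the subsequence $T_v$ of the agent's trajectory consisting of traversals performed on the particular copy of $g_v = \pss{G_v}{C_v}$. Because the trajectory solves the targeted set reconfiguration problem, it leaves every $g_v$ in an accepting state, so $T_v\in g_v$, and hence $T_v C_v\in G_v$ by the definition of postselection. Applying soundness of $G_v$ with respect to $(U_v,I_v)$ to the sequence $T_v C_v$ then yields a state sequence $S_0^v,S_1^v,\ldots\in U_v$ compatible with it in the sense of Definition~\ref{def:sound}. These $S_i^v$ will be my bookkeeping of the intended local vertex states across the constructed token GORP trajectory.

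Next I walk through the agent's trajectory and translate each individual traversal $\ell\to\ell'$ on $g_v$ into a short block of token GORP moves at the corresponding vertex. Each block has the form: perform all the vertexwards flips corresponding to locations newly added to $S^v$ by this step (always legal because $U_v$ is upward-closed); then, if soundness witnesses the removal of some element $b$, flip the $b$-half of the incident edge to edgewards (legal because the removal forces the traversal to start at $b$, so the token is already sitting on that edge); finally, move or remove the token as appropriate to match the agent's new position (legal because soundness guarantees $\ell'\notin S^v$ at the moment of the move, so the $v$-side half of the target edge is edgewards). Traversals involving the location $x$ use the token-placement and token-removal moves in place of edge-to-edge token movement.

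The main obstacle is verifying that these moves remain globally legal and then reconciling the endpoints. Upward-closedness does most of the heavy lifting: any vertexwards flip keeps each vertex state in $U_v$, and the single edgewards flip per block produces exactly $S_{i+1}^v\in U_v$. The definition of $C_v$ forces the state $S^v$ after $T_v$ to satisfy $a\notin S^v$ for every $a\notin F_v$, so $S^v\subseteq F_v$, and a final batch of free vertexwards flips brings the graph to exactly $F$; a symmetric argument at the start reconciles $I_v$ with any smaller $S_0^v$. The most delicate bookkeeping is ensuring we never create a non-token edge with both halves vertexwards (which would later block a required token movement); this follows because the synchronous validity of $I$ gives the invariant initially, and the rule that we only flip a half vertexwards when soundness adds that element to the local state causes each such flip to be paired, via the shared location on the opposing gizmo, with a prior edgewards flip on the other side.
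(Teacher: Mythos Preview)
Your overall approach mirrors the paper's: extract the per-vertex subsequence $T_v$, apply soundness of $G_v$ to $T_vC_v$ to obtain a witnessing state sequence, translate traversal by traversal into token GORP moves, and reconcile the endpoints separately. The gap is in the order of operations inside each block and the justification you give for the token move.

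You perform all vertexwards flips first, then the possible edgewards flip, and only then move the token to the edge at $\ell'$. Your justification, ``soundness guarantees $\ell'\notin S^v$ at the moment of the move'', is not correct: soundness only gives $\ell'\notin S_i^v$, the state \emph{before} the traversal, whereas after your vertexwards flips the local state at $v$ is already $S_{i+1}^v$, and nothing in Definition~\ref{def:sound} prevents $\ell'\in S_{i+1}^v$. In that case the $v$-half of the target edge is vertexwards when you try to move the token, and the move can be illegal. Your invariant argument does not rescue this: the claim that every vertexwards flip at $v$ is ``paired with a prior edgewards flip on the other side'' has no support in soundness, which places no constraint whatsoever on which locations may be \emph{added} to $S^v$ at a step---additions need not involve the shared location, nor any traversal of the opposing gizmo.

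The paper avoids this by moving the token \emph{before} the vertexwards flips: at that moment the state at $v$ is still contained in $S_i^v$, so soundness genuinely gives $\ell'\notin S^v$, the $v$-half of the target edge is edgewards, and the move is legal regardless of the state at the other endpoint. (In particular, the paper neither claims nor needs the ``no non-token edge with both halves vertexwards'' invariant.) A smaller point: your ``symmetric argument'' for the initial reconciliation is not actually symmetric---going from $I_v$ down to $S_0^v$ requires edgewards flips, which need the token on the relevant edge; the paper handles this by observing that $I$ is a synchronous orientation, so every edge already has an edgewards half and the token can reach it.
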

\begin{proof}
  Suppose we have a solution $X = [X_1, X_2, \dots, X_{|X|}]$ to the gizmo problem.
  We will show that there exists a sequence of token GORP moves that reconfigures $I$ to $F$.
  The token will follow the path defined by the traversal sequence~$X$,
  where the special location $x$ represents the token being absent from the original graph.
  We will define a sequence $Q_0, \dots, Q_{|X|}$ of token GORP configurations, i.e., half-edge orientations plus token location.
  We will then show that, for each $0 \le t < |X|$, there is a sequence of token GORP moves that reconfigures $Q_t$ to $Q_{t+1}$ while moving the token along the traversal $X_{t+1}$.
  We will also show that there are sequences of token GORP moves that reconfigure $I$ to $Q_0$ and $Q_{|X|}$ to $F$, while leaving the token at $x$.
  Altogether, this amounts to a solution to the token GORP instance.

  We start by defining $Q_t$.
  The token in $Q_t$ is wherever the agent is after $t$ traversals;
  or absent from the graph if the agent is at location $x$.
  We specify the half-edge orientations in $Q_t$ by specifying,
  for each vertex~$v$, the set $Q_t[v]$ of incident half-edges
  pointing vertexwards.

  For each vertex $v$, let $Y_v$ be the subsequence of $X$ that consists of traversals of $g_v$.
  By definition of simulation, we have $Y_v \in g_v$,
  so $Y_v C_v \in G_v$.
  Let $n = |Y_v|$ and $m = |C_v|$.
  Because each $G_v$ corresponds to $(U_v, I_v)$ by assumption of Theorem~\ref{thm:gorp reduction}, $G_v$ is sound with respect to $(U_v, I_v)$.
  Thus there exists a corresponding sequence $S_0, \dots, S_{n+m}$ of states where $S_0 \subset I_v$, and this sequence satisfies the soundness conditions (Definition~\ref{def:sound}).
  Because $C_v$ is defined as the concatenation of traversals $x \to a$ for each $a \notin F_v$ followed by $x \to x$,
  it follows from soundness that $S_n \subset F_v$.

  We can consider the sequence $S_0, \dots, S_n$ as describing
  orientations of the half-edges incident to $v$ at the times between traversals in $Y_v$.
  At each integer time $0 \le t \le |X|$,
  define $Q_t[v] = S_i$
  where $i$ is the number of traversals of $Y_v$ that are among the first $t$ traversals of $X$.
  Then $Q_t$ satisfies the properties
  \begin{align*}
    Q_0[v] &\subset I_v, \\
    Q_{|X|}[v] &\subset F_v, \\
    Q_t[v] &\in U_v.
  \end{align*}

  Now we show that there are token GORP moves connecting the $Q_t$ in sequence.
  For each time $0 \le t < |X|$, there is a single vertex $v$ where $Q_t[v]$ may differ from $Q_{t+1}[v]$, corresponding to a single traversal $\ell_1 \to \ell_2$ of $G_v$.
  We wish to find a sequence of token GORP moves that reconfigures $Q_t[v]$ to $Q_{t+1}[v]$, and moves the token from $\ell_1$ to $\ell_2$.
  We can do this as follows.
  Let $S$ denote the state of vertex $v$ during this sequence, so that initially $S = Q_t[v]$.
  \begin{enumerate}
  \item
    First, if $Q_t[v]$ is not a subset of $Q_{t+1}[v]$, then $Q_t[v] \setminus Q_{t+1}[v] = \{\ell_1\}$ where $\ell_1 \ne x$.  Because the token is located at $\ell_1$, we can flip $\ell_1$ edgewards, removing it from $S$.
  \item If $\ell_2=x$, then we remove the token from the graph. Otherwise, if $\ell_1 \ne \ell_2$, then we move the token from $\ell_1$ to $\ell_2$.  This move is allowed because $\ell_2 \notin S$.
  \item Now we know $S \subset Q_{t+1}[v]$.  We flip every half-edge in $Q_{t+1}[v] \setminus S$ vertexwards, adding them to $S$ until it equals $Q_{t+1}[v]$.
  \end{enumerate}

  Next consider the state $I$.  We know that each $Q_0[v] \subset I_v$, so we must find a sequence of token GORP moves that flips each half-edge in $I_v \setminus Q_0[v]$ edgewards.
  We can accomplish this goal provided that we are allowed to move the token to each such edge.
  This is the case because $I$ is an orientation, so every edge already has an edgewards half-edge, and edgewards flips preserve this property.

  Finally, consider the state $Q_{|X|}$.  At time $|X|$ the token is located at $x$.
  We want a sequence of token GORP moves that reconfigures $Q_{|X|}$ to $F$ and that leaves the token at $x$.
  Because each $Q_{|X|}[v] \subset F_v$, we can just flip every half-edge in $F_v \setminus Q_{|X|}[v]$ vertexwards without moving the token at all.
\end{proof}

\begin{lemma}
  \label{lem:gorp complete}
  If there is a solution to the GORP instance,
  then there is a solution to the constructed gizmo targeted set reconfiguration instance.
\end{lemma}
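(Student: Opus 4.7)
The plan is to reverse the argument of Lemma~\ref{lem:gorp sound}: given a GORP solution, we build an explicit agent walk in the simulation that leaves every vertex gizmo in an accepting state. By Lemma~\ref{lem:gorp equivalence} we may take the solution to be a synchronous sequence of edge flips $f_1, \dots, f_N$ transforming $I$ to $F$. For each vertex $v$, the subsequence of flips incident to $v$ induces a sequence of states $I_v = S^v_0, S^v_1, \dots, S^v_{n_v} = F_v$ in which each consecutive pair differs by a single element (the half-edge at $v$ whose orientation changed). By completeness of $G_v$ with respect to $(U_v, I_v)$, the canonical traversal sequence $X^v_1 X^v_2 \cdots X^v_{n_v} C_v$ lies in $G_v$, so $X^v_1 \cdots X^v_{n_v} \in g_v = \pss{G_v}{C_v}$.

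The agent realizes each $f_t$ as a short round trip. Suppose $f_t$ reverses the edge between $u$ and $v$ so that it points toward $v$ afterwards; let $\alpha \in \locs{g_u}$ and $\beta \in \locs{g_v}$ be the locations identified under $\sim$. Before $f_t$ the current state at $u$ contains $\alpha$ and the current state at $v$ does not contain $\beta$, so the state changes prescribed by completeness are $X^v_j = [x \to \beta, x \to x]$ (adding $\beta$) and $X^u_k = [x \to x, \alpha \to x]$ (removing $\alpha$). Starting at $x$, the agent traverses $x \to \beta$ on $g_v$, ending at the shared location $\alpha \sim \beta$, then traverses $\alpha \to x$ on $g_u$, returning to $x$; the trivial $x \to x$ traversals on $g_u$ and $g_v$ required to complete these canonical patterns can be inserted freely before and after this while the agent sits at $x$. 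Performing this procedure for $f_1, \dots, f_N$ in the order given by the synchronous solution yields a walk from $x$ back to $x$ whose projection onto each gizmo $g_v$ is exactly $X^v_1 \cdots X^v_{n_v}$, which lies in $g_v$ by completeness; hence the simulation accepts $[x \to x]$.

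In the planar setting we first subdivide every edge of the GORP instance with a wire vertex via Corollary~\ref{cor:gorp wire equivalence}, then realize each wire by a GORP crossover gizmo $H$ at every planar crossing; the postselected variants $\pss{H}{[x \to a, x \to x]}$ and $\pss{H}{[x \to b, x \to x]}$ ensure that each wire transmits the agent freely in both directions, so the round-trip construction carries over with the agent simply routed through crossovers as needed. The only nontrivial thing to verify is that each per-vertex projection really matches the canonical sequence from completeness; this is immediate because a synchronous flip affects exactly one location at each of its two endpoints and our round trip touches precisely those two locations in the correct order.
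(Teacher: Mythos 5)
Your proof is correct and follows essentially the same route as the paper: take the synchronous solution, realize each edge flip as a short $x \to \beta$, $\alpha \to x$ round trip (with trivial $x \to x$ traversals inserted), and observe that each per-gizmo projection is exactly the canonical sequence from Definition~\ref{def:complete}, so completeness gives $Y_v C_v \in G_v$, i.e.\ $Y_v \in \pss{G_v}{C_v}$. The closing paragraph about the planar setting is beyond the scope of this lemma (the paper handles planarity separately in Lemma~\ref{lem:gorp planar}) but does not affect the correctness of the argument.
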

\begin{proof}
  Consider the sequence $M = m_1m_2\cdots m_n$ of synchronous GORP moves that solves the GORP instance.
  We will build a corresponding traversal sequence $X = X_1X_2\cdots X_n$ that solves the gizmo problem.

  Suppose the $i$th move $m_i$
  consists of flipping an edge that initially points at location $a$ of~$u$,
  so that it points in the opposite direction toward location $b$ of~$v$.
  Then the traversal sequence corresponding to $m_i$ is
  \[X_i = [x_u \to x_u, x_v \to b, a \to x_u, x_v \to x_v].\]

  We claim that $X = X_1X_2\cdots X_n$ solves the gizmo targeted set reconfiguration instance.
  First, it is a valid traversal sequence, because each $X_i$ starts and ends with $x_u \sim x_v$, whose equivalence class is also where the agent starts and ends overall, and in the middle of $X_i$, $a \sim b$.
  Now let $Y_v$ be the subsequence of $X$ corresponding to the traversals of $g_v$.
  For each state change of $U_v$, there are two traversals in $Y_v$,
  which are exactly those
  in Definition~\ref{def:complete} corresponding to the state change.
  Because $G_v$ is complete, we have $Y_v C_v \in G_v$, which means $Y_v \in \pss{G_v}{C_v}$.
\end{proof}

Lemmas~\ref{lem:gorp sound} and~\ref{lem:gorp complete} together
prove Theorem~\ref{thm:gorp reduction} in the nonplanar setting.
Now suppose we are given a planar GORP instance~$T$.
We begin with the nonplanar construction above,
most of which has the structure of the graph of $T$, which is planar.
The only issue is the identifications $x_u \sim x_v$ for each pair of vertices $u, v$:
these may cross over the edges of the original graph.

We fix this issue by replacing each crossing between a new identification and an original-graph edge $e$ with a gizmo $h = \pss{H}{[x \to \ell, x \to x]}$, where $H$ is a GORP crossover.
We orient $h$ so that the initial orientation of $e$ points from $a$ to $b$ (because the initial state in Definition~\ref{def:gorp crossover} is $\{a\}$), and we choose $\ell \in \{a, b\}$ so that the final orientation of $e$ points towards~$\ell$.
This replacement gives us a planar gizmo reconfiguration instance, shown in Figure~\ref{fig:planar gorp}.
Now we show that this planar gizmo reconfiguration instance is equivalent to the original planar GORP instance.

\begin{figure}
  \centering
  \def\scale{2.5}
  \def\width{6cm}
  \subcaptionbox{\label{fig:planar gorp instance}\centering Planar GORP}[\width]{
    \includegraphics[scale=\scale]{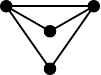}
  }
  \subcaptionbox{\label{fig:planar gorp translation}\centering Planar gizmo targeted set reconfiguration}[\width]{
    \begin{overpic}[scale=\scale]{gorp/planar_1}
      \put(7,64){\makebox(0,0)[c]{$g_1$}}
      \put(88,64){\makebox(0,0)[c]{$g_2$}}
      \put(47.5,41){\makebox(0,0)[c]{$g_3$}}
      \put(47.5,6){\makebox(0,0)[c]{$g_4$}}
      \put(47.5,105){\makebox(0,0)[c]{\Large $x$}}
      \put(47,65){\makebox(0,0)[c]{\footnotesize $h$}}
      \put(34,65){\makebox(0,0)[c]{\footnotesize $h$}}
      \put(30.5,51){\makebox(0,0)[c]{\footnotesize $h$}}
    \end{overpic}
  }
  \caption{Transforming an instance of planar GORP into a corresponding instance of planar gizmo reconfiguration, using GORP crossovers $h$ to connect all $x$ locations of the gizmos $g_i$.}
  \label{fig:planar gorp}
\end{figure}

\begin{lemma}
  \label{lem:gorp planar}
  There is a solution to the constructed planar gizmo reconfiguration instance
  if and only if there is a solution to the planar GORP instance.
\end{lemma}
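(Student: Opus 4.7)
My plan is to reduce both directions to the non-planar result by passing through an auxiliary GORP instance $T'$ obtained from $T$ by subdividing every crossed edge with a wire vertex. By Corollary~\ref{cor:gorp wire equivalence}, $T'$ is solvable if and only if $T$ is, so it suffices to show that the planar gizmo reconfiguration instance is equivalent to the non-planar gizmo reconfiguration instance that Theorem~\ref{thm:gorp reduction}'s construction would build from $T'$.

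The core of the argument is to identify each crossover gizmo $h = \pss{H}{[x \to \ell, x \to x]}$ with the postselected wire-vertex gizmo placed at the corresponding subdivision vertex of $T'$. By the second bullet of Definition~\ref{def:gorp crossover}, $H/\{(x,y)\}$ corresponds to the wire vertex on $\{a,b\}$ with initial state $\{a\}$, which matches the chosen initial orientation of the subdivided edge. The checking sequence $[x \to \ell, x \to x]$ is exactly the sequence prescribed by Definition~\ref{def:complete} for a wire vertex with final state $\{\ell\}$, and $\ell$ was chosen precisely so that this matches the final orientation of the subdivided edge. Hence $\pss{(H/\{(x,y)\})}{[x \to \ell, x \to x]}$ is precisely the gizmo that the non-planar construction would place at the subdivision vertex.

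Next I would show that using $h$ in the planar simulation, with its $x$ and $y$ locations separately chained into the identifications between token locations, is equivalent to using $H/\{(x,y)\}$ after those identifications. The first bullet of Definition~\ref{def:gorp crossover} provides exactly this freedom: $x \to y$ and $y \to x$ traversals can be inserted into or deleted from any traversal sequence of $H$ at will. So any planar solution descends to a solution of the $T'$-gizmo instance by collapsing $x \sim y$ within each crossover, and conversely any solution of the $T'$-gizmo instance lifts to the planar instance by inserting the appropriate $x \to y$ or $y \to x$ traversal whenever the agent routes through a crossover's token ports.

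Chaining everything, a solution to the planar gizmo instance exists iff a solution to the non-planar $T'$-gizmo instance exists (by the equivalence just described) iff $T'$ has a GORP solution (by Lemmas~\ref{lem:gorp sound} and \ref{lem:gorp complete}) iff $T$ has a GORP solution (by Corollary~\ref{cor:gorp wire equivalence}). The main subtlety I expect to work out carefully is bookkeeping the chain of identifications: verifying that collapsing $x \sim y$ inside each crossover turns the crossover-laden chain between two vertex gizmos $g_u, g_v$ into a single identified token location connected through one wire-vertex gizmo per crossing, which is exactly what the non-planar construction on $T'$ produces.
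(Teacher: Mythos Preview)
Your proposal is correct and follows essentially the same approach as the paper: both arguments collapse the $x$ and $y$ locations of each crossover (justified by the free $x \leftrightarrow y$ traversals from Definition~\ref{def:gorp crossover}), observe that the resulting quotient is exactly the non-planar construction applied to the subdivided instance $T'$ (since $H/\{(x,y)\}$ corresponds to the wire vertex with the correct initial and final states), and then invoke Lemmas~\ref{lem:gorp sound}, \ref{lem:gorp complete} and Corollary~\ref{cor:gorp wire equivalence}. The bookkeeping subtlety you flag---that postselection and the $x\sim y$ quotient commute on $H$---is handled implicitly in the paper as well, and follows from the same free-traversal property.
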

\begin{proof}
  Let $T$ be the planar GORP instance, and let $J$ denote the planar simulation constructed above.
  First we claim that, after forgetting about planarity,
  $J$ is equivalent to $J/\sim$ where $\sim$ is the equivalence relation that identifies all $x$ and $y$ locations of all gizmos.
  This claim follows from the fact that the traversals $[x \to y]$ and $[y \to x]$ can be added to any traversal sequence of $H$,
  by Definition~\ref{def:gorp crossover}.

  Again by Definition~\ref{def:gorp crossover},
  $H/\{(x, y)\}$ corresponds to $(U_\text{wire}, \{a\})$, and we oriented the locations of each instance of $h$ in a way that agrees with the initial and final orientations of $T$.
  Thus $J/\sim$ is just the result of applying the nonplanar transformation to the GORP instance obtained by subdividing some edges of $T$ with wire vertices.
  By Corollary~\ref{cor:gorp wire equivalence}, this subdivided GORP instance is equivalent to the original GORP instance~$T$.
  By Lemmas~\ref{lem:gorp sound} and~\ref{lem:gorp complete}, $J/\sim$ has a solution if and only if there is a solution to $T$, and the same is true of $J$.
\end{proof}

\begin{proof}[Proof of Theorem~\ref{thm:gorp reduction}]
  This follows from the constructions and Lemmas~\ref{lem:gorp sound}, \ref{lem:gorp complete}, and~\ref{lem:gorp planar}.
\end{proof}

\subsection{Soundness via Closed Locations}
\label{sec:open/closed}

Before proceeding to the Push-1 proof, we give a lemma describing a sufficient condition for soundness that is easier to reason about.
Recall the intuition described in Section~\ref{sec:open/closed intuition}:
for soundness, we want to make sure that, when a location is in $S_i$,
that location is closed from within the vertex,
and the only way to re-open it is to perform a traversal that enters the vertex from that location.

To formalize this idea, for a gizmo $G$ and a traversal sequence $X$ on $G$, call a location $a$ \defn{closed for $X$} if, for every location $\ell \ne a$ and every traversal sequence $Y$, $XY[\ell\to a]\in G$ implies that $[a\to b]\in Y$ for some location $b$.
In other words, it is not possible to make a traversal that leaves through $a$ without first making a traversal that enters through $a$.
Call a location \defn{open} if it is not closed.

\begin{lemma}\label{lem:sound_s_choice}
Let $G$ be a prefix-closed gizmo, and let $R_X$ denote the set of closed locations for $X$. If $R_X\in U$ for all traversal sequences $X$, and $R_{[]} \subset I$, then $G$ is sound with respect to $(U,I)$.
\end{lemma}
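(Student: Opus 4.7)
The plan is to define the witnessing state sequence in the most natural way: given a traversal sequence $X = [X_1,\dots,X_n] \in G$, set $S_i = R_{X_{:i}}$ where $X_{:i} = [X_1,\dots,X_i]$ is the length-$i$ prefix of $X$ (so $X_{:0} = []$). Then verify the four bullets of Definition~\ref{def:sound} one at a time. Conditions ``$S_i \in U$'' and ``$S_0 \subset I$'' are immediate from the hypotheses $R_Y \in U$ for all $Y$ and $R_{[]} \subset I$, so the real work is in the two conditions relating consecutive $S_i$ to $X_{i+1}$.

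For condition (3), suppose $X_{i+1} = \ell \to a$ with $\ell \neq a$; we want $a \notin S_i = R_{X_{:i}}$. Because $G$ is prefix-closed and $X \in G$, the prefix $X_{:i}[\ell \to a] \in G$. Taking $Y$ to be the empty sequence in the definition of closure gives a witness that $a$ is open for $X_{:i}$, so $a \notin R_{X_{:i}}$ as required. For condition (4), I would argue by contrapositive: assume $X_{i+1}$ is not of the form $a \to \ell$, and show $a \in S_i$ implies $a \in S_{i+1}$. The key observation is that, under this assumption, $X_{i+1}$ contains no traversal starting at $a$. So for any $\ell' \neq a$ and any $Y$ with $X_{:i+1}Y[\ell' \to a] \in G$, the concatenation $X_{i+1}Y$ contains an $a \to b$ traversal iff $Y$ does; applying the closure of $a$ for $X_{:i}$ to the sequence $X_{:i}(X_{i+1}Y)[\ell' \to a] \in G$ forces such a traversal to lie in $X_{i+1}Y$, and hence in $Y$. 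This establishes that $a$ is closed for $X_{:i+1}$, i.e., $a \in S_{i+1}$.

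The proof is essentially a bookkeeping argument with no serious obstacle; the main thing to be careful about is the mild asymmetry in the definition of ``closed,'' which requires the entering traversal to satisfy $\ell \neq a$ but imposes no such restriction on the exiting traversal $a \to b$. This asymmetry is what makes the case split in condition (4) clean: ``$X_{i+1}$ is not of the form $a \to \ell$'' means no traversal of $X_{i+1}$ starts at $a$ in any sense (including the trivial $a \to a$), which is exactly what is needed to transport the closure witness from $X_{:i+1}$ back to $X_{:i}$. Writing out the two steps carefully with this in mind is all that is required.
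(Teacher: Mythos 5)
Your proposal is correct and matches the paper's proof essentially step for step: both take $S_i = R_{X_{:i}}$, dispose of the first two conditions by hypothesis, use prefix-closure with an empty $Y$ for the third, and for the fourth transport the closure witness between $X_{:i}$ and $X_{:i+1}$ (the paper argues this last step directly rather than by contrapositive, a purely cosmetic difference). No gaps.
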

\begin{proof}
We need to show that there exists a sequence $S_i$ satisfying Definition~\ref{def:sound}.  We will show that $S_i=R_{X_{:i}}$ works, where $X_{:i}$ denotes the first $i$ traversals of $X$. We need to check four conditions.

First, $S_0 = R_{[]} \subset I$. Second, $S_i \in U$ by assumption. Third, if $X_{i+1} = \ell \to a$, then $a$ cannot be closed for $X_{:i}$ because by prefix-closure $X_{:i+1} = X_{:i}[\ell \to a] \in G$, and so in the definition of closed, $Y$ would be empty. Thus $a \notin R_{X_{:i}} = S_i$. Fourth, suppose $a \in S_i$ but $a \notin S_{i+1}$. Then $a$ is closed for $X_{:i}$ and open for $X_{:i+1}$. This means that there is a traversal sequence $Y$ such that $X_{:i+1}Y[\ell\to a] \in G$ and $Y$ does not contain a traversal of the form $a \to b$ for some $b$ (because $a$ was open for $X_{:i+1}$), but also that such a traversal \emph{does} appear in $X_{i+1}Y$ (because $a$ was closed for $X_{:i}$). This means that $X_{i+1}$ must be of the form $a \to b$, satisfying the condition.
\end{proof}

\section{Push-1 Gadgets}
\label{sec:Push-1}
In this section, we show how to construct gadgets in Push-1
that correspond to the GORP vertices in Nondeterministic Constraint Logic (NCL).
For these constructions, we implement a checkable gizmo that, with the appropriate checking sequence, corresponds to the appropriate GORP vertex. %
These constructions use this extra checking sequence to maintain ``normal operation''.

We first describe the AND and OR gadgets, which are very similar.
In each construction, only three blocks can ever move
(except in the checking sequence). We will refer to these three blocks as the ``top'', ``middle'', and ``bottom'' blocks in the constructions below. The following two invariants about their positions are required for \defn{normal operation}:
\begin{enumerate}
\item Each of the three movable blocks is always on its ``track'', shown in Figures~\ref{fig:NCL AND} and~\ref{fig:NCL OR}.
\item The top and middle blocks are not touching.
\end{enumerate}
If any of these invariants are violated by the player, we say the gadget is \emph{broken}.

Our gadgets are designed so that
a location is open (in the sense of Section~\ref{sec:open/closed})
if and only if
it is directly accessible via a path from $x$ without pushing any blocks.

\subsection{AND Gadget}

\begin{figure}
\centering
\subcaptionbox{\label{fig:NCL AND1} $a$ open}{\begin{overpic}[scale=0.4]{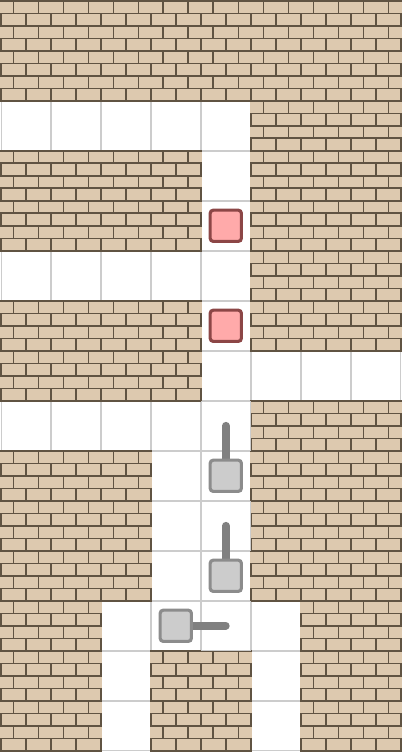}
\put(-1,43.5){\makebox(0,0)[r]{$x$}}
\put(17,-4){\makebox(0,0)[c]{$b$}}
\put(37,-4){\makebox(0,0)[c]{$c$}}
\put(54,50){\makebox(0,0)[l]{$a$}}
\end{overpic}
\vspace{0.75em}}
\hfil
\subcaptionbox{\label{fig:NCL AND2} $a$ and $b$ open}{\begin{overpic}[scale=0.4]{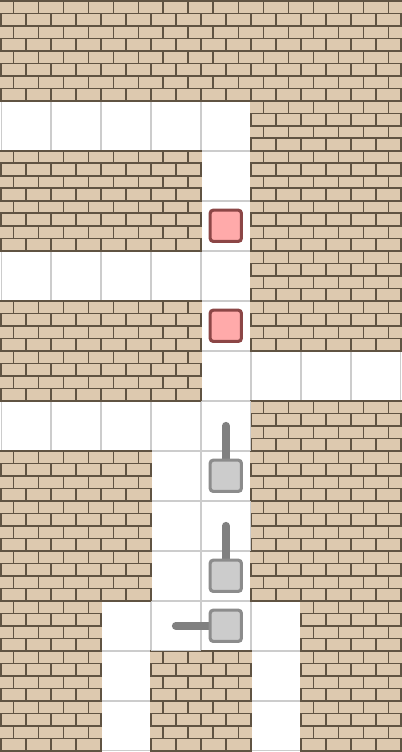}
\put(-1,43.5){\makebox(0,0)[r]{$x$}}
\put(17,-4){\makebox(0,0)[c]{$b$}}
\put(37,-4){\makebox(0,0)[c]{$c$}}
\put(54,50){\makebox(0,0)[l]{$a$}}
\end{overpic}
\vspace{0.75em}}
\hfil
\subcaptionbox{\label{fig:NCL AND4} $b$ open}{\begin{overpic}[scale=0.4]{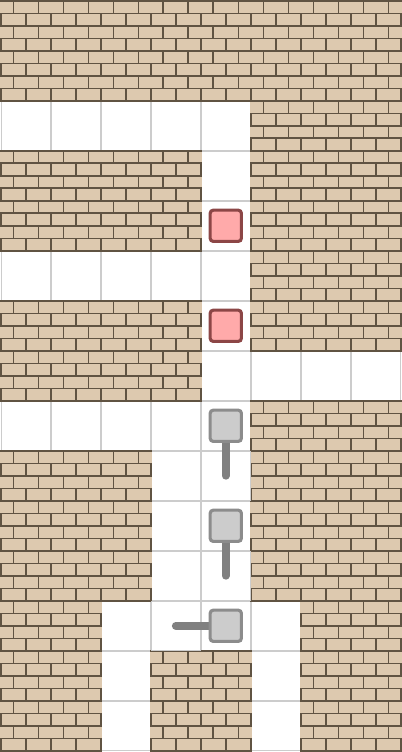}
\put(-1,43.5){\makebox(0,0)[r]{$x$}}
\put(17,-4){\makebox(0,0)[c]{$b$}}
\put(37,-4){\makebox(0,0)[c]{$c$}}
\put(54,50){\makebox(0,0)[l]{$a$}}
\end{overpic}
\vspace{0.75em}}
\hfil
\subcaptionbox{\label{fig:NCL AND3} $c$ open}{\begin{overpic}[scale=0.4]{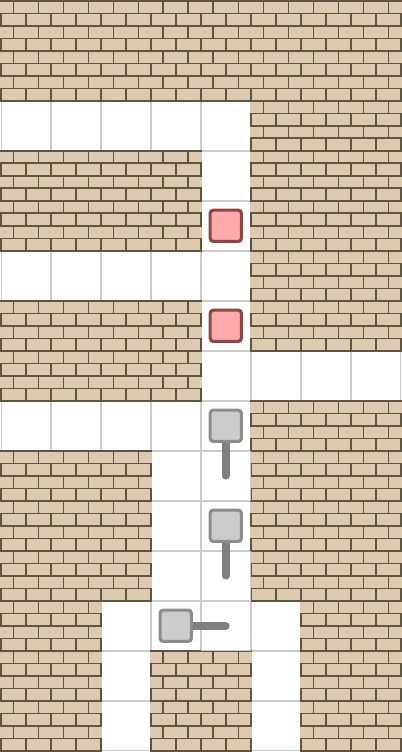}
\put(-1,43.5){\makebox(0,0)[r]{$x$}}
\put(17,-4){\makebox(0,0)[c]{$b$}}
\put(37,-4){\makebox(0,0)[c]{$c$}}
\put(54,50){\makebox(0,0)[l]{$a$}}
\end{overpic}
\vspace{0.75em}}
\caption{Checkable GORP NCL AND vertex built in Push-1.}
\label{fig:NCL AND}
\end{figure}

First, we implement a checkable GORP AND gizmo (that is, a gizmo $G$ and checking sequence $C$ such that $\pss{G}{C}$ corresponds to the GORP AND vertex) in Push-1.
Figure~\ref{fig:NCL AND} shows the gadget in the four possible states
under normal operation,
and Figure~\ref{fig:NCL AND check} illustrates the checking sequence.
Recall that the GORP AND vertex is defined as the upward closure of $\{\{a,b\},\{c\}\}$.
For a particular initial state $I$, the gadget starts in one of the configurations shown in Figures~\ref{fig:NCL AND2} and~\ref{fig:NCL AND3} such that every location not in $I$ is open.
The top two (unlabeled) locations and the red blocks are for the final checking sequence,
to be described later.

First, we will show soundness: every traversal sequence of our Push-1 gadget in normal operation corresponds to a valid sequence of GORP AND vertex states.
\begin{lemma}\label{lem:AND_sound}
In normal operation, the Push-1 AND gadget is sound with respect to the GORP AND vertex with initial state $I$.
\end{lemma}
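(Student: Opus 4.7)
The plan is to apply Lemma~\ref{lem:sound_s_choice}, which reduces soundness to two conditions: for every traversal sequence $X$ that keeps the gadget in normal operation, the set $R_X$ of locations closed at $X$ lies in $U_{\mathrm{AND}}$, and $R_{[]} \subseteq I$.

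The first step is a case analysis over the configurations reachable under normal operation. Since each of the three movable blocks is confined to its own track and the two right blocks must stay apart, a direct enumeration shows that (up to the agent's position) the gadget has exactly the four states depicted in Figure~\ref{fig:NCL AND}. For each of these four states I would identify which external locations are accessible from $x$ by pure walking, and then argue that every non-accessible location is in fact closed: the block obstructing such an exit is pinned by its track and by the no-touching invariant, and the only legal move that would free it requires the agent to have first entered through the very location it guards. The resulting closed sets are
\[
R_X \in \bigl\{\{b,c\},\ \{c\},\ \{a,c\},\ \{a,b\}\bigr\},
\]
corresponding to Figures~\ref{fig:NCL AND1}, \ref{fig:NCL AND2}, \ref{fig:NCL AND4}, and \ref{fig:NCL AND3} respectively. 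Each such set either contains $c$ or contains $\{a,b\}$, so each lies in $U_{\mathrm{AND}}$.

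For the initial condition, the gadget is set up in the configuration of Figure~\ref{fig:NCL AND2} when $c \in I$ (giving $R_{[]} = \{c\} \subseteq I$) and in that of Figure~\ref{fig:NCL AND3} when $\{a,b\} \subseteq I$ (giving $R_{[]} = \{a,b\} \subseteq I$). Since every $I \in U_{\mathrm{AND}}$ contains $c$ or $\{a,b\}$, one of these two setups is always available, and Lemma~\ref{lem:sound_s_choice} then yields the claim.

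The main obstacle will be the ``non-accessible implies closed'' half of the case analysis: I must rule out not only immediate pushes in a given configuration but also any extended sequence of entries and exits through other locations that might dislodge an obstructing block from an unexpected angle. The key leverage is the normal-operation invariant, which confines every reachable configuration to the four pictured in Figure~\ref{fig:NCL AND} and in each of them pins the blocking block in a position that cannot be cleared without the agent first entering through the closed location in question.
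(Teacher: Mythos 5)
Your overall strategy is the same as the paper's: invoke Lemma~\ref{lem:sound_s_choice} and analyze which locations can be open in normal operation (the paper's proof is just more economical, showing only the single implication actually needed --- if $c$ is open then the blocks must be positioned as in Figure~\ref{fig:NCL AND3}, in which state $a$ and $b$ are closed --- rather than computing $R_X$ exactly in every reachable configuration). However, one step of your case analysis is factually wrong: it is not true that normal operation confines the gadget to ``exactly the four states depicted in Figure~\ref{fig:NCL AND}.'' The three blocks give $2\times 3=6$ normal-operation configurations (bottom block left or right; the two right blocks in any of their non-touching position pairs), and the four figures are labeled by which locations happen to be accessible, not as an exhaustive state list. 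For example, with the bottom block on the left, the top right block up, and the bottom right block down, the gadget is in normal operation (reachable, e.g., from the configuration of Figure~\ref{fig:NCL AND1} by entering at $x$ and pushing the top right block up) yet none of $a$, $b$, $c$ is accessible; similarly there is a second configuration with $b$ accessible besides the one drawn in Figure~\ref{fig:NCL AND4}. So your claimed list $R_X\in\{\{b,c\},\{c\},\{a,c\},\{a,b\}\}$ is incomplete.

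The gap is repairable, because the configurations you miss only have \emph{more} closed locations (in particular $c$ remains closed in all of them), so $R_X$ still contains $c$ or contains $\{a,b\}$ and hence still lies in $U_{\mathrm{AND}}$; but as written the enumeration step would fail. Note also that you do not need the exact equality ``closed $=$ inaccessible'' (which is stronger than what your pinning argument establishes); for the lemma it suffices to show, as the paper does, that whenever $c$ is inaccessible it is closed, and that in the one configuration where $c$ is open both $a$ and $b$ are closed. Your treatment of the initial condition $R_{[]}\subseteq I$ is fine and is in fact spelled out more explicitly than in the paper, which delegates it to the stipulation that the gadget is initialized so that every location not in $I$ is accessible.
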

\begin{proof}
By Lemma~\ref{lem:sound_s_choice}, it suffices to show that it is not possible for $c$ to be open simultaneously with either $a$ or $b$.
Suppose $c$ is open, meaning we can traverse from some other location to $c$ without first entering through~$c$.
Then the bottom block must be leftmost on its track, and then the middle block must be up on its track, because otherwise $c$ would be blocked and require a traversal entering through $c$ to open a path to $c$.
By definition of normal operation, the top block must also be up on its track, to avoid touching the middle block.
Thus, assuming the gadget is in normal operation, it must look like Figure~\ref{fig:NCL AND3} at this point.
In this state, however, it is clearly impossible for the agent to reach location $a$ or $b$ from any other location, so neither $a$ nor $b$ can be open while $c$ is.
\end{proof}

Next, we will show completeness: given a sequence of valid vertex states, we will construct a sequence of Push-1 moves to traverse the gadget while maintaining normal operation.
\begin{lemma}\label{lem:AND_complete}
In normal operation, the Push-1 AND gadget is complete with respect to the GORP AND vertex with initial state $I$.
\end{lemma}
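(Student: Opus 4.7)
The plan is to exhibit, for each valid GORP AND state and each adjacent transition between states, an explicit Push-1 move sequence on the gadget that realizes the corresponding traversal pattern from Definition~\ref{def:complete} and leaves the gadget in the configuration associated with the new state, all while preserving normal operation. Once this is done, the full traversal sequence $X_1 \cdots X_n C$ for any valid state history is realized by concatenating these local realizations and appending the checking sequence.

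First I would fix a correspondence between the five states of the AND vertex type $U = \{\{c\}, \{a,b\}, \{a,c\}, \{b,c\}, \{a,b,c\}\}$ and configurations of the three movable blocks. Figures~\ref{fig:NCL AND1}--\ref{fig:NCL AND3} directly supply configurations for $\{b,c\}$, $\{c\}$, $\{a,c\}$, and $\{a,b\}$ respectively, characterized by the fact that the accessible external locations are exactly those \emph{not} in the corresponding state (matching the closed-location intuition of Lemma~\ref{lem:sound_s_choice}). I would describe a fifth configuration, reachable from these by standard moves on the tracks, in which all of $a$, $b$, $c$ are blocked; this represents $\{a,b,c\}$.

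Second, I would enumerate the valid adjacent pairs in $U$, namely $\{c\}\leftrightarrow\{a,c\}$, $\{c\}\leftrightarrow\{b,c\}$, $\{a,c\}\leftrightarrow\{a,b,c\}$, $\{b,c\}\leftrightarrow\{a,b,c\}$, and $\{a,b\}\leftrightarrow\{a,b,c\}$. For each transition $S_i \to S_{i+1}$ I would describe the sequence of walks and single-block pushes that the agent performs. For an addition $S_{i+1} = S_i + \{a\}$, the realization of $[x \to a, x \to x]$ is: enter at $x$, push the appropriate block along its track to reach $a$ and exit (valid because $a \notin S_i$ so $a$ is accessible in the $S_i$-configuration); then re-enter at $x$ and push the block that closes off $a$ back into its place in the $S_{i+1}$-configuration, and exit at $x$. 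For a removal $S_{i+1} = S_i - \{a\}$, the realization of $[x \to x, a \to x]$ is the mirror image: first enter at $x$, push a block to open $a$ (leaving the gadget in an intermediate position where $a$ is accessible), exit at $x$; then enter at $a$ and push the remaining block back into the $S_{i+1}$-position, exiting at $x$. For each of these move sequences one must verify the two normal-operation invariants: that every moved block remains on its track, and that the two right-column blocks never collide. These invariants are designed into the tracks, so the check is a direct inspection case by case.

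Third, I would show that the checking sequence $C$, consisting of $x \to a$ for each $a \notin F$ (in any order) followed by $x \to x$, can be executed starting from the configuration for the final state $F \in U$. By construction, every location not in $F$ is accessible from $x$ in that configuration, so the agent can simply walk to each such location in turn without disturbing any block; the trailing $x \to x$ is the identity traversal, always permitted by Definition~\ref{defn:gizmo}. Together with step two this gives $X_1 \cdots X_n C \in G$, i.e.\ completeness.

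The main obstacle is the case analysis in step two: there are only five state-adjacencies but each must be realized in both directions, and the two-traversal structure $[x\to a, x\to x]$ (respectively $[x\to x, a\to x]$) forces a specific split of the block motions between the two visits, so I have to make sure a separating visit exists that does not violate either invariant. Beyond this, the argument is routine inspection of the figures.
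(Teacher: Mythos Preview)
Your plan follows the same skeleton as the paper—case analysis on transitions, inductive invariant linking GORP state to block positions—but differs in the invariant you maintain, and that difference creates real problems in step two. The paper does \emph{not} fix a bijection between the five states and five configurations; it keeps only the one-sided invariant that every location not in $S_i$ is accessible from $x$. Locations in $S_i$ may remain accessible too. With this weaker invariant the addition case $S_{i+1}=S_i+\{\ell\}$ is trivial: since $\ell\notin S_i$ it is already accessible, so $x\to\ell$ is a pure walk and $x\to x$ moves nothing. Your plan instead spends the trailing $x\to x$ pushing a block to close $\ell$ off and reach a designated $S_{i+1}$-configuration. That step is unnecessary, and for some locations it is not clear it can be done from $x$ without violating the non-touching invariant on the two right blocks; you would have to verify each such closing carefully, whereas the paper avoids the issue entirely.

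More seriously, your description of the removal case has the mechanics reversed. You write that during the $x\to x$ traversal you ``push a block to open $a$,'' leaving $a$ accessible before the $a\to x$ traversal begins. In this gadget that is impossible: the block obstructing each of $a,b,c$ can only be pushed out of the way by an agent entering from that location (e.g.\ the top-right block closing $a$ must be pushed \emph{down}, and the only square above it is entered via $a$). The paper's sequence is: use $x\to x$ to \emph{prepare} (shift another block to make room), then perform $\ell\to x$ by having the agent arrive at $\ell$ from outside the gadget and push the obstructing block inward, which simultaneously opens the path and carries the agent through to $x$. Your plan needs to be rewritten along these lines, and once you do so the natural invariant to carry is exactly the paper's weaker one.
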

\begin{proof}
We must show that every sequence $S_i$ of valid GORP AND vertex states corresponds to a sequence of traversals of the Push-1 AND gadget in normal operation.

We will construct the desired traversal sequence while maintaining the additional invariant that every location $\notin S_i$ is open after the corresponding Push-1 moves (as ultimately needed to perform~$C$).
The invariant is true initially by our choice of initial configuration, as seen in Figure~\ref{fig:NCL AND}.
It suffices to show that each state change from $S_i$ to $S_{i+1}$ preserves the invariant while performing the corresponding traversal sequence.

For each state change from $S_i$ to $S_{i+1}$, there are two cases: for some location $\ell \in \{a,b,c\}$, either $S_{i+1} = S_i \sqcup \{\ell\}$, or $S_{i+1} = S_i \setminus \{\ell\}$.
In the first case, $\ell \notin S_i$, so we do not need to push any blocks around, because $\ell$ is open and thus we can just perform $x \to \ell$ (and then $x\to x$ trivially), preserving the invariant.
In the second case, we need to make $\ell$ open while performing $[x\to x,\ell\to x]$. There are three subcases for~$\ell$.

\textbf{Subcase 1:} $\ell = a$. First, we enter at $x$, push the middle block downwards if it is not already (possible because the top and middle blocks are never adjacent), and exit at $x$. Then we enter at $a$, pushing the top block downwards if necessary, and then exit at $x$. This sequence does not change whether $b$ is open (this is determined entirely by the position of the bottom block), but necessarily closes $c$. This closing is allowed because $a \notin S_{i+1}$ implies $c \in S_{i+1}$ in order for $S_{i+1} \in U$. The resulting state will look like one of Figures~\ref{fig:NCL AND1} or~\ref{fig:NCL AND2} depending on whether $b$ was open.

\textbf{Subcase 2:} $\ell = b$. We do nothing during the $x \to x$ traversal. Then we enter at $b$, pushing the bottom block rightwards if necessary, and exit at $x$. This sequence does not affect the path to $a$, but does block $c$ (if it was not already blocked). Again this closing is allowed because $b \notin S_{i+1}$ implies $c \in S_{i+1}$ in order for $S_{i+1} \in U$. The resulting state will look like one of Figures~\ref{fig:NCL AND2} or~\ref{fig:NCL AND4} depending on whether $a$ was open.

\textbf{Subcase 3:} $\ell = c$. First, we enter at $x$, push the top block upwards if it is not already (possible because the top and middle blocks are never adjacent), and exit at $x$. Then we enter at $c$, pushing the bottom block leftwards and the middle block upwards if necessary. The final result is in Figure~\ref{fig:NCL AND3}. This sequence closes both $a$ and $b$, which is allowed because $c \notin S_{i+1}$ implies $a, b \in S_{i+1}$ in order for $S_{i+1} \in U$.

Thus the invariant is preserved under every operation, and after all state changes, all locations needed for the final sequence $C$ are open.
\end{proof}

Now we use the checkable gadgets framework from Section~\ref{sec:checkable-gadgets} to prevent the agent from deviating from normal operation.

\begin{lemma}\label{lem:AND_check}
Let $G$ be the unrestricted gizmo implemented by the Push-1 AND gadget, and let
$C=[b \to x, c \to x, a \to x, d_1 \to x, d_2 \to x]$.
Then $\pss{G}{C}$ is exactly $G$ restricted to normal operation.
\end{lemma}
\begin{proof}
The checking sequence $C$ works as follows.
First, $b \to x$ is impossible if the bottom block is not leftwards off its track, so checking $b \to x$ prevents this violation, and it also forces the bottom block rightwards.
(In normal operation, it forces the gadget into the state in either Figure~\ref{fig:NCL AND2} or~\ref{fig:NCL AND4}.)
Then checking $c \to x$ ensures that the bottom block was not pushed rightwards off its track, and that the middle block was not pushed downwards off its track.
Assuming the top and middle blocks are not adjacent (as in normal operation),
this traversal also allows the agent to push the bottom block all the way left on its track and push the middle block downwards,
as in Figure~\ref{fig:NCL ANDcheck0}.
Next checking $a \to x$ ensures that the top block was not pushed upwards off its track.
Again assuming the top and middle blocks were not adjacent
and that the agent pushed the middle block downwards in the previous step,
this traversal also allows the agent to push the top and middle blocks all the way down (off their tracks), as in Figure~\ref{fig:NCL ANDcheck1}.
Moving these blocks all the way down is possible only if the top and middle blocks were never adjacent.
Finally, checking $d_1 \to x$ and $d_2 \to x$ in order
requires pushing the red blocks all the way downwards one at a time,
as in Figures~\ref{fig:NCL ANDcheck2} and~\ref{fig:NCL ANDcheck3}.
These pushes are only possible if, in the previous steps, the agent pushed the top and middle blocks all the way downwards, thereby enforcing that these blocks were never adjacent.

\begin{figure}
\centering
\def\scale{0.38}
\subcaptionbox{\label{fig:NCL ANDcheck0} Setting up the gadget during $c\to x$ traversal.}
[0.19\linewidth]
{~\begin{overpic}[scale=\scale]{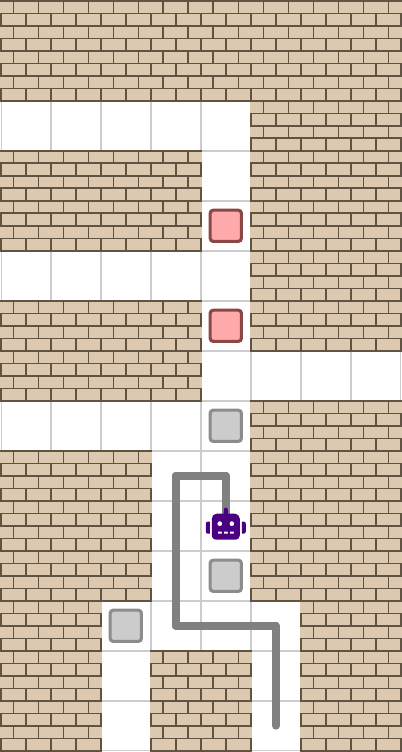}
\put(-1,43.5){\makebox(0,0)[r]{$x$}}
\put(17,-4){\makebox(0,0)[c]{$b$}}
\put(37,-4){\makebox(0,0)[c]{$c$}}
\put(54,50){\makebox(0,0)[l]{$a$}}
\put(-1,63){\makebox(0,0)[r]{$d_1$}}
\put(-1,83){\makebox(0,0)[r]{$d_2$}}
\end{overpic}
\vspace{0.75em}}%
\hfill
\subcaptionbox{\label{fig:NCL ANDcheck1} Pushing the top and middle blocks all the way down during $a \to x$ traversal.}
[0.37\linewidth]
{~\begin{overpic}[scale=\scale]{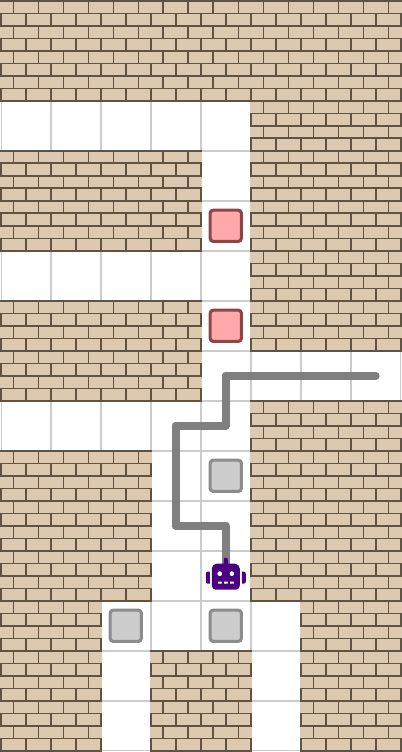}
\put(-1,43.5){\makebox(0,0)[r]{$x$}}
\put(17,-4){\makebox(0,0)[c]{$b$}}
\put(37,-4){\makebox(0,0)[c]{$c$}}
\put(54,50){\makebox(0,0)[l]{$a$}}
\put(-1,63){\makebox(0,0)[r]{$d_1$}}
\put(-1,83){\makebox(0,0)[r]{$d_2$}}
\end{overpic}~~~~~~%
\begin{overpic}[scale=\scale]{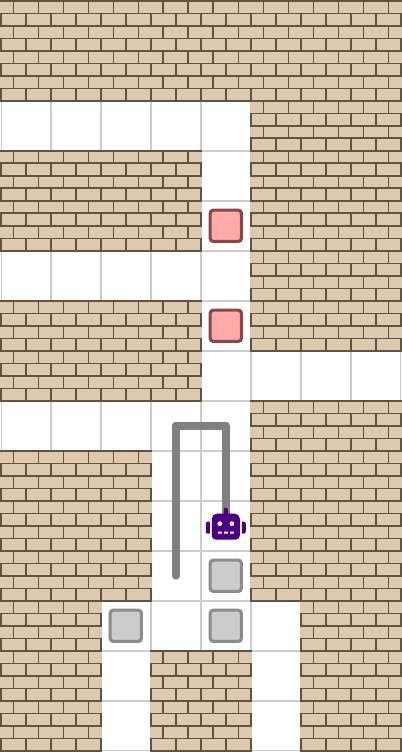}
\put(-1,43.5){\makebox(0,0)[r]{$x$}}
\put(17,-4){\makebox(0,0)[c]{$b$}}
\put(37,-4){\makebox(0,0)[c]{$c$}}
\put(54,50){\makebox(0,0)[l]{$a$}}
\put(-1,63){\makebox(0,0)[r]{$d_1$}}
\put(-1,83){\makebox(0,0)[r]{$d_2$}}
\end{overpic}
\vspace{0.75em}}%
\hfill
\subcaptionbox{\label{fig:NCL ANDcheck2} Pushing first red block down during $d_1\to x$ traversal.}
[0.19\linewidth]
{~\begin{overpic}[scale=\scale]{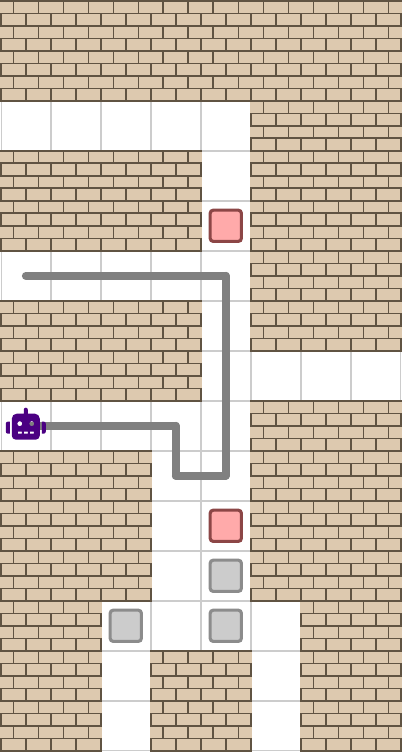}
\put(-1,43.5){\makebox(0,0)[r]{$x$}}
\put(17,-4){\makebox(0,0)[c]{$b$}}
\put(37,-4){\makebox(0,0)[c]{$c$}}
\put(54,50){\makebox(0,0)[l]{$a$}}
\put(-1,63){\makebox(0,0)[r]{$d_1$}}
\put(-1,83){\makebox(0,0)[r]{$d_2$}}
\end{overpic}
\vspace{0.75em}}%
\hfill
\subcaptionbox{\label{fig:NCL ANDcheck3} Pushing second red block down during $d_2\to x$ traversal.}
[0.19\linewidth]
{~\begin{overpic}[scale=\scale]{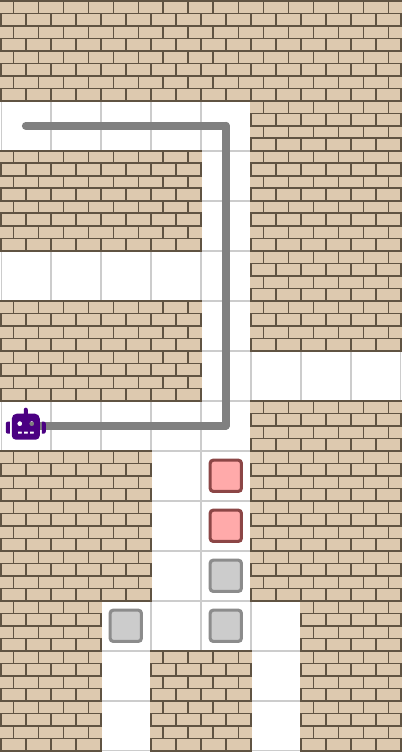}
\put(-1,43.5){\makebox(0,0)[r]{$x$}}
\put(17,-4){\makebox(0,0)[c]{$b$}}
\put(37,-4){\makebox(0,0)[c]{$c$}}
\put(54,50){\makebox(0,0)[l]{$a$}}
\put(-1,63){\makebox(0,0)[r]{$d_1$}}
\put(-1,83){\makebox(0,0)[r]{$d_2$}}
\end{overpic}
\vspace{0.75em}}%
\caption{The final checking sequence for the AND gadget.}
\label{fig:NCL AND check}
\end{figure}

The result is that the checking sequence $C$ can only be completed if the gadget was in normal operation before checking.
Conversely, the sequence of pushes described above completes $C$ from any normal-operation state.
Thus $\pss{G}{C}$ is exactly $G$ restricted to normal operation.
\end{proof}

From Lemmas~\ref{lem:AND_sound}, \ref{lem:AND_complete}, and~\ref{lem:AND_check} it follows that $G$ is a checkable GORP AND gizmo.

\subsection{OR Gadget}

\begin{figure}
\centering
\subcaptionbox{\label{fig:NCL ORa} $b$ and $c$ open}{\begin{overpic}[scale=0.4]{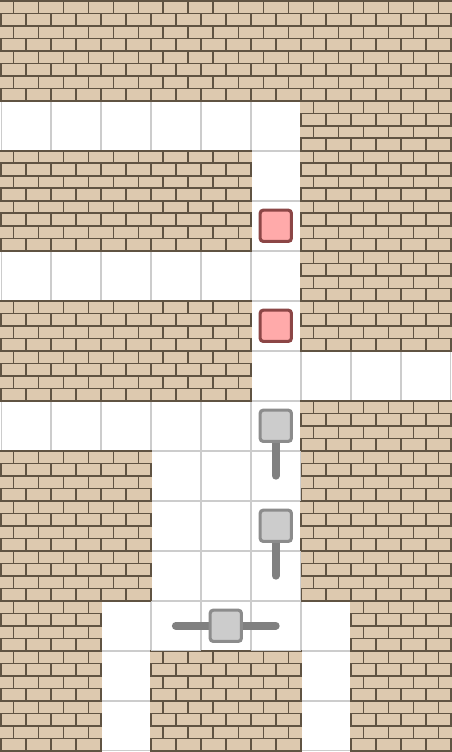}
\put(-1,43.5){\makebox(0,0)[r]{$x$}}
\put(17,-4){\makebox(0,0)[c]{$b$}}
\put(43.25,-4){\makebox(0,0)[c]{$c$}}
\put(61,50){\makebox(0,0)[l]{$a$}}
\end{overpic}
\vspace{0.75em}}
\hfil
\subcaptionbox{\label{fig:NCL ORb} $a$ and $c$ open}{\begin{overpic}[scale=0.4]{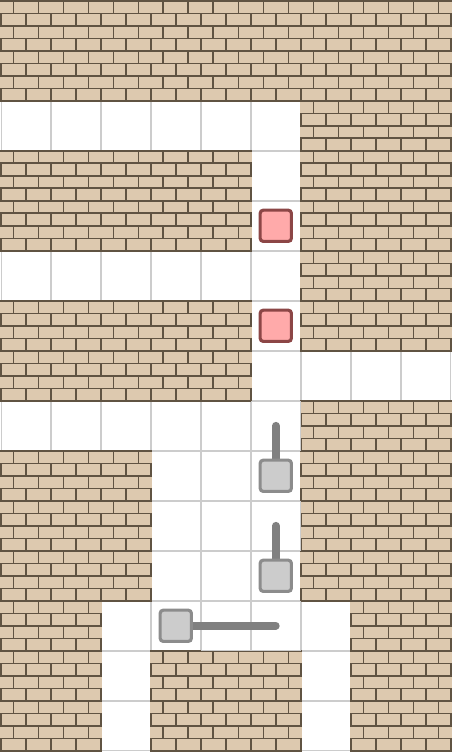}
\put(-1,43.5){\makebox(0,0)[r]{$x$}}
\put(17,-4){\makebox(0,0)[c]{$b$}}
\put(43.25,-4){\makebox(0,0)[c]{$c$}}
\put(61,50){\makebox(0,0)[l]{$a$}}
\end{overpic}
\vspace{0.75em}}
\hfil
\subcaptionbox{\label{fig:NCL ORc} $a$ and $b$ open}{\begin{overpic}[scale=0.4]{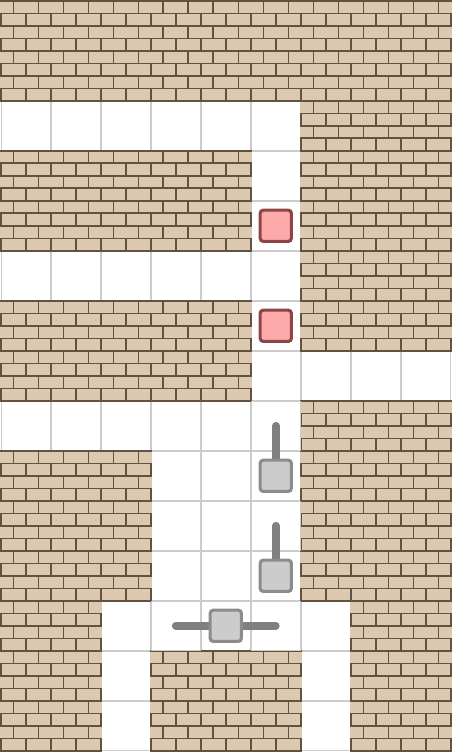}
\put(-1,43.5){\makebox(0,0)[r]{$x$}}
\put(17,-4){\makebox(0,0)[c]{$b$}}
\put(43.25,-4){\makebox(0,0)[c]{$c$}}
\put(61,50){\makebox(0,0)[l]{$a$}}
\end{overpic}
\vspace{0.75em}}
\caption{Checkable GORP NCL OR vertex built in Push-1.}
\label{fig:NCL OR}
\end{figure}

Next we implement a checkable GORP OR gizmo in Push-1.
Figure~\ref{fig:NCL OR} shows the gadget in three important states
and Figure~\ref{fig:NCL OR check} illustrates the checking sequence.
Recall that the GORP OR vertex is defined as the upward closure of $\{\{a\},\{b\},\{c\}\}$, i.e., it requires at least one of the incident edges to be oriented inward.
For a particular initial state $I$, our Push-1 gadget will start in one of the configurations in Figure~\ref{fig:NCL OR} such that every location not in $I$ is open.
The bottom block has a track of length three: there is no reason to ever put it in the rightmost position, but we also do not prevent the player from doing so.

\begin{lemma}\label{lem:OR_sound}
In normal operation, the Push-1 OR gadget is sound with respect to the GORP OR vertex with initial state $I$.
\end{lemma}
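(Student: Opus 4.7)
My plan is to mirror the structure of Lemma~\ref{lem:AND_sound}, applying Lemma~\ref{lem:sound_s_choice}. Since the OR vertex type $U$ is the upward closure of $\{\{a\},\{b\},\{c\}\}$, the condition $R_X \in U$ is exactly the statement that at least one of $a$, $b$, $c$ is closed for $X$. So it suffices to show two things: that in normal operation the three locations $a$, $b$, $c$ can never all be open simultaneously; and that the initial configuration satisfies $R_{[]} \subset I$. The latter will be immediate from the convention that the gadget is set up in whichever one of the three configurations of Figure~\ref{fig:NCL OR} makes every location of $I$ accessible, so that the unique inaccessible location at the start lies outside $I$ and hence is also closed.

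For the nontrivial part I would argue by contradiction: suppose there is a traversal sequence $X$ after which all three of $a$, $b$, $c$ are open. Openness of, say, $a$ means the agent can enter through some other external location, rearrange blocks, and eventually exit through $a$ without first entering through $a$; in particular each of $a$, $b$, $c$ must be physically reachable from inside the gadget by some sequence of Push-1 moves that does not pass through it. The core of the argument is then a case analysis on the positions of the three movable blocks (top-right, bottom-right, and bottom) subject to the normal-operation invariants that each block stays on its track and that the two right blocks are not adjacent. The three canonical configurations of Figure~\ref{fig:NCL OR} show the only intended accessibility patterns, in each of which exactly one of $a$, $b$, $c$ is blocked; the task is to verify that no legal configuration of the three blocks, including the extra rightmost position of the bottom block, permits simultaneous access to all three external locations.

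To structure that case analysis, I would tie each external location to the block whose position controls it: access to $a$ from inside requires the top-right block to be out of the way (pushed up), access to $b$ requires the bottom block to be out of its leftmost slot, and access to $c$ requires both that the bottom block is not in the middle slot blocking the corridor to $c$ and that the bottom-right block is in a position that lets the agent route around it. I would then check that any attempted block configuration making all three of $a$, $b$, $c$ simultaneously reachable either forces the two right blocks to be adjacent, pushes some block off its track, or collapses into one of the three Figure~\ref{fig:NCL OR} states, each of which has exactly one blocked location, giving the required contradiction.

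The main obstacle I anticipate is the bookkeeping of the bottom block's three-cell track together with the positions of the two right blocks: I have to rule out not only the three ``intended'' configurations but also any intermediate or atypical configuration (such as the rightmost bottom-block position) that a careful adversary could steer toward. I expect the argument will be clean once I pin down which external location each block governs and invoke the non-adjacency of the right blocks, but some care is needed to ensure the argument is about the closed/open relation of Lemma~\ref{lem:sound_s_choice} rather than mere accessibility, so that reachability by further internal pushes from the other entrances is also accounted for.
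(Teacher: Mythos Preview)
Your overall strategy---apply Lemma~\ref{lem:sound_s_choice}, verify $R_{[]}\subset I$ from the starting configuration, and show the three locations cannot all be open at once via a case analysis on block positions---is exactly the paper's. Two details are reversed, however. First, the initialization convention: the paper stipulates that every location \emph{not} in $I$ is accessible in the starting configuration, so the (at most one) inaccessible location lies \emph{in} $I$, which is what yields $R_{[]}\subset I$; you have this the wrong way round. Second, the geometry governing $a$: the paper's argument is that if $a$ is open then both right blocks must be in the \emph{lower} positions on their tracks (as in Figures~\ref{fig:NCL ORb} and~\ref{fig:NCL ORc}), not that the top-right block is pushed up. From there, $b$ open forces the bottom block off its leftmost cell, and the paper simply observes that in this configuration $c$ is blocked---no separate appeal to the non-adjacency invariant or the extra rightmost bottom-block slot is needed. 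With these two corrections your plan coincides with the paper's proof.
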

\begin{proof}
By Lemma~\ref{lem:sound_s_choice}, it suffices to show that it is not possible for all three of $a$, $b$, and $c$ to be open simultaneously.
Suppose $a$ is open, meaning we can traverse from some other location to $a$ without first entering through~$a$.
Then the top block must be in the lower position on its track,
so by the definition of normal operation,
the middle block must be in the upper position on its track,
as in Figures~\ref{fig:NCL ORb} and~\ref{fig:NCL ORc}.
Similarly, if $b$ is also open, then the bottom block cannot be on the left side of its track, as in Figure~\ref{fig:NCL ORc} (or further right).
At this point, the middle and bottom blocks collectively prevent access to $c$ without first entering through $c$.
Thus $c$ cannot also be open, so all three locations cannot be open simultaneously.
\end{proof}

Next, we will show completeness: given a sequence of valid vertex states, we will construct a sequence of Push-1 moves to traverse the gadget while maintaining normal operation.

\begin{lemma}\label{lem:OR_complete}
In normal operation, the Push-1 OR gadget is complete with respect to the GORP OR vertex with initial state $I$.
\end{lemma}
\begin{proof}
The structure of the argument is the same as for AND:
for every sequence $S_i$ of valid GORP OR vertex states,
we construct a corresponding traversal sequence of the Push-1 OR gadget
in normal operation,
while maintaining the invariant that every location $\notin S_i$ is open after the corresponding Push-1 moves.
The invariant is true initially by our choice of initial configuration, as seen in Figure~\ref{fig:NCL OR}.
It remains to show that each state change from $S_i$ to $S_{i+1}$ preserves the invariant while performing the corresponding traversal sequence.

Again, for each state change from $S_i$ to $S_{i+1}$, there are two cases: for some location $\ell \in \{a,b,c\}$, either $S_{i+1} = S_i \sqcup \{\ell\}$, or $S_{i+1} = S_i \setminus \{\ell\}$.
The first case is identical to the AND proof: we do not need to push any blocks around because $\ell \notin S_i$ is open, so we can just perform $x \to \ell$ (and then $x\to x$ trivially), preserving the invariant.
In the second case, we need to make $\ell$ open while performing $[x\to x,\ell\to x]$. There are three subcases for~$\ell$.

\textbf{Subcase 1:} $\ell = a$. First, we enter at $x$.
If $c \notin S_i$, then we push the bottom block to the far left.
This move closes $b$, which is allowed because $a, c \notin S_{i+1}$ implies $b \in S_{i+1}$ in order for $S_{i+1} \in U$.
Then we push the middle block downwards if necessary (possible because the top and middle blocks are never adjacent), and exit at $x$.
Next we enter at $a$, pushing the top block downwards if necessary, and exit at $x$.
This sequence
closes $c$ only if $c \in S_i$, which is allowed.
The end result is either Figure~\ref{fig:NCL ORb} or~\ref{fig:NCL ORc}, depending on which of $b$ or $c$ is $\in S_i$.

\textbf{Subcase 2:} $\ell = b$.
While performing $x \to x$, if $c \notin S_i$,
then we push the top block and then middle block upwards if necessary.
This move closes $a$, which is allowed because
$b, c \notin S_{i+1}$ implies $a \in S_{i+1}$ in order for $S_{i+1} \in U$.
Then we enter at $b$, pushing the bottom block to the middle if necessary, and exit at $x$.
This sequence closes $c$ only if $c \in S_i$, which is allowed.
The end result is either Figure~\ref{fig:NCL ORa} or~\ref{fig:NCL ORc}, depending on which of $a$ or $c$ is $\in S_i$.

\textbf{Subcase 3:} $\ell = c$.
If $a \in S_i$, then during $x\to x$ we push the top block upwards if necessary, closing $a$.
Then we enter at $c$.
If $a \in S_i$, then we push the middle block upwards, closing $a$, and then exit at $x$, leaving the gadget in the state in Figure~\ref{fig:NCL ORa}.
If $a \notin S_i$, then we push the bottom block to the left, and then exit at $x$, leaving the gadget in the state in Figure~\ref{fig:NCL ORb} with $b$ closed, which is allowed because $a, c \notin S_{i+1}$ implies $b \in S_{i+1}$ in order for $S_{i+1} \in U$.
Thus the invariant is preserved under every operation, and after all state changes, all locations needed for the final sequence $C$ are open.
\end{proof}

Now we use the checkable gadgets framework from Section~\ref{sec:checkable-gadgets} to force the agent to follow normal operation.
The checking sequence $C$ is in fact the same as for the AND gadget.

\begin{lemma}\label{lem:OR_check}
Let $G$ be the unrestricted gizmo implemented by the Push-1 OR gadget, and let
$C=[b \to x, c \to x, a \to x, d_1 \to x, d_2 \to x]$.
Then $\pss{G}{C}$ is exactly $G$ restricted to normal operation.
\end{lemma}
\begin{proof}
The checking sequence works as follows.
First, checking $b \to x$ ensures that the bottom block was not pushed leftwards off its track.
Then checking $c \to x$ ensures that the bottom block was not pushed rightwards off its track, and that the middle block was not pushed downwards off its track (or else it would have become stuck during $b \to x$ or earlier).
Assuming the top and middle blocks are not adjacent (as in normal operation),
this traversal also allows the agent to push the bottom block all the way left on its track and push the middle block downwards on its track,
as in Figure~\ref{fig:NCL ORcheck0}.
Next checking $a \to x$ ensures that the top block was not pushed upwards off its track.
Again assuming the top and middle blocks were not adjacent
and that the agent pushed the middle block downwards in the previous step,
this traversal also allows the agent to push the top and middle blocks all the way down (off their tracks), as in Figure~\ref{fig:NCL ORcheck1}.
Moving these blocks all the way down is possible only if the top and middle blocks were never adjacent.
Finally, checking $d_1 \to x$ and $d_2 \to x$ in order
requires pushing the red blocks all the way downwards one at a time,
as in Figures~\ref{fig:NCL ORcheck2} and~\ref{fig:NCL ORcheck3}.
These pushes are only possible if, in the previous steps, the agent pushed the top and middle blocks all the way downwards, thereby enforcing that these blocks were never adjacent.

\begin{figure}
\centering
\def\scale{0.35}
\subcaptionbox{\label{fig:NCL ORcheck0} Setting up the gadget during $c\to x$ traversal.}
[0.19\linewidth]
{~\begin{overpic}[scale=\scale]{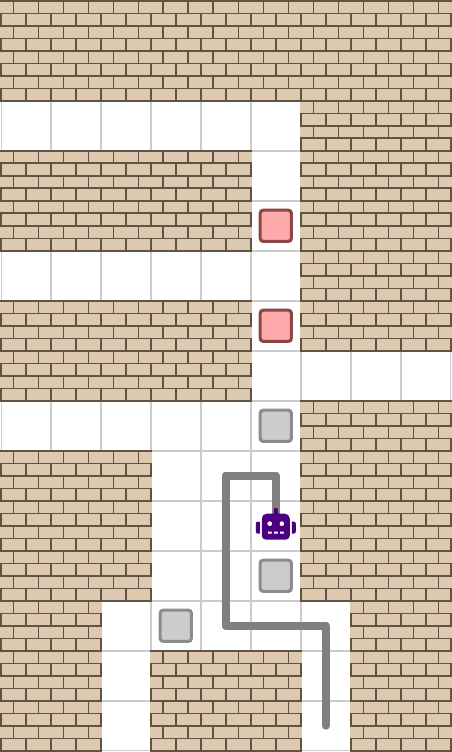}
\put(-1,43.5){\makebox(0,0)[r]{$x$}}
\put(17,-4){\makebox(0,0)[c]{$b$}}
\put(43.25,-4){\makebox(0,0)[c]{$c$}}
\put(61,50){\makebox(0,0)[l]{$a$}}
\put(-1,63){\makebox(0,0)[r]{$d_1$}}
\put(-1,83){\makebox(0,0)[r]{$d_2$}}
\end{overpic}
\vspace{0.75em}}%
\hfill
\subcaptionbox{\label{fig:NCL ORcheck1} Pushing the top and middle blocks all the way down during $a\to x$ traversal.}
[0.365\linewidth]
{~\begin{overpic}[scale=\scale]{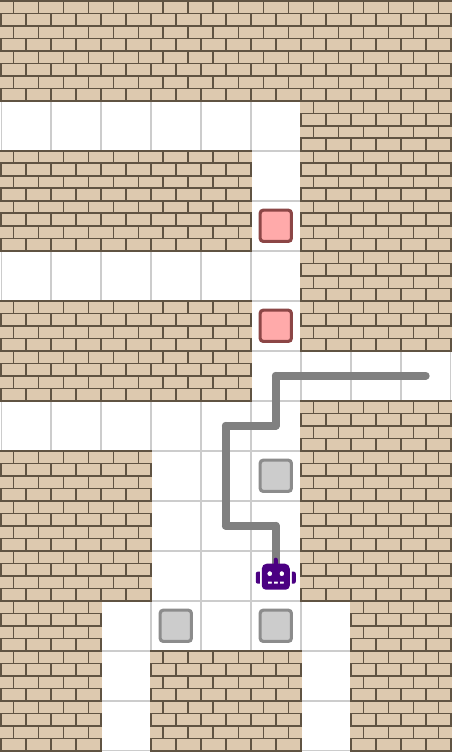}
\put(-1,43.5){\makebox(0,0)[r]{$x$}}
\put(17,-4){\makebox(0,0)[c]{$b$}}
\put(43.25,-4){\makebox(0,0)[c]{$c$}}
\put(61,50){\makebox(0,0)[l]{$a$}}
\put(-1,63){\makebox(0,0)[r]{$d_1$}}
\put(-1,83){\makebox(0,0)[r]{$d_2$}}
\end{overpic}~~~~~~~~%
\begin{overpic}[scale=\scale]{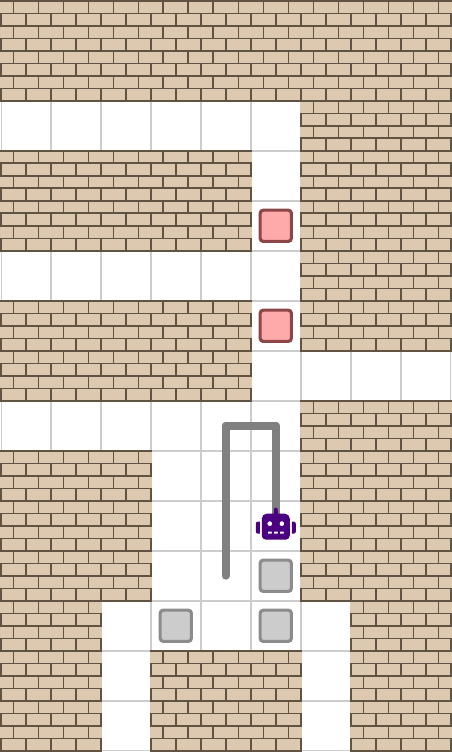}
\put(-1,43.5){\makebox(0,0)[r]{$x$}}
\put(17,-4){\makebox(0,0)[c]{$b$}}
\put(43.25,-4){\makebox(0,0)[c]{$c$}}
\put(61,50){\makebox(0,0)[l]{$a$}}
\put(-1,63){\makebox(0,0)[r]{$d_1$}}
\put(-1,83){\makebox(0,0)[r]{$d_2$}}
\end{overpic}%
\vspace{0.75em}}%
\hfill
\subcaptionbox{\label{fig:NCL ORcheck2} Pushing first red block down during $d_1\to x$ traversal.}
[0.19\linewidth]
{~\begin{overpic}[scale=\scale]{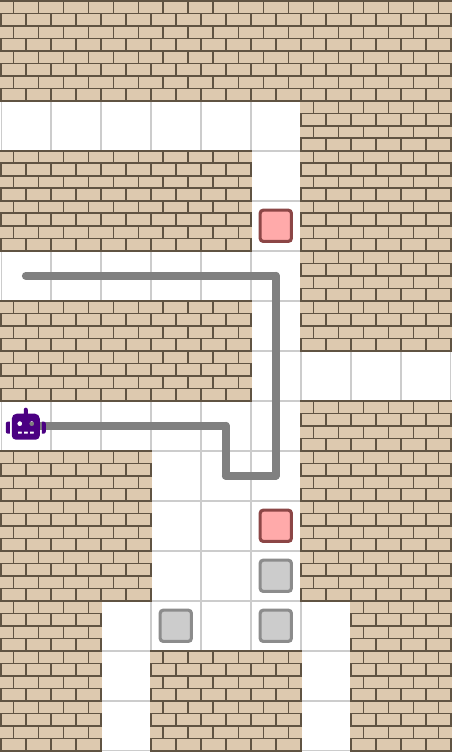}
\put(-1,43.5){\makebox(0,0)[r]{$x$}}
\put(17,-4){\makebox(0,0)[c]{$b$}}
\put(43.25,-4){\makebox(0,0)[c]{$c$}}
\put(61,50){\makebox(0,0)[l]{$a$}}
\put(-1,63){\makebox(0,0)[r]{$d_1$}}
\put(-1,83){\makebox(0,0)[r]{$d_2$}}
\end{overpic}
\vspace{0.75em}}%
\hfill
\subcaptionbox{\label{fig:NCL ORcheck3} Pushing second red block down during $d_2\to x$ traversal.}
[0.19\linewidth]
{~\begin{overpic}[scale=\scale]{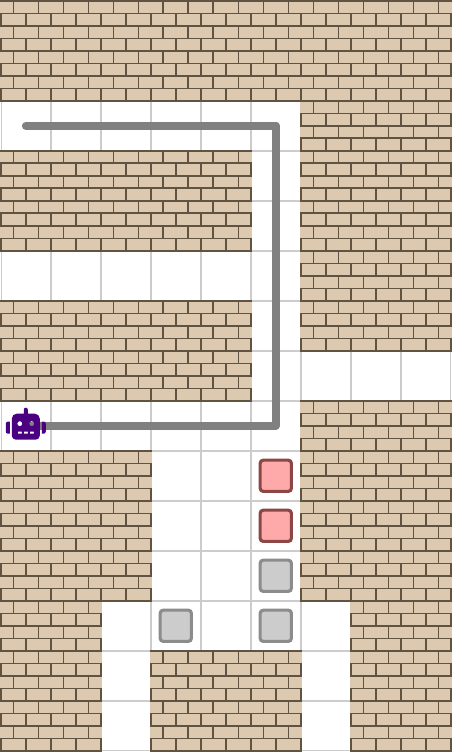}
\put(-1,43.5){\makebox(0,0)[r]{$x$}}
\put(17,-4){\makebox(0,0)[c]{$b$}}
\put(43.25,-4){\makebox(0,0)[c]{$c$}}
\put(61,50){\makebox(0,0)[l]{$a$}}
\put(-1,63){\makebox(0,0)[r]{$d_1$}}
\put(-1,83){\makebox(0,0)[r]{$d_2$}}
\end{overpic}
\vspace{0.75em}}%
\caption{The final checking sequence for the OR gadget.}
\label{fig:NCL OR check}
\end{figure}

The result is that the checking sequence $C$ can only be completed if the gadget was in normal operation before checking.
Conversely, the sequence of pushes described above completes $C$ from any normal-operation state.
Thus $\pss{G}{C}$ is exactly $G$ restricted to normal operation.
\end{proof}

From Lemmas~\ref{lem:OR_sound}, \ref{lem:OR_complete}, and~\ref{lem:OR_check} it follows that $G$ is a checkable GORP OR gizmo.

\subsection{Crossover}

\begin{figure}
\centering
\subcaptionbox{\label{fig:NCL Crossover A} $b$ is open.}{\begin{overpic}[scale=0.4]{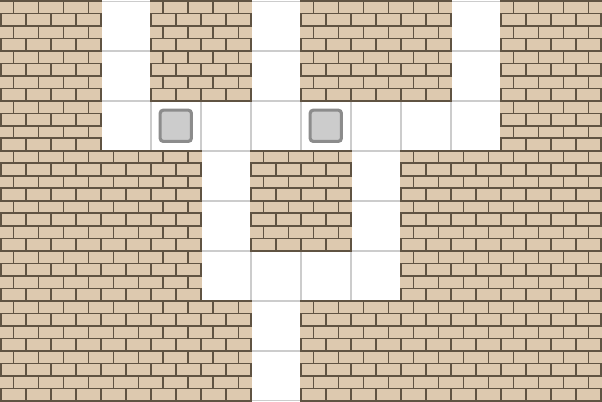}
\put(49,-2){\makebox(0,0)[r]{$x$}}
\put(24,69){\makebox(0,0)[r]{$a$}}
\put(49,69){\makebox(0,0)[r]{$y$}}
\put(82,69){\makebox(0,0)[r]{$b$}}
\end{overpic}}
\hfil
\subcaptionbox{\label{fig:NCL Crossover B} $a$ is open.}{\begin{overpic}[scale=0.4]{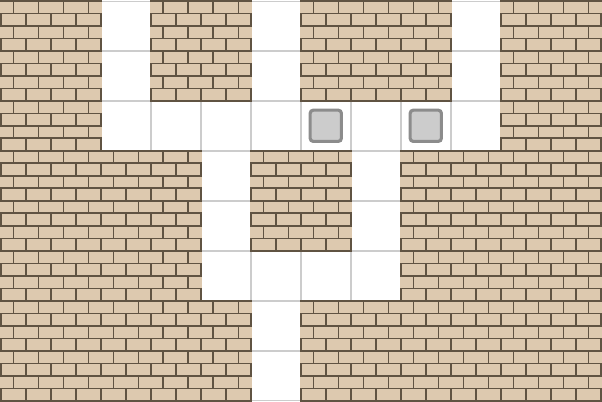}
\put(49,-2){\makebox(0,0)[r]{$x$}}
\put(24,69){\makebox(0,0)[r]{$a$}}
\put(49,69){\makebox(0,0)[r]{$y$}}
\put(82,69){\makebox(0,0)[r]{$b$}}
\end{overpic}}
\hfil\\[0.4cm]
\subcaptionbox{\label{fig:NCL Crossover AB} Neither location is open.}{\begin{overpic}[scale=0.4]{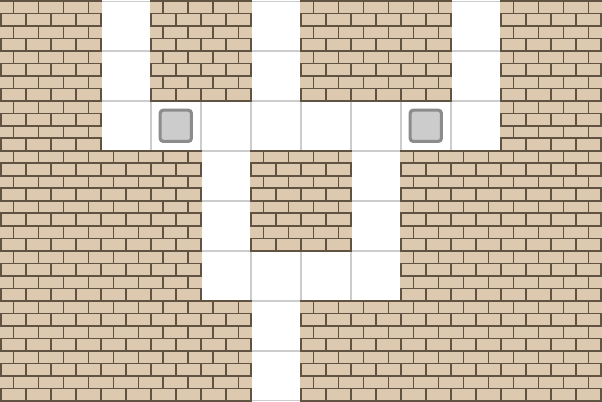}
\put(49,-2){\makebox(0,0)[r]{$x$}}
\put(24,69){\makebox(0,0)[r]{$a$}}
\put(49,69){\makebox(0,0)[r]{$y$}}
\put(82,69){\makebox(0,0)[r]{$b$}}
\end{overpic}}
\hfil
\subcaptionbox{\label{fig:NCL Crossover broken} The broken state.}{\begin{overpic}[scale=0.4]{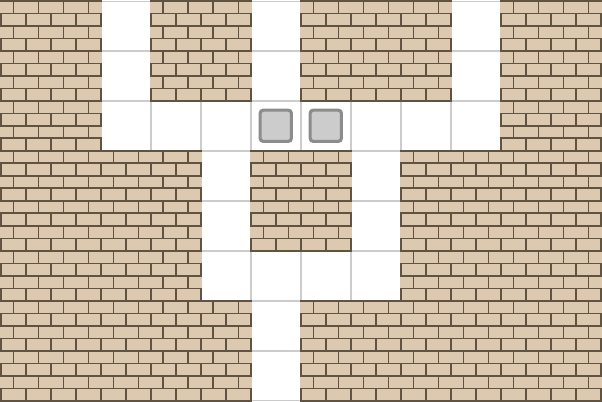}
\put(49,-2){\makebox(0,0)[r]{$x$}}
\put(24,69){\makebox(0,0)[r]{$a$}}
\put(49,69){\makebox(0,0)[r]{$y$}}
\put(82,69){\makebox(0,0)[r]{$b$}}
\end{overpic}}
\caption{Checkable GORP crossover implemented in Push-1, showing only useful states.}
\label{fig:NCL crossover}
\end{figure}

Because Push-1 is inherently planar, we need the planar version of GORP gizmos, which requires a GORP crossover.
Figure~\ref{fig:NCL crossover}
shows our Push-1 crossover gadget in its four useful states.
Any other state is dominated by one of the useful states
in the sense that any traversal sequence that can be performed from that state can also be performed from one of the useful states:
leaving a block next to a corner is better than leaving it in a corner,
leaving a block next to the left or right T junction is better than leaving it in that junction, and
leaving it right of the $y$ T junction is better than leaving it in that junction unless that position is already occupied.
Thus we assume that the gadget remains in one of the useful states,
as there is no reason for the agent to leave the gadget in a different state.

We define the \defn{normal operation} states to be Figures~\ref{fig:NCL Crossover A}, \ref{fig:NCL Crossover B}, and~\ref{fig:NCL Crossover AB};
and the starting state is Figure~\ref{fig:NCL Crossover A}.
In normal operation, the traversals $x \to y$ and $y \to x$ can always be performed without changing the state, as required by Definition~\ref{def:gorp crossover}.

\begin{lemma}\label{lem:crossover_sound}
  In normal operation, the Push-1 crossover gadget is sound with respect to the GORP wire vertex with initial state $\{a\}$.
\end{lemma}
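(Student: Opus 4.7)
The plan is to apply Lemma~\ref{lem:sound_s_choice} to the quotient $H/\{(x,y)\}$. After that reduction, two things remain to show: first, that $R_{[\,]} \cap \{a,b\} \subseteq I = \{a\}$; and second, that for every traversal sequence $X$ realizable in normal operation, $R_X \cap \{a,b\} \in U_{\text{wire}} = \{\{a\},\{b\},\{a,b\}\}$, i.e.\ at least one of $a$ and $b$ is closed for $X$.

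The heart of the argument is a geometric claim, proved by induction on the length of $X$: after any sequence of moves performed in normal operation, at least one of $a, b$ is directly blocked by a movable block, and in the Push-1 geometry of this gadget the only way for the agent to displace such an obstructing block is to push it from within the gadget, which in turn requires first entering through the very location that the block is obstructing. Inspection of Figures~\ref{fig:NCL Crossover A} and~\ref{fig:NCL Crossover B}, together with the Push-1-equivalent relatives reached by shoving a block off the end of its track (after which it is no longer movable), exhausts the normal-operation configurations; and the broken configuration of Figure~\ref{fig:NCL Crossover broken} is precisely the unique arrangement of the two shared-track blocks in which both $a$ and $b$ are simultaneously unobstructed.

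Once this geometric claim is in hand, property~(ii) is immediate: if $a$ is blocked after $X$, then any extension $XY[\ell \to a] \in G$ with $\ell \neq a$ must first clear the obstructing block, and the geometric claim forces $Y$ to contain a traversal of the form $a \to b'$, so $a$ is closed for $X$; symmetrically for $b$. Property~(i) follows directly from the designated initial configuration in Figure~\ref{fig:NCL Crossover A}: location $a$ is blocked, and hence by the same closure argument is closed for the empty traversal, while $b$ is freely reachable from $x$, so $R_{[\,]} \cap \{a,b\} = \{a\} \subseteq I$.

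The main obstacle I expect is the geometric case analysis underlying the inductive claim: one must walk through the discrete positions of the two blocks along the shared track, verify that in every non-broken arrangement at least one of $a, b$ remains obstructed, and confirm that each obstructing block can be cleared only by a push originating at the side it blocks. This is not deep, but it requires some care because the two blocks share a track and their relative positions interact in a handful of non-obvious ways near the crossover of the $x$–$y$ and $a$–$b$ corridors.
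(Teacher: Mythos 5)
Your proposal is correct and follows essentially the same route as the paper: both apply Lemma~\ref{lem:sound_s_choice} (to the gadget with $x$ and $y$ identified) and reduce everything to the geometric observation that the only configuration in which $a$ and $b$ are simultaneously openable is the excluded broken arrangement of Figure~\ref{fig:NCL Crossover broken}. The paper just states this as a short direct contradiction ("both open would require both blocks adjacent in the middle"), whereas you package the same content as an inductive invariant plus an explicit check of $R_{[\,]}\subseteq I$; this is a difference of presentation, not of approach.
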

\begin{proof}
  By Lemma~\ref{lem:sound_s_choice}, all we need to show is that $a$ and $b$ can never both be open in normal operation.
  Suppose for contradiction that both $a$ and $b$ were open.
  Then it must be possible to enter at $x$ and leave at either $a$ or $b$.
  The only way that can be the case is if both movable blocks are in the middle, and thus adjacent to each other as in Figure~\ref{fig:NCL Crossover broken}, which contradicts the assumption of normal operation.
\end{proof}

\begin{lemma}\label{lem:crossover_complete}
  In normal operation, the Push-1 crossover gadget is complete with respect to the GORP wire vertex with initial state $\{a\}$.
\end{lemma}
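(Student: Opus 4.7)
The plan is to mirror the structure of Lemmas~\ref{lem:AND_complete} and~\ref{lem:OR_complete}: induct on $i$, maintaining the invariant that after executing $X_1\cdots X_i$ the gadget rests in one of the two normal-operation configurations shown in Figures~\ref{fig:NCL Crossover A} and~\ref{fig:NCL Crossover B}, with every active location of $S_i$ (that is, every $\ell\in\{a,b\}\setminus S_i$) accessible from $x$ without moving any block. The base case is immediate since the gadget is initialized in the configuration matching $I$: Figure~\ref{fig:NCL Crossover A} if $I=\{a\}$, Figure~\ref{fig:NCL Crossover B} if $I=\{b\}$, or either one if $I=\{a,b\}$.

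For the inductive step, observe that $U_\text{wire}=\{\{a\},\{b\},\{a,b\}\}$ forces every transition to be either an addition ending at $\{a,b\}$ or a removal starting from $\{a,b\}$. The ``add $\ell$'' case is routine: $\ell$ is active in $S_i$, hence accessible by the inductive hypothesis, so $X_{i+1}=[x\to\ell,x\to x]$ is executed by walking up the empty corridor from $x$ to $\ell$ without disturbing any block, and then performing the trivial $x\to x$. The gadget ends in the same figure it started in, and the invariant on $S_{i+1}=\{a,b\}$ is vacuous.

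The substantive case is ``remove $\ell$,'' where $S_i=\{a,b\}$ and the gadget may currently sit in either Figure~\ref{fig:NCL Crossover A} or~\ref{fig:NCL Crossover B}. We must execute $[x\to x,\ell\to x]$ and end in the figure that makes $\ell$ accessible. If the current configuration already has $\ell$ accessible, this is immediate: do nothing during $x\to x$, then walk $\ell\to x$ through the clear corridor. Otherwise, exactly one block obstructs $\ell$; during the $x\to x$ traversal, we walk from $x$ up to that block and push it one cell along the shared track, converting the configuration into the other normal state, then return to $x$ and perform $\ell\to x$ through the corridor that has now opened. The main obstacle is verifying that this push is realizable as a single Push-1 agent motion starting and ending at $x$, and, most critically, that the blocks never come to rest in the unique broken arrangement of Figure~\ref{fig:NCL Crossover broken}. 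Here we use the fact that the two normal configurations differ by a one-cell translation of a single block while the other stays put, so the transient positions of the two blocks during the push are distinct from the broken middle-adjacent arrangement.

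Finally, to finish the proof we check that the checking sequence $C$, which consists of the traversals $x\to\ell$ for each $\ell\notin S_n$ (at most one such $\ell$, since $|\{a,b\}\setminus S_n|\le 1$) followed by $x\to x$, is executable from the final configuration: the invariant guarantees each required exit is already reachable via the empty corridor without moving any block. This closes the induction and establishes completeness.
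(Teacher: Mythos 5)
The inductive scaffolding (invariant that every active location is accessible, trivial handling of the ``add $\ell$'' case, execution of $C$ at the end) matches the paper, but your choreography for the substantive ``remove $\ell$'' case does not work in this gadget, and the claim you lean on to justify it is false. You assert that the two normal configurations (Figures~\ref{fig:NCL Crossover A} and~\ref{fig:NCL Crossover B}) ``differ by a one-cell translation of a single block while the other stays put,'' and that the agent can therefore, during the $x\to x$ excursion, walk up from $x$, push the obstructing block one cell, and land in the other normal state with the $\ell$-corridor now clear. In fact the two normal states differ in the positions of \emph{both} blocks, and the block obstructing $\ell$ cannot be pushed from the $x$ side at all: approaching from $x$ puts the agent on the wrong side of it, and the push that clears the $\ell$-corridor can only be performed by an agent that has entered at $\ell$. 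There is also a structural reason your plan must fail: if the path from $\ell$ to $x$ could be completely cleared by an excursion that enters and exits at $x$, while the other top location's path (clear by assumption, since the gadget sits in a normal state) is untouched, then both $a$ and $b$ would be open in normal operation, contradicting Lemma~\ref{lem:crossover_sound}; the only configuration in which both are reachable from $x$ is the broken one of Figure~\ref{fig:NCL Crossover broken}, which must be avoided since it is dead for all future traversals and for the final $x\to y$ check.

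The paper's proof uses a genuinely two-phase move that your proposal is missing. To remove $a$ (say): during the $x\to x$ part the agent pushes the \emph{right} block to the right end of the track (making room, and without touching the block that blocks $a$); then, entering at $a$ for the $a\to x$ part, it pushes the \emph{left} block rightward into the vacated space and exits at $x$, leaving the gadget exactly in the configuration of Figure~\ref{fig:NCL Crossover B}; removing $b$ is symmetric. So the push of the obstructing block happens during the $\ell\to x$ leg, not the $x\to x$ leg, and it is preceded by repositioning the \emph{other} block so that the push has somewhere to go and the two blocks never end up adjacent in the middle. Without this ordering your induction also loses its invariant, since a block shoved one cell from the wrong side does not produce either normal resting configuration. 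To repair the proof you should replace your single-push step with this explicit two-stage sequence (or an equivalent one) and verify it case by case as in Lemmas~\ref{lem:AND_complete} and~\ref{lem:OR_complete}.
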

\begin{proof}
  Let $S_i$ be a sequence of valid GORP wire vertex states.
  We use the same argument structure that we used for AND and OR vertices.
  Again, we maintain the additional invariant that every location not in $S_i$ is open after step $i$.
  We just need to show how to open $\ell$ by performing $[x\to x,\ell\to x]$ for $\ell=a$ and $\ell=b$.

  For $\ell = a$, we enter at $x$, move the right block rightward, and exit at $x$. Then we enter at $a$, move the left block rightward, and exit at $x$, leaving the gadget in the state in Figure~\ref{fig:NCL Crossover B}.

  For $\ell = b$, we do a nearly symmetric operation: enter at $x$, move the left block leftward, exit at $x$, and then enter at $b$, move the right block leftward, and exit at $x$, leaving the gadget in the state in Figure~\ref{fig:NCL Crossover A}.
\end{proof}

The checking sequence is the single traversal $x\to y$, which is possible in the three normal operation states but not in the broken one.

\subsection{Checking Framework Base Gadgets}

All that remains is to construct the single-use opening (\SO) and single-use closing (\SC) gadgets in \hbox{Push-1}.
Most of the gadgets needed for this task have already been implemented in \cite[Figures~30--32]{GadgetsChecked_FUN2022},
reproduced here as Figures~\ref{fig:push base gadget states}, \ref{fig:push base gadget impls}, and~\ref{fig:push SO}.
In particular, they implement a dicrumbler and a single-use opening (\SO) gadget in Push-1F.
These constructions also work directly in Push-1 because, by inspection of Figure~\ref{fig:push base gadget impls}, they have the property that every fixed wall is part of a $2\times2$ square of fixed walls, so they can be replaced by movable but effectively fixed blocks.
Their work also builds a merged single-use closing gadget (\MSC); however, we need a stronger (unmerged) single-use closing (\SC) gadget for our checking framework.

\begin{figure}[b]
  \centering
  \def\scale{0.7}
  \subcaptionbox{\label{fig: weak closing state}\centering Weak merged closing}[3.25cm]{\includegraphics[scale=\scale]{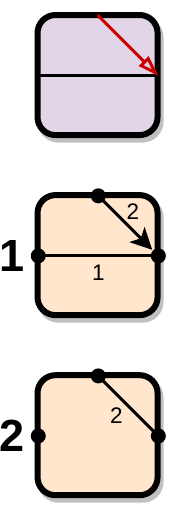}}
  \hspace{1cm}
  \subcaptionbox{\label{fig: no return state}\centering No-return}[3cm]{\includegraphics[scale=\scale]{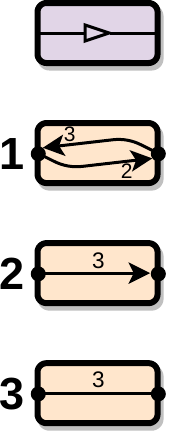}}
  \hspace{1cm}
  \subcaptionbox{\label{fig: weak opening state}\centering Weak opening}{\includegraphics[scale=\scale]{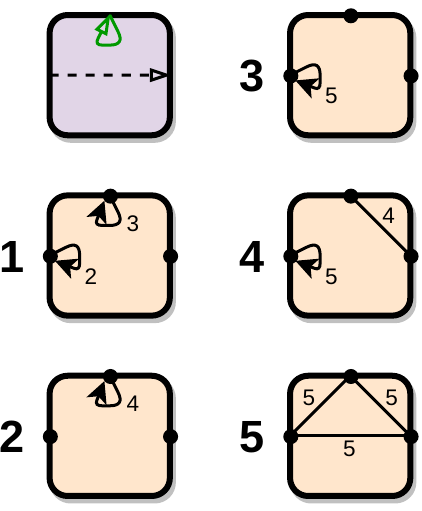}}
  \caption{Icons and state diagrams for Push-1F base gadgets.
    Based on \cite[Figure 30]{GadgetsChecked_FUN2022}.}
  \label{fig:push base gadget states}
\end{figure}

\begin{figure}
\centering
  \def\scale{0.5}
  \subcaptionbox{\label{fig: weak closing push}\centering Weak merged closing}[3.25cm]{\includegraphics[scale=\scale]{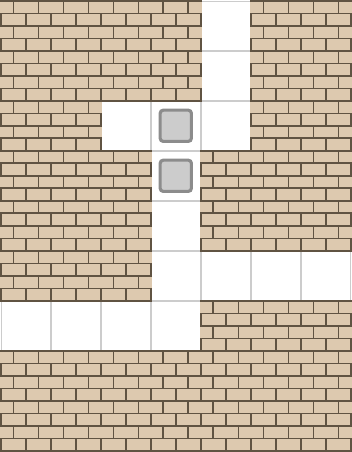}}
  \hspace{1cm}
  \subcaptionbox{\label{fig: no return push}\centering No-return}[6cm]{\includegraphics[scale=\scale]{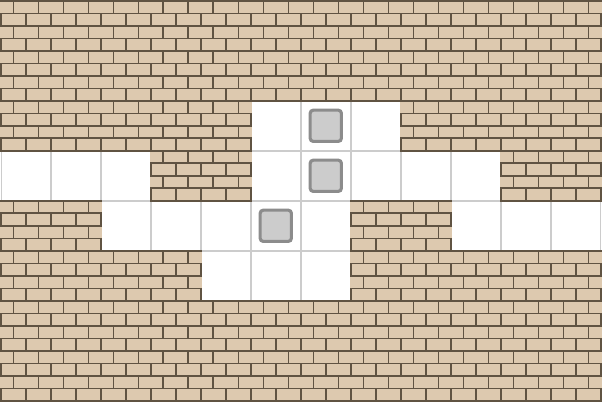}}
  \hspace{1cm}
  \subcaptionbox{\label{fig: weak opening push}\centering Weak opening}[4cm]{\includegraphics[scale=\scale]{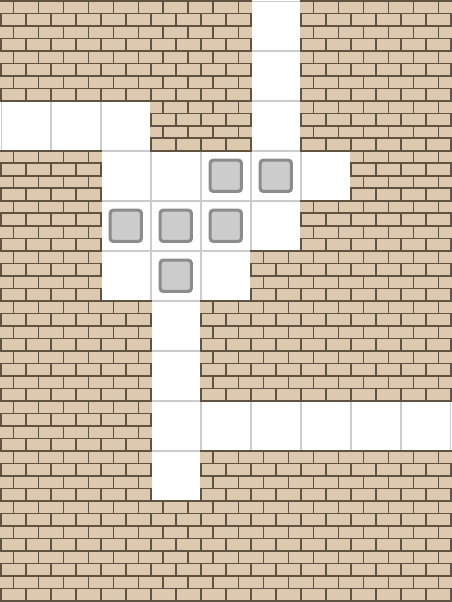}}
  \caption{Push-1(F) implementations of the base gadgets.
    Based on \cite[Figure 31]{GadgetsChecked_FUN2022}.}
  \label{fig:push base gadget impls}
  \end{figure}

\begin{figure}
  \centering
  \def\scale{0.6}
  \subcaptionbox{\centering Dicrumbler}{\includegraphics[scale=\scale]{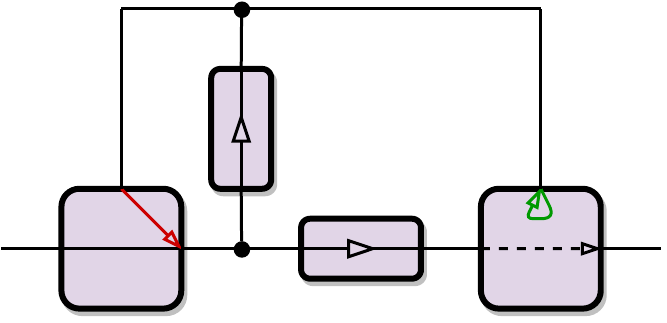}}
  \hfil
  \subcaptionbox{\centering \SO}{\includegraphics[scale=\scale]{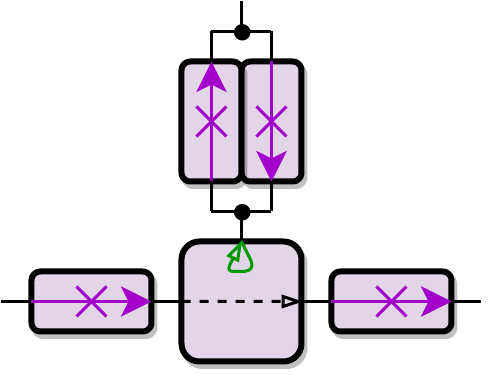}}
  \caption{Constructions of gadgets required for postselection in Push-1F.
    Based on \cite[Figure 32]{GadgetsChecked_FUN2022}.}
  \label{fig:push SO}
\end{figure}

Figure~\ref{fig:SC base simulation} shows our single-use closing gadget.
It consists of three dicrumblers, an \SO{} gadget, and a distant-closing precursor gadget, which is shown in Figure~\ref{fig:SC push}.

\begin{figure}
\centering
\subcaptionbox{
\label{fig:SC push}The distant-closing precursor gadget.}{
\begin{overpic}[scale=0.4]{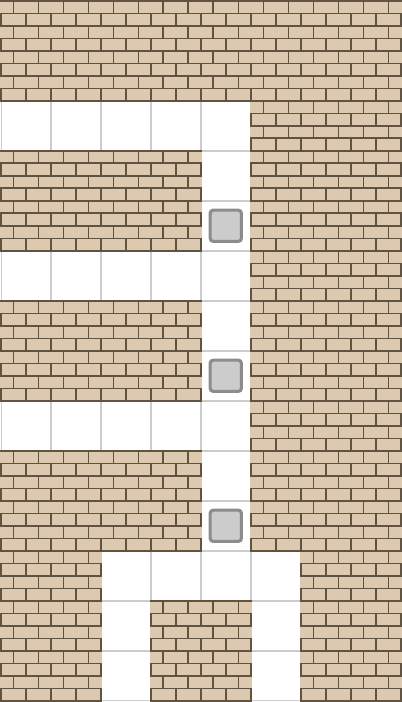}
\put(1,39){\makebox(0,0)[r]{$e$}}
\put(1,61){\makebox(0,0)[r]{$d$}}
\put(1,82){\makebox(0,0)[r]{$c$}}
\put(20,-2){\makebox(0,0)[r]{$x$}}
\put(41.5,-2){\makebox(0,0)[r]{$y$}}
\end{overpic}\vspace{0.75em}}
\hfil
\subcaptionbox{
\label{fig:SC base simulation}
The simulation of an \SC{} gadget built using the dicrumbler and \SO{} from Figure~\ref{fig:push SO} and the distant-closing precursor gadgets from Figure~\ref{fig:SC push}.}{
\begin{overpic}[scale=0.6]{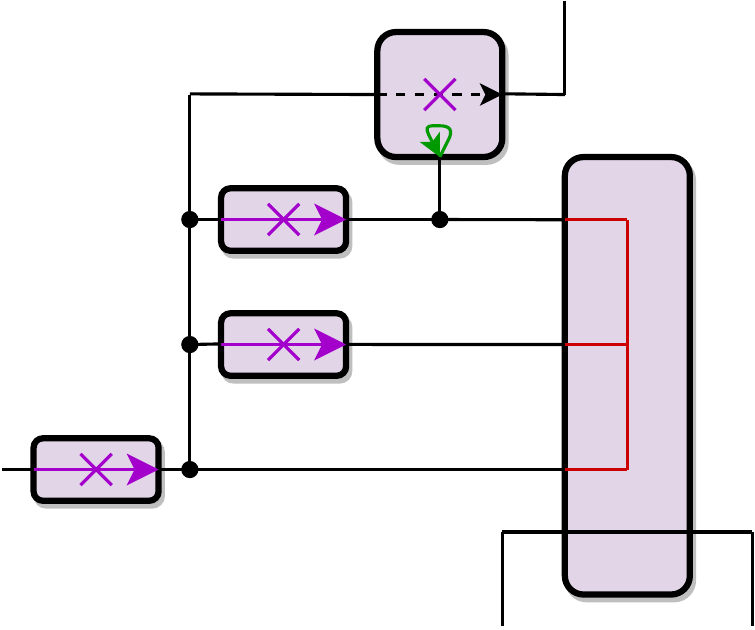}
\put(1,19){\makebox(0,0)[r]{$a$}}
\put(76,81){\makebox(0,0)[l]{$b$}}
\put(65,2){\makebox(0,0)[r]{$x$}}
\put(101,2){\makebox(0,0)[l]{$y$}}
\end{overpic}}
\caption{Our \SC{} gadget. The purple gadget on the right side of (b) is implemented by the gadget in (a).}
\end{figure}

\begin{lemma}\label{lem:SC sim}
The single-use closing gadget in Figure~\ref{fig:SC base simulation} simulates the \SC{} gadget in Figure~\ref{fig:SC_state}.
\end{lemma}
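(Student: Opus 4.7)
The plan is to verify the simulation by checking the two inclusions implicit in Definition~\ref{defn:network}: every traversal sequence of \SC{} can be realized as a walk through the proposed network, and conversely every traversal sequence that the network allows between external locations contracts (per Definition~\ref{defn:gizmo}) to one that \SC{} permits. I would begin by fixing a bijection between the four external locations of the simulation (two on the bottom, two on the top) and the four locations of \SC{}, chosen so that the planar cyclic order agrees, and then analyzing the product state space of the constituent subgadgets --- three dicrumblers, one \SO, and the distant closing precursor of Figure~\ref{fig:SC push} --- which is finite and small since each has only two states.

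For the forward direction I would exhibit two canonical walks. The bottom-tunnel walk should route the agent through the initially open bidirectional side of the distant closing precursor (traversing whichever dicrumblers lie along the way without triggering their single-use actions), so that it remains available for arbitrarily many repetitions in both directions. The left-to-right top-tunnel walk should first open \SO, walk through it to reach the closing side of the distant closing precursor, fire that closing action, and then continue out at the top-right location. After this single top traversal, \SO's opening has been consumed and the precursor's closing side has fired, simultaneously killing the bottom path, so no further external traversals are possible; this matches the dead final state of \SC{} in Figure~\ref{fig:SC_state}.

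The main obstacle, and the bulk of the proof, is the soundness direction: ruling out unintended behavior by the agent, such as a partial top traversal that retreats and leaks back into the bottom tunnel, or a bottom traversal that somehow triggers the closing precursor without the agent having reached the top. Because each subgadget has only two states, the reachable product states form a small constant-size set; for each reachable product state I would enumerate which pairs of external locations are connected by a block-free walk that does not change any subgadget state, and check that the resulting connectivity matches one of the two states of \SC{}. The dicrumblers are placed precisely to prevent the agent from re-entering a partially traversed branch in a way that would violate \SC's behavior --- in particular they force the critical property that reaching the top-right exit requires first opening \SO, which in turn commits the agent to firing the distant closing precursor before exiting. Finally, planarity of the simulation follows by inspection of Figure~\ref{fig:SC base simulation}, so the same argument establishes the simulation in the planar setting.
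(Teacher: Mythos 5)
There is a genuine gap. Your plan treats the distant closing precursor as a two-state subgadget whose ``closing side'' can be ``fired'' as an atomic action, and then proposes a product-state enumeration over two-state pieces. But the precursor has no abstract state diagram anywhere in the paper: it is defined only by its Push-1 implementation (Figure~\ref{fig:SC push}), with several movable blocks, and the entire substance of the lemma is the analysis of those blocks' dynamics. Concretely, the proof must establish: (i) once the agent enters the top tunnel at $a$, it can never leak out through $x$ or $y$, because the precursor's path down to the bottom tunnel always contains a block; (ii) the only way to exit at $b$ is through the \SO{}, whose traversable entrance can be reached only by passing through the precursor from $c$ to $e$, and that passage is possible only after the blocks have been pushed down in a forced order --- enter at $e$ and push the bottom block all the way down, then (via the middle dicrumbler) enter at $d$ and push the second block down, then (via the top dicrumbler, after opening the \SO{}) enter at $c$ and push the last block down; and (iii) completing this cascade necessarily leaves a block inside the $xy$ tunnel, which is exactly why the successful $a \to b$ traversal also closes the bottom path, after which the dicrumbler at $a$ and the single-use \SO{} at $b$ kill the top path. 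Your proposal asserts (ii) and (iii) as black-box properties (``the dicrumblers \dots{} force the critical property,'' ``simultaneously killing the bottom path'') rather than deriving them from the block layout, and (i) is not addressed at all; since the precursor's behavior is given only by the Push-1 figure, these are precisely the statements the proof has to supply, and a ``small product of two-state gadgets'' enumeration does not capture them.

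A secondary inaccuracy: your intended top-tunnel walk (``first open \SO{}, walk through it to reach the closing side of the precursor, fire that closing action, then continue out'') has the order wrong. In the actual construction the agent opens the \SO{} but traverses it only at the very end, as the exit at $b$; reaching that exit requires first completing the $c \to e$ passage through the precursor, and it is that passage --- not a separate closing action --- that blocks the bottom tunnel.
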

\begin{proof}
In the initial state, the bottom tunnel $x \to y$ can be traversed as much as needed without changing the state of anything. All we need to show is that, upon entering the top tunnel at $a$, the only way the agent can leave is through $b$, and afterwards the gadget is closed and no more traversals are possible.

Suppose the agent enters at $a$, closing the dicrumbler at~$a$.
The agent cannot exit via either of the bottom two locations,
because the bottom block in the distant-closing precursor gadget
can never get above the $e$ tunnel.
Thus the only way to exit is through the top location~$b$.
This can only happen if the agent uses the top dicrumbler,
opens the \SO{} gadget at the top,
and then later makes it back around to the entrance of the \SO{} gadget.
To get back out, the agent needs to push the top movable block in the distant-closing precursor
down far enough to exit at $e$.
To do this, the agent must first enter the distant-closing precursor at $e$ and push the bottom movable block all the way down.
Next, the agent must use the middle dicrumbler to enter the distant-closing precursor at $d$ and push the middle movable block all the way down.
Finally, the agent can use the top dicrumbler, open the \SO{} gadget at the top, and then enter the distant-closing precursor at location $c$.
The agent can then push the top movable block all the way down, leave the distant-closing precursor at $e$, and exit the entire gadget through the top \SO{} gadget.
The key point is that, in order to get back to the \SO{} gadget after opening it, the agent must enter the distant-closing precursor through $c$ and exit it through $e$, which requires all of the movable blocks to be pushed all the way down.

At this point, because there is a block in the $x \to y$ tunnel at the bottom of the distant-closing precursor, that tunnel is impassable.
Also, because of the dicrumbler at entrance $a$ and \SO{} at entrance $b$, neither of these locations can ever be reentered.
Thus, this gadget correctly simulates an \SC{} gadget.
\end{proof}

\section{\PSPACE-Completeness of Push-1}

\begin{theorem}
  Push-1 is \PSPACE-complete.
\end{theorem}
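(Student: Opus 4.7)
The plan is to chain together the machinery developed in the preceding sections into a single reduction from NCL. Membership in PSPACE is standard: the total state of a Push-1 instance (positions of all blocks and the agent) takes polynomial space, so reachability can be decided by a nondeterministic polynomial-space algorithm, hence in PSPACE by Savitch's theorem.

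For PSPACE-hardness, I would reduce from Nondeterministic Constraint Logic, which is PSPACE-complete on simple planar AND/OR graphs by Theorem~\ref{thm:ncl pspace}. NCL is a planar graph orientation reconfiguration problem whose vertex types (AND and OR) are upward-closed. By Theorem~\ref{thm:gorp reduction}, provided we exhibit Push-1 gizmos $G_{\mathrm{AND}}$ and $G_{\mathrm{OR}}$ corresponding to the AND and OR vertex types (together with a GORP crossover $H$), planar GORP reduces in polynomial time to the planar targeted set reconfiguration problem on the postselected gizmos $\pss{G_{\mathrm{AND}}}{C_{\mathrm{AND}}}$, $\pss{G_{\mathrm{OR}}}{C_{\mathrm{OR}}}$, $\pss{H}{[x\to a,x\to x]}$, and $\pss{H}{[x\to b,x\to x]}$, where each $C$ is the checking sequence designed for the corresponding gadget.

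The next step invokes the strict checkable gizmos framework. By Theorem~\ref{thm:check}, for each of the four postselected gizmos above, the triple $\{G,\SO,\SC\}$ (with $G$ the corresponding unchecked gadget) planarly nonlocally simulates it. Applying Lemma~\ref{lem:nonlocal composition} (or directly Lemma~\ref{lem:nonlocal set}) to compose these simulations, the whole targeted set reconfiguration problem reduces in polynomial time to planar targeted set reconfiguration on the set $\{G_{\mathrm{AND}},G_{\mathrm{OR}},H,\SO,\SC\}$. Since all of these Push-1 gadgets are prefix-closed (no traversal of them destroys options to a greater extent than permanently closing an entrance), the resulting simulation is actually a reachability instance. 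Section~\ref{sec:Push-1} constructed each of these five gadgets in Push-1 (the AND, OR, and crossover in Figures~\ref{fig:NCL AND}, \ref{fig:NCL OR}, \ref{fig:NCL crossover}, with soundness and completeness in Lemmas~\ref{lem:AND_sound}--\ref{lem:crossover_complete}; and \SO, \SC via the base gadgets in Figures~\ref{fig:push base gadget impls}--\ref{fig:SC base simulation} together with Lemma~\ref{lem:SC sim}), so the reduction ultimately produces a Push-1 instance whose size is polynomial in the NCL instance.

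The bulk of the conceptual work is already done in the framework sections; the residual difficulty in writing this final theorem is bookkeeping, namely confirming that each Push-1 gadget constructed in Section~\ref{sec:Push-1} really implements the gizmo (unchecked or base) demanded by the abstract reduction. Soundness and completeness of the AND, OR, and crossover have been proved modulo ``normal operation,'' and it is exactly the strict checkable framework with checking sequences ending in $d_1\to x, d_2\to x$ that rules out broken states. The one place to be careful is that Theorem~\ref{thm:check} requires $\SC$ rather than $\MSC$: our Push-1 $\SC$ from Lemma~\ref{lem:SC sim} supplies precisely this stronger base gadget, yielding the planar strong closing crossover $\SCX$ needed inside the postselection construction of Figure~\ref{fig:simply checkable postselect}. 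Assembling the chain NCL $\to$ planar GORP $\to$ planar targeted set reconfiguration on postselected gizmos $\to$ planar reachability on the base Push-1 gadgets $\to$ Push-1 reachability completes the hardness argument, and combined with PSPACE membership gives PSPACE-completeness.
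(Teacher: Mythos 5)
Your overall architecture matches the paper's (NCL $\to$ GORP machinery $\to$ strict checkable-gizmo framework $\to$ Push-1 gadgets, with membership by Savitch), but there is a genuine gap in the middle: you collapse two distinct layers of postselection into one, and in doing so conflate two different checking sequences. Theorem~\ref{thm:gorp reduction} requires gizmos that \emph{correspond} (sound and complete) to the AND/OR vertex types, and its postselection sequences $C_i$ are dictated by the target orientation $F$ (the traversals $x\to a$ for each $a\notin F_i$, then $x\to x$); they are not ``the checking sequence designed for the corresponding gadget.'' The raw Push-1 AND, OR, and crossover gadgets do \emph{not} correspond to the GORP vertex types: Lemmas~\ref{lem:AND_sound}--\ref{lem:crossover_complete} establish soundness and completeness only ``in normal operation,'' and the physical gadgets have broken behaviors (blocks pushed off their tracks, the two right blocks becoming adjacent, the crossover's stuck state) that violate soundness. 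These are excluded only by postselecting on the gadget-integrity sequences such as $[b\to x, c\to x, a\to x, d_1\to x, d_2\to x]$, which moreover involve the extra locations $d_1,d_2$ that must afterwards be closed off. So with your single application of Theorem~\ref{thm:check}: if your $C$ is the gadget-integrity sequence, the hypothesis of Theorem~\ref{thm:gorp reduction} is unmet and the target orientation is never verified; if your $C$ is the GORP sequence $C_i$, broken states are never ruled out and the soundness direction (Lemma~\ref{lem:gorp sound}) fails. Either way one of the two necessary checks disappears from the reduction.

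The paper resolves this with an extra step you omit: after reducing to reachability on the abstract (prefix-closed) GORP gizmos together with \SO{} and \SC, it performs a \emph{second} round of nonlocal simulation, again via postselection, in which each abstract GORP gizmo is realized by the checkable Push-1 gadget (postselected on its own checking sequence, with $d_1,d_2$ closed), while \SO{} and \SC{} are built locally from the Push-1 base gadgets (Figure~\ref{fig:push SO}, Lemma~\ref{lem:SC sim}). Alternatively you could merge the two checks, using $\pss{\pss{P}{C'}}{C}=\pss{P}{CC'}$ to postselect each Push-1 gadget on the concatenation of the GORP final-state sequence and the integrity sequence, but then you would need to restate the correspondence hypothesis for the doubly-postselected, location-restricted gizmo and argue it explicitly; as written, your proposal does neither, so the step from the physical Push-1 gadgets to the gizmos demanded by Theorem~\ref{thm:gorp reduction} is missing.
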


\begin{proof}
  \PSPACE-hardness is largely a matter of assembling the pieces we have collected throughout this paper.
  We use the following chain of reductions:

  \begin{center}
    \begin{tabular}{ccp{.8\linewidth}}
      & & planar NCL with AND and OR vertices \\
      & $\underset{1}{\leadsto}$ &
      planar targeted set reconfiguration with postselections of NCL GORP gizmos and GORP crossover \\
          & $\underset{2}{\leadsto}$ &
          planar reachability with (prefix-closed) NCL GORP gizmos, GORP crossover, single-use opening, and single-use closing \\
          & $\underset{3}{\leadsto}$ &
          planar reachability with checkable NCL GORP gizmos, checkable GORP crossover, weak merged closing, no-return, weak opening, and distant-closing precursor \\
          & $\underset{4}{\leadsto}$ &
          Push-1.
    \end{tabular}
  \end{center}

  The starting NCL problem is \PSPACE-complete by Theorem~\ref{thm:ncl pspace}.
  Reduction 1 is Theorem~\ref{thm:gorp reduction},
  and reduction 2 is via Theorem~\ref{thm:check} and Lemma~\ref{lem:nonlocal set}
  (chaining a constant number of reductions for the nonlocal simulation of each gizmo).

  Reduction 3 is a combination of simulations for the base gadgets
  (Figure~\ref{fig:push SO} and Lemma~\ref{lem:SC sim})
  and nonlocal simulations, again using postselection, for the GORP gizmos.

  Reduction 4 consists of the construction of all of those gadgets in Push-1
  (Figures~\ref{fig:NCL AND}, \ref{fig:NCL OR}, \ref{fig:NCL crossover}, \ref{fig:push base gadget impls}, and~\ref{fig:SC push})
  and the proofs that we have correctly implemented checkable GORP gizmos
  (Lemmas~\ref{lem:AND_sound}, \ref{lem:AND_complete}, \ref{lem:AND_check}, \ref{lem:OR_sound}, \ref{lem:OR_complete}, \ref{lem:OR_check}, \ref{lem:crossover_sound}, and~\ref{lem:crossover_complete}).

  For containment, Push-1 puzzles can easily be simulated in polynomial space, so containment in \NPSPACE${}={}$\PSPACE{} follows from Savitch's Theorem~\cite{Savitch-1970}.
\end{proof}

\section{Open Problems}

One main version of Push remains unsolved in terms of \NP{} vs.\ \PSPACE:
Push-$*$, where the player can push arbitrarily many blocks in one move,
and there are no fixed walls (other than the perimeter of the
rectangular board, which cannot be exited).
Hoffmann \cite{hoffmann-2000-pushstar,demaine2003pushing} proved this
problem \NP-hard, and it seems difficult to make re-usable gadgets
as necessary for \PSPACE-hardness.  Is the problem in \NP?

Many more problems are open for Pull, where the player can pull blocks
instead of pushing them, and for PushPull, where the player can push and
pull blocks.  There are two forms of Pull: Pull!\ requires the player
to pull block(s) whenever they walk away from them, while Pull?\ lets
the player choose whether to walk away or pull.
All hardness results for Pull?\ and Pull!\ \cite{ani2020pspace}
and for PushPull?\ \cite{PRB16} assume fixed walls.
Pulling blocks without fixed walls seems very different
because there is no longer a principle like ``any $2 \times 2$ square of
blocks is effectively fixed'' even with strength~$1$.

Another related problem is $1 \times 1$ Rush Hour, where any block can slide
at any time instead of getting pushed by an agent, and each block can either
only move horizontally or only move vertically.
This problem is known to be \PSPACE-complete,
but only with fixed blocks \cite{RushHour_FUN2020}.
Is it hard without fixed blocks?

Finally, in the context of our checkable gizmos framework,
a natural question is whether all of the auxiliary gadgets are necessary,
or whether this set could be reduced.
This could make it easier to apply the framework in the future.

\section*{Acknowledgments}

This research was initiated during an open problem session that grew out of the
MIT class on Algorithmic Lower Bounds: Fun with Hardness Proofs (6.892)
taught by Erik Demaine in Spring 2019.
We thank the other participants of that class for related discussions
and for providing an inspiring atmosphere.
We also thank the anonymous referees for their helpful comments
which improved the paper.

The figures in this paper were created using SVG Tiler
[\url{https://github.com/edemaine/svgtiler}] and
draw.io [\url{https://github.com/jgraph/drawio}].

\bibliography{main}

\newcommand{\etalchar}[1]{$^{#1}$}
\begin{thebibliography}{DDHO03}

\bibitem[AAD{\etalchar{+}}20]{ani2020pspace}
Hayashi Ani, Sualeh Asif, Erik~D. Demaine, Jenny Diomidova, Della Hendrickson,
  Jayson Lynch, Sarah Scheffler, and Adam Suhl.
\newblock {PSPACE}-completeness of pulling blocks to reach a goal.
\newblock {\em Journal of Information Processing}, 28:929--941, 2020.

\bibitem[ABD{\etalchar{+}}20]{Doors_FUN2020}
Hayashi Ani, Jeffrey Bosboom, Erik~D. Demaine, Jenny Diomidova, Della
  Hendrickson, and Jayson Lynch.
\newblock Walking through doors is hard, even without staircases: Proving
  {PSPACE}-hardness via planar assemblies of door gadgets.
\newblock In {\em Proceedings of the 10th International Conference on Fun with
  Algorithms (FUN 2020)}, pages 3:1--3:23, 2020.

\bibitem[ACD{\etalchar{+}}22]{GadgetsChecked_FUN2022}
Hayashi Ani, Lily Chung, Erik~D. Demaine, Jenny Diomidova, Della Hendrickson,
  and Jayson Lynch.
\newblock Pushing blocks via checkable gadgets: {PSPACE}-completeness of
  {Push-1F} and {Block/Box Dude}.
\newblock In {\em Proceedings of the 11th International Conference on Fun with
  Algorithms (FUN 2022)}, pages 2:1--2:30, Island of Favignana, Sicily, Italy,
  May--June 2022.

\bibitem[ADD{\etalchar{+}}23]{GadgetsVictory2023}
Hayashi Ani, Erik~D. Demaine, Jenny Diomidova, Della Hendrickson, and Jayson
  Lynch.
\newblock Traversability, reconfiguration, and reachability in the gadget
  framework.
\newblock {\em Algorithmica}, 85(11):3453--3486, 2023.

\bibitem[BCD{\etalchar{+}}20]{RushHour_FUN2020}
Josh Brunner, Lily Chung, Erik~D. Demaine, Della Hendrickson, Adam Hesterberg,
  Adam Suhl, and Avi Zeff.
\newblock $1 \times 1$ {Rush} {Hour} with fixed blocks is {PSPACE}-complete.
\newblock In {\em Proceedings of the 10th International Conference on Fun with
  Algorithms (FUN 2020)}, pages 7:1--7:14, 2020.

\bibitem[Bro20]{brown2020gorp}
Ashley Brown.
\newblock The origins of {GORP}.
\newblock REI Uncommon Path blog, March 2020.
\newblock \url{https://www.rei.com/blog/hike/the-origins-of-gorp}.

\bibitem[Cul98]{sokoban}
Joseph Culberson.
\newblock Sokoban is {PSPACE}-complete.
\newblock In {\em Proceedings of the International Conference on Fun with
  Algorithms}, pages 65--76, Elba, Italy, June 1998.

\bibitem[DDHO03]{demaine2003pushing}
Erik~D. Demaine, Martin~L. Demaine, Michael Hoffmann, and Joseph O'Rourke.
\newblock Pushing blocks is hard.
\newblock {\em Computational Geometry: Theory and Applications}, 26(1):21--36,
  August 2003.

\bibitem[DDO00]{Push100}
Erik~D. Demaine, Martin~L. Demaine, and Joseph O'Rourke.
\newblock {PushPush} and {Push-1} are {NP}-hard in {2D}.
\newblock In {\em Proceedings of the 12th Annual Canadian Conference on
  Computational Geometry (CCCG 2000)}, pages 211--219, Fredericton, Canada,
  August 2000.

\bibitem[DGLR18]{Toggles_FUN2018}
Erik~D. Demaine, Isaac Grosof, Jayson Lynch, and Mikhail Rudoy.
\newblock Computational complexity of motion planning of a robot through simple
  gadgets.
\newblock In {\em Proceedings of the 9th International Conference on Fun with
  Algorithms (FUN 2018)}, pages 18:1--18:21, La Maddalena, Italy, June 2018.

\bibitem[DH01]{PushXCCCG2001}
Erik~D. Demaine and Michael Hoffmann.
\newblock Pushing blocks is {NP}-complete for noncrossing solution paths.
\newblock In {\em Proceedings of the 13th Canadian Conference on Computational
  Geometry (CCCG 2001)}, pages 65--68, Waterloo, Canada, August 2001.

\bibitem[DHH02]{demainepush2F}
Erik~D. Demaine, Robert~A. Hearn, and Michael Hoffmann.
\newblock {Push-2-F} is {PSPACE}-complete.
\newblock In {\em Proceedings of the 14th Canadian Conference on Computational
  Geometry}, pages 31--35, Lethbridge, Canada, August 2002.

\bibitem[DHH04]{demaine2004pushpush}
Erik~D. Demaine, Michael Hoffmann, and Markus Holzer.
\newblock {PushPush}-$k$ is {PSPACE}-complete.
\newblock In {\em Proceedings of the 3rd International Conference on Fun with
  Algorithms (FUN 2004)}, pages 159--170, Isola d'Elba, Italy, May 2004.

\bibitem[DHL20]{demaine2020toward}
Erik~D. Demaine, Della~H. Hendrickson, and Jayson Lynch.
\newblock Toward a general complexity theory of motion planning: Characterizing
  which gadgets make games hard.
\newblock In {\em Proceedings of the 11th Innovations in Theoretical Computer
  Science Conference (ITCS 2020)}, pages 62:1--62:42, 2020.

\bibitem[DL25]{DeStefano-Liang-2025}
Zachary DeStefano and Bufang Liang.
\newblock Push-1 is {PSPACE}-complete, and the automated verification of motion
  planning gadgets.
\newblock arXiv:2508.17602, August 2025.
\newblock \url{https://arXiv.org/abs/2508.17602}.

\bibitem[HD05]{NCL_TCS}
Robert~A. Hearn and Erik~D. Demaine.
\newblock {PSPACE}-completeness of sliding-block puzzles and other problems
  through the {Nondeterministic} {Constraint} {Logic} model of computation.
\newblock {\em Theoretical Computer Science}, 343(1--2):72--96, October 2005.

\bibitem[HD09]{hearn2009games}
Robert~A. Hearn and Erik~D. Demaine.
\newblock {\em Games, Puzzles, and Computation}.
\newblock CRC Press, 2009.

\bibitem[Hen21]{hendrickson2021gadgets}
Della Hendrickson.
\newblock Gadgets and gizmos: A formal model of simulation in the gadget
  framework for motion planning.
\newblock Master's thesis, Massachusetts Institute of Technology, 2021.

\bibitem[HIN{\etalchar{+}}12]{horiyama2012packing}
Takashi Horiyama, Takehiro Ito, Keita Nakatsuka, Akira Suzuki, and Ryuhei
  Uehara.
\newblock Packing trominoes is {NP}-complete, {{\#P}}-complete and
  {ASP}-complete.
\newblock In {\em Proceedings of the 24th Canadian Conference on Computational
  Geometry (CCCG 2012)}, pages 211--216, Charlottetown, Canada, August 2012.

\bibitem[HIN{\etalchar{+}}17]{horiyama2017tiling}
Takashi Horiyama, Takehiro Ito, Keita Nakatsuka, Akira Suzuki, and Ryuhei
  Uehara.
\newblock Complexity of tiling a polygon with trominoes or bars.
\newblock {\em Discrete \& Computational Geometry}, 58(3):686--704, October
  2017.

\bibitem[Hof00]{hoffmann-2000-pushstar}
Michael Hoffmann.
\newblock Motion planning amidst movable square blocks: {Push}-$*$ is
  {NP}-hard.
\newblock In {\em Proceedings of the 12th Canadian Conference on Computational
  Geometry, Fredericton, New Brunswick, Canada, August 16-19, 2000}, 2000.

\bibitem[Lyn20]{jaysonthesis}
Jayson Lynch.
\newblock {\em A framework for proving the computational intractability of
  motion planning problems}.
\newblock PhD thesis, Massachusetts Institute of Technology, 2020.

\bibitem[OT99]{ORourke99}
Joseph O'Rourke and {The Smith Problem Solving Group}.
\newblock {PushPush} is {NP}-hard in {3D}.
\newblock arXiv:cs/9911013, November 1999.
\newblock \url{https://arXiv.org/abs/cs/9911013}.

\bibitem[PRB16]{PRB16}
Andr{\'e}~G. Pereira, Marcus Ritt, and Luciana~S. Buriol.
\newblock {Pull} and {PushPull} are {PSPACE}-complete.
\newblock {\em Theoretical Computer Science}, 628:50--61, 2016.

\bibitem[Sav70]{Savitch-1970}
Walter~J. Savitch.
\newblock Relationships between nondeterministic and deterministic tape
  complexities.
\newblock {\em Journal of Computer and System Sciences}, 4(2):177--192, 1970.

\bibitem[TVT]{tv-tropes}
{TV Tropes}.
\newblock Block puzzle.
\newblock \url{https://tvtropes.org/pmwiki/pmwiki.php/Main/BlockPuzzle}.
\newblock Accessed February 2024.

\bibitem[Vig13]{viglietta2013partial}
Giovanni Viglietta.
\newblock Partial searchlight scheduling is strongly {PSPACE}-complete.
\newblock In {\em Proceedings of the 25th Canadian Conference on Computational
  Geometry (CCCG 2013)}, Waterloo, Canada, August 2013.

\end{thebibliography}
\bibliographystyle{alpha-key}

\end{document}